\definecolor{myred1}{rgb}{1.0,0.0,0.1}
\definecolor{myred2}{rgb}{1.0,0.0,0.3}
\definecolor{myred3}{rgb}{1.0,0.0,0.5}
\definecolor{myblue1}{rgb}{0.3,0.0,1.0}
\definecolor{myblue2}{rgb}{0.0,0.3,1.0}
\definecolor{myblue3}{rgb}{0.1,0.1,1.0}
\definecolor{mygreen1}{rgb}{0.2,1.0,0.0}
\definecolor{mygreen2}{rgb}{0.0,1.0,0.2}
\definecolor{mygreen3}{rgb}{0.0,1.0,0.5}
\newtheorem{fact}{Fact}
\newcommand{\junk}[1]{}
\begin{document}




\title{Fast approximate $\ell$-center clustering in high dimensional spaces}
\author{
Miros{\l}aw Kowaluk
\inst{1}
\and
Andrzej Lingas
\inst{2}
\and
  Mia Persson
  \inst{3}}
\institute{
  Institute of Informatics, University of Warsaw, Warsaw, Poland.
  \texttt{kowaluk@mimuw.edu.pl}
\and
  Department of Computer Science, Lund University,
Lund, Sweden.
\texttt{Andrzej.Lingas@cs.lth.se}
\and
Department of Computer Science and Media Technology, Malm\"o University, Malm\"o, Sweden.
\texttt{Mia.Persson@mau.se}
}
\pagestyle{plain}
\maketitle
\begin{abstract}
  We study the design of efficient approximation algorithms for the
  $\ell$-center clustering and minimum-diameter $\ell$-clustering
  problems in high dimensional Euclidean and Hamming spaces.  Our main
  tool is randomized dimension reduction. First, we present a general
  method of reducing the dependency of the running time of a
  hypothetical algorithm for the $\ell$-center problem in a high
  dimensional Euclidean space on the dimension size.  Utilizing in
  part this method, we provide $(2+\epsilon)$- approximation
  algorithms for the $\ell$-center clustering and minimum-diameter
  $\ell$-clustering problems in Euclidean and Hamming spaces that are
  substantially faster than the known $2$-approximation ones when both
  $\ell$ and the dimension are super-logarithmic. Next, we apply the
  general method to the recent fast approximation algorithms with
  higher approximation guarantees for the $\ell$-center clustering
  problem in a high dimensional Euclidean space. Finally, we provide a
  speed-up of the known $O(1)$-approximation method for the
  generalization of the $\ell$-center clustering problem to include
  $z$ outliers (i.e., $z$ input points can be ignored while computing
  the maximum distance of an input point to a center) in high
  dimensional Euclidean and Hamming spaces.
  \end{abstract}

\begin{keywords}
 $\ell_2$ distance, Euclidean space, Hamming distance, Hamming space, clustering, approximation algorithm, time complexity
\end{keywords}
\section{Introduction}
Clustering is nowadays a standard tool in the data
  analysis in computational biology/medical sciences, computer vision,
  and machine learning.
One of the most popular variants
  of clustering are the $\ell$-center clustering problem and the related
  minimum-diameter $\ell$-clustering problem in metric spaces
  (among other things in Euclidean and Hamming spaces).
  Given a finite set $P$ of
points in a metric space,
the first problem asks for 
finding a set of $\ell$ points in the metric space, called
{\em centers}, such that the maximum distance of a point
in $P$ to its nearest center is minimized.
The second problem asks for partitioning
the input point set $P$ into $\ell$ clusters such that the maximum
of cluster diameters is minimized.
  They are known to be NP-hard and even
  NP-hard to approximate within $2-\epsilon$ for any
  constant $\epsilon > 0$ \cite{Gon85}.
  Gonzalez provided a simple $2$-approximation method
 for $\ell$-center clustering that yields also a $2$-approximation
 for minimum-diameter $\ell$-clustering \cite{Gon85}.
 In case of a  $d$-dimensional space, his method takes
 $O(nd\ell)$ time, where $n$ is the number of input points and $d$
 stands for the dimension size.
 For Euclidean spaces of bounded
 dimension, and more generally, for metric spaces of bounded doubling
 dimension, there exist faster $2$-approximation algorithms for the
 $\ell$-center problem with hidden exponential dependence on the
 dimension in their running time, see \cite{FG88} and \cite{HM06},
 respectively.
 There are also several newer works on speeding up $\ell$-center
 approximation algorithms in Euclidean spaces 
 by worsening the approximation guarantee
 (e.g., \cite{EHS20,FJLNP25,JKS24}, and for a bicriteria variant
 \cite{EFHSW24}).
 In particular, trade-offs between approximation guarantees
 of the form $O(\alpha)$ and the running times
$poly(d\log n)
  (n+\ell^{1+1/\alpha^2}n^{O(1/\alpha^{2/3})})$ and
  $poly(d\log n)n\ell^{1/\alpha^2}$ have been obtained in
  \cite{FJLNP25}\footnote{ In \cite{FJLNP25},
    the notation $\tilde{O}(\ )$ suppresses $poly(d \log n)$ factors.}
  by refining an earlier result in \cite{EHS20}.
 In some of the aforementioned applications of both
 $\ell$-clustering problems,
 massive datasets in a high dimensional metric space combined
 with a large value of the parameter $\ell$ may occur. In such situations,
 neither the $O(nd\ell)$-time method
 nor the ones of substantially worse
 approximation guarantees or time complexity
 heavily dependent on $d$ are sufficiently useful.
\subsection{Our contributions}
We focus on the design of efficient approximation algorithms for the
  $\ell$-center clustering and minimum-diameter $\ell$-clustering
  problems in high dimensional Euclidean and Hamming spaces.
  First, we
  present a general method of reducing the dependency of the running time
  of a hypothetical algorithm for the $\ell$-center problem in a
  high dimensional Euclidean space on the dimension size.
  The algorithm is required to be {\em conservative}, i.e., to return
  always centers belonging to the input point set.
  The method relies on randomized dimension reduction and
  almost preserves the approximation guarantee of the algorithm.
  Utilizing in part this method, we provide $(2+\epsilon)$-
  approximation algorithms for the $\ell$-center clustering and
  minimum-diameter $\ell$-clustering problems in Euclidean and Hamming
  spaces that are substantially faster than the known
  $2$-approximation ones when both $\ell$ and the dimension are
  super-logarithmic. Next, we apply the general method to the
  recent fast approximation algorithms with higher approximation
  guarantees for the $\ell$-center clustering problem in a
  high dimensional Euclidean space.
   Finally, we provide a speed-up
  of the known $O(1)$-approximation method for the generalization
  of the $\ell$-center clustering problem to include $z$ outliers
  (i.e., $z$ input points can be ignored while computing
  the maximum distance of an
  input point to a center) in high dimensional Euclidean
  and Hamming spaces \cite{Char01}.
  The speed-up is also based
  on randomized dimension reduction and the resulting approximation guarantee
  is only slightly larger than the original one. See also Table~1 for a summary of our contributions.
  \junk{We show that our lemma on randomized dimension reduction
in Hamming spaces yields also $(2+\epsilon)$- approximation algorithms
for the $\ell$-center clustering
and minimum-diameter $\ell$-clustering problems in a Hamming space
$\{0,1\}^d$ that are substantially faster than the known $2$-approximation
ones when both $\ell$ and $d$ are super-logarithmic.}

\subsection{Techniques}
Our fast randomized algorithms for approximate $\ell$-center clustering
and minimum-diameter $\ell$-clustering problems in high dimensional
Euclidean and Hamming spaces
are based on a  variant
of randomized
dimension reduction in Euclidean spaces
given by Achlioptas in \cite{achi03}
and the observation that the Hamming distance
between two 0-1 vectors is equal to their squared
$\ell_2$ distance.
The main idea of a randomized dimension reduction
  is to provide a uniform random map from a $d$-dimensional metric space
  to its $k$-dimensional subspace that preserves distances up
  to $1\pm\epsilon$ factor, where $k=O(\log n),$ with high
  probability. Johnson and Lindenstrauss were first to provide
  such maps from $\mathbb{R}^d$ to $\mathbb{R}^k$ for the $\ell_2$ norm \cite{JL84}.
  The advantage of Achlioptas' variant of JL dimension reduction
  is that such a random map can be generated by binary coins in
  this variant \cite{achi03}.
  \junk{
  Brinkman and Charikar showed that analogous maps cannot exist
  for the $\ell_1$-norm
  \cite{BC05}.
  Therefore, Kushilevitz et al.
 \cite{KOR00} introduced
 a weaker concept of a randomized reduction
 in Hamming spaces that works only
  for a fixed scale (i.e., for pairwise distances below a fixed threshold
  in $[1,d]$, see
  Lemma 2.3 in \cite{AIR18} and Lemma \ref{lem: R} in this paper).}

\subsection {Paper organization}
 The next section contains basic
 definitions and facts on the randomized dimension reduction.
 Section 3 presents our general method for reducing the dependency on
 the dimension size. Section 4 provides our fast
 $(2+\epsilon)$-approximation algorithms for the $\ell$-center clustering
 and minimum-diameter $\ell$-clustering problems in high dimensional
 Euclidean and Hamming spaces.
 Section~5 presents applications of our method from Section 3 to recent
 fast approximation algorithms for $\ell$-center
 clustering in high dimensional Euclidean spaces.
 Section 6 provides a speed-up of the known greedy $O(1)$-approximation method
 for  the $\ell$-center clustering problem with outliers in
 a high dimensional Euclidean space.
We conclude with final remarks.

\begin{table*}[t]
\begin{center}
\begin{tabular}{||c|c|c|c|c||} \hline \hline
problem & metric  & approx.  & time complexity & reference
\\ \hline \hline
$\ell$-center/min-diam. &  arbitrary &  $2$ & $O(nd\ell)$ & \cite{Gon85}
 \\ \hline
 as above & Euclid.,Ham. & $2+\epsilon$ & $O(n\log n(d + \ell)/\epsilon^2)$ & this paper
\\ \hline \hline
\junk{  $\ell$-center & Euclidean & $O(\alpha)$ & $poly(d\log n)n^{1+1/\alpha^2}$ &
  Eppstein {\em et al.} \cite{EHS20}
  \\ \hline
  as above  & Euclidean & $O(\alpha)$ &
 $\tilde{O}(nd/\epsilon^2+n^{1+1/\alpha^2})$ & this paper
  \\ \hline \hline}
 $\ell$-center & Euclidean & $O(\alpha)$ &  $poly(d\log n)
  (n+\ell^{1+1/\alpha^2}n^{O(1/\alpha^{2/3})})$ & \cite{FJLNP25}$+$\cite{EHS20}
  \\ \hline
  as above  & Euclidean & $O(\alpha)$ &
        $\tilde{O}(nd/\epsilon^2+\ell^{1/\alpha^2}n^{O(1/\alpha^{2/3})})$ & this paper                                             
\\ \hline \hline
  $\ell$-center & Euclidean & $O(\alpha)$ & $poly(d\log n)n\ell^{1/\alpha^2}$ &  \cite{FJLNP25}$+$\cite{EHS20}
  \\ \hline
  as above  & Euclidean & $O(\alpha)$ &
  $\tilde{O}(nd/\epsilon^2+n\ell^{1/\alpha^2})$ & this paper
 \\ \hline \hline             
  $\ell$-center $+$ outliers & arbitrary & $3$ & $poly(n,d,\ell)$ &
  \cite{Char01}
\\ \hline            
  as above & Euclid.,Ham. & $3+\epsilon $ & $\tilde{O}(n^2(\epsilon^{-2}+ \ell))$ & this paper 
\\ \hline \hline
\end{tabular}
\label{table: one}
\vskip 0.5cm
\caption{A summary of speed-ups of approximation algorithms
  for $\ell$-center problems obtained by randomized
  dimension reduction presented in this paper. Importantly, the approximation
  guarantees for our algorithms hold w.h.p.
  The approximation guarantee of $3$ for the $\ell$-center
  problem with outliers in an arbitrary metric space in \cite{Char01}
  as well as our guarantee of $3+\epsilon$ in
  an Euclidean or Hamilton space are derived with respect
  to an optimal conservative solution, where the centers
  belong to input points. The authors of \cite{Char01} claim that they
  can remove this assumption in case of an Euclidean space.
}
\end{center}
\end{table*}

\section{Preliminaries}

For a positive integer $r$, $[r]$ stands for the set of positive
integers not exceeding $r.$ The cardinality of a finite set $S$ is
denoted by $|S|.$

\junk{For two vectors $(a_1,\dots,a_d)$ and $(b_1,\dots,b_d)$ in $\mathbb{R}^d,$
their {\em inner product} equals $\sum_{\ell=1}^d a_{\ell}b_{\ell}.$
}
The transpose of a matrix $D$ is denoted by $D^{\top}.$
\junk{If the entries of $D$ are in $\{0,1\}$ then
$D$ is a 0-1 matrix while if they are in a finite alphabet $\Sigma,$
$D$ is a $\Sigma$ matrix.

The symbol $\omega$ denotes the smallest real number such that two $n\times n$
matrices can be multiplied using $O(n^{\omega +\epsilon})$
operations over the field of reals, for all $\epsilon > 0.$
}

The {\em Hamming distance} between two points $a,\ b$ (vectors) in
$\{0,1\}^d$
is the number of the
coordinates in which the two points differ.  Alternatively,
it can be defined as the distance between $a$ and $b$ in
the $\ell_1$ metric over $\{0,1\}^d.$ It is denoted by $\mathrm{ham}(a,b).$
The distance between two real vectors
$a,\ b$ in the $\ell_2$ metric
is denoted by $||a-b||_2.$
\junk{
For a positive real $\epsilon,$ an estimation of the Hamming distance
between two points $a,\ b\in \{0,1\}^d$ whose value
is in
$[\mathrm{ham}(a,b)/(1+\delta), (1+\delta)\mathrm{ham}(a,b)]$
differs from
$\mathrm{ham}(a,b)$ at most by $\epsilon \cdot \mathrm{ham}(a,b)$
is called an
{\em $\epsilon$-approximation} of $\mathrm{ham}(a,b).$
For a finite set $S$ of points in $\{0,1\}^d$,
an {\em $\epsilon$-approximate nearest neighbor}
of a point $a\in \{0,1\}^d$ in $S$ is a point $b\in S\setminus \{a\}$
such that $\mathrm{ham}(a,b)\le (1+\epsilon)\mathrm{ham}(a,c)$
for any point $c\in S\setminus \{ a\}.$}

An event is said to hold {\em with high probability} (w.h.p. for short) in
terms of a parameter $N$ related to the input size if it holds with
probability at least $1-\frac 1 {N^{\alpha}}$, where $\alpha $ is any
constant not less than $1.$

The following fact and  corollaries enable an efficient randomized dimension
reduction in high dimensional Euclidean and Hamming spaces.
\vfill
\newpage
\begin{fact}\label{fact: achi03}(Achlioptas \cite{achi03})
Let $P$ be an arbitrary set of $n$ points in $\mathbb{R}^d$, represented as an
$n\times d$ matrix $A.$ Given $\epsilon,\ \beta >0,$
let $k_0=\frac {4+2\beta} {\epsilon^2/2 -\epsilon^3/3} \log n.$
For an integer $k\ge k_0,$ let $R$ be a $d\times k$ random matrix
$(R_{ij})$, where $R_{ij}=1$
with probability $\frac 12$ and $R_{ij}=-1$ otherwise.
Let $E=\frac 1{\sqrt k} AR$ and let $f:\mathbb{R}^d \rightarrow \mathbb{R}^k$
map the the $i$-th row of $A$ on the $i$-th row of $E$.
With probability at least $1-n^{-\beta},$ for all $u, v\in P,$

$$(1-\epsilon)(||u-v||_2)^2 \le (||f(u)-f(v)||_2)^2\le (1+\epsilon)
(||u-v||_2)^2.$$
\end{fact}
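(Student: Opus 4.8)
The plan is to follow the standard moment-generating-function (Chernoff) proof of the Johnson--Lindenstrauss lemma, specialized to Achlioptas' $\pm 1$ matrix. First I would fix two points $u,v\in P$ and set $x=u-v\in\mathbb{R}^d$; since both sides of the claimed inequality are homogeneous of degree $2$ in $x$, I may assume $\|x\|_2=1$. Writing $f(u)-f(v)=\frac{1}{\sqrt k}xR$, the $j$-th coordinate of this vector is $\frac{1}{\sqrt k}Y_j$ where $Y_j:=\sum_{i=1}^d x_iR_{ij}$, and $Y_1,\dots,Y_k$ are i.i.d.\ (distinct columns of $R$) with $\mathbb{E}[Y_j]=0$ and $\mathbb{E}[Y_j^2]=\sum_i x_i^2=1$. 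Hence $W:=\|f(u)-f(v)\|_2^2=\frac{1}{k}\sum_{j=1}^k Y_j^2$ satisfies $\mathbb{E}[W]=1=\|x\|_2^2$, so the map is correct in expectation, and it remains to control the deviation of $W$ from $1$.

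For the concentration I would bound the two tails separately. For $t>0$, Markov's inequality applied to $e^{t\sum_j Y_j^2}$ gives
$$\Pr\big[W\ge 1+\epsilon\big]=\Pr\Big[\sum_j Y_j^2\ge k(1+\epsilon)\Big]\le e^{-tk(1+\epsilon)}\big(\mathbb{E}[e^{tY_1^2}]\big)^k,$$
and symmetrically $\Pr[W\le 1-\epsilon]\le e^{sk(1-\epsilon)}\big(\mathbb{E}[e^{-sY_1^2}]\big)^k$ for $s>0$. The crux is therefore the exponential moment of $Y_1^2$, where $Y_1$ is a Rademacher combination with $\sum_i x_i^2=1$. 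This is the technical heart of Achlioptas' argument: although $Y_1$ is not Gaussian, one shows $\mathbb{E}[e^{tY_1^2}]\le(1-2t)^{-1/2}$ for $0\le t<\tfrac12$ and $\mathbb{E}[e^{-sY_1^2}]\le(1+2s)^{-1/2}$ for $s\ge0$ --- exactly the values attained by a standard Gaussian $Z$, for which $W$ would be $\chi^2_k/k$ and the bounds would be classical. I expect this comparison with the Gaussian case to be the main obstacle: for $t\ge0$ it follows by expanding $e^{tY_1^2}$ in a power series and using that the even moments of a Rademacher sum are dominated by those of the corresponding Gaussian (Khintchine's inequality with the sharp Haagerup constant), while the case of negative exponent needs a slightly more careful treatment since the series no longer has nonnegative terms.

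Granting these exponential-moment bounds, I would substitute and optimize over $t$ (resp.\ $s$). The optimum is $t=\tfrac{\epsilon}{2(1+\epsilon)}$ (resp.\ $s=\tfrac{\epsilon}{2(1-\epsilon)}$), which yields $\Pr[W\ge 1+\epsilon]\le\big((1+\epsilon)e^{-\epsilon}\big)^{k/2}$ and $\Pr[W\le 1-\epsilon]\le\big((1-\epsilon)e^{\epsilon}\big)^{k/2}$; using $\log(1+\epsilon)-\epsilon\le-(\tfrac{\epsilon^2}{2}-\tfrac{\epsilon^3}{3})$ and $\epsilon+\log(1-\epsilon)\le-(\tfrac{\epsilon^2}{2}-\tfrac{\epsilon^3}{3})$, both tail probabilities are at most $\exp\!\big(-\tfrac{k}{2}(\tfrac{\epsilon^2}{2}-\tfrac{\epsilon^3}{3})\big)$. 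Finally I would take a union bound over the at most $\binom{n}{2}\le n^2/2$ pairs $u,v\in P$: since $k\ge k_0=\tfrac{4+2\beta}{\epsilon^2/2-\epsilon^3/3}\log n$, the probability that a given pair violates either inequality is at most $2\exp\!\big(-\tfrac{k_0}{2}(\tfrac{\epsilon^2}{2}-\tfrac{\epsilon^3}{3})\big)=2n^{-(2+\beta)}$, so the overall failure probability is at most $\tfrac{n^2}{2}\cdot 2n^{-(2+\beta)}=n^{-\beta}$, which is precisely the claimed bound.
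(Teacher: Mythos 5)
The paper does not prove this statement: it imports it verbatim as a cited Fact from Achlioptas \cite{achi03}, so there is no in-paper proof to compare against. Your reconstruction is correct and is essentially Achlioptas' own argument (normalize $x=u-v$, write $\|f(u)-f(v)\|_2^2=\frac1k\sum_j Y_j^2$ with i.i.d.\ unit-variance Rademacher sums $Y_j$, dominate the exponential moments of $Y_j^2$ by the Gaussian/$\chi^2$ values via even-moment comparison, optimize the Chernoff parameter, and union-bound over $\binom{n}{2}$ pairs); the arithmetic with $k_0=\frac{4+2\beta}{\epsilon^2/2-\epsilon^3/3}\log n$ indeed yields failure probability at most $n^{-\beta}$, and the one step you flag as delicate (the lower-tail bound $\mathbb{E}[e^{-sY_1^2}]\le(1+2s)^{-1/2}$, where termwise moment domination does not apply directly) is exactly the point Achlioptas handles with a low-order Taylor estimate using $\mathbb{E}[Y_1^4]\le 3$, so your plan closes.
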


Since $1-\epsilon\le  \sqrt {1-\epsilon}$
and $\sqrt {1+\epsilon}\le 1+\epsilon$ for $\epsilon \in (0,1),$
we immediately obtain the following
corollary  from Fact \ref{fact: achi03}.

\begin{corollary}\label{cor: newachi}
  Assume the notation from Fact \ref{fact: achi03}.
  Suppose that $\epsilon \in (0,1)$ and $P\subset \mathbb{R}^d.$ Then, w.h.p.
for all $u, v \in P$,
$$(1-\epsilon) ||u-v||_2 \le  ||f(u)-f(v)||_2\le (1+\epsilon)
||u-v||_2.$$
\end{corollary}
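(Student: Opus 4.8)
The plan is simply to take square roots throughout the two-sided estimate in Fact~\ref{fact: achi03} and then absorb the resulting square-root factors into the claimed linear ones. Fix the randomness so that we are on the event guaranteed by Fact~\ref{fact: achi03}, which occurs with probability at least $1-n^{-\beta}$; all of the inequalities below are then deterministic.

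For the upper bound, start from $(||f(u)-f(v)||_2)^2 \le (1+\epsilon)(||u-v||_2)^2$. Both sides are nonnegative, so applying the monotone square root gives $||f(u)-f(v)||_2 \le \sqrt{1+\epsilon}\,||u-v||_2$. Since $1+\epsilon \le (1+\epsilon)^2$ for $\epsilon \ge 0$, we have $\sqrt{1+\epsilon}\le 1+\epsilon$, hence $||f(u)-f(v)||_2 \le (1+\epsilon)||u-v||_2$. For the lower bound, start from $(1-\epsilon)(||u-v||_2)^2 \le (||f(u)-f(v)||_2)^2$; taking square roots yields $\sqrt{1-\epsilon}\,||u-v||_2 \le ||f(u)-f(v)||_2$. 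Because $\epsilon \in (0,1)$ we have $0 \le 1-\epsilon \le 1$, so $(1-\epsilon)^2 \le 1-\epsilon$ and therefore $1-\epsilon \le \sqrt{1-\epsilon}$; combining gives $(1-\epsilon)||u-v||_2 \le ||f(u)-f(v)||_2$. Since $u,v \in P$ were arbitrary, both inequalities of the corollary hold simultaneously on this event.

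It remains to observe that the event occurs with high probability in the sense defined earlier. Fact~\ref{fact: achi03} applies for any $k \ge k_0=\frac{4+2\beta}{\epsilon^2/2-\epsilon^3/3}\log n$, and for any fixed constant $\beta \ge 1$ this threshold is still $O(\log n)$; thus one runs the reduction with $\beta$ equal to any desired constant $\ge 1$, making the failure probability at most $n^{-\beta}$. Hence the two-sided linear distortion bound holds for all pairs in $P$ with high probability.

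There is essentially no obstacle here: the only point to check is that the elementary scalar inequalities $\sqrt{1+\epsilon}\le 1+\epsilon$ and $1-\epsilon \le \sqrt{1-\epsilon}$ hold, and these use exactly the hypothesis $\epsilon \in (0,1)$ (the first needs only $\epsilon \ge 0$, the second needs $\epsilon \le 1$). So this is a genuinely immediate corollary of Fact~\ref{fact: achi03}.
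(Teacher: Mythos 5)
Your proof is correct and matches the paper's argument exactly: the paper also obtains the corollary by taking square roots in Fact~\ref{fact: achi03} and using the elementary inequalities $1-\epsilon\le\sqrt{1-\epsilon}$ and $\sqrt{1+\epsilon}\le 1+\epsilon$ for $\epsilon\in(0,1)$. Your write-up just spells out the details the paper leaves implicit.
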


Observe that if $u, v \in \{0,1\}^d$ then $\mathrm{ham}(u,v)=
(||u-v||_2)^2.$ Hence, we obtain also the following
corollary  from Fact \ref{fact: achi03}.

\begin{corollary}\label{cor: achi03}
  Assume the notation from Fact \ref{fact: achi03}.
  Suppose that $\epsilon > 0$
  and $P\subset \{0,1\}^d\subset \mathbb{R}^d.$ Then, w.h.p.
for all $u, v \in P$,
$$(1-\epsilon) \mathrm{ham}(u,v)\le (||f(u)-f(v)||_2)^2\le (1+\epsilon)
\mathrm{ham}(u,v).$$
\end{corollary}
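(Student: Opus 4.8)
The plan is to reduce the statement directly to Fact \ref{fact: achi03} by exploiting the coordinatewise structure of $0$-$1$ vectors, exactly as was done for Corollary \ref{cor: newachi} but without passing to square roots. First I would record the elementary identity already flagged in the paragraph preceding the statement: for any $u=(u_1,\dots,u_d),\ v=(v_1,\dots,v_d)\in\{0,1\}^d$ each coordinate difference $u_i-v_i$ lies in $\{-1,0,1\}$, so $(u_i-v_i)^2=|u_i-v_i|$, and hence $(||u-v||_2)^2=\sum_{i=1}^d(u_i-v_i)^2=\sum_{i=1}^d|u_i-v_i|=\mathrm{ham}(u,v)$.

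Next I would regard $P\subseteq\{0,1\}^d$ as an arbitrary set of $n$ points in $\mathbb{R}^d$ and apply Fact \ref{fact: achi03} to it with the given $\epsilon$ and any fixed constant $\beta\ge 1$, taking $k\ge k_0$ as prescribed there. The fact supplies a map $f:\mathbb{R}^d\to\mathbb{R}^k$ such that, with probability at least $1-n^{-\beta}$, the bound $(1-\epsilon)(||u-v||_2)^2\le(||f(u)-f(v)||_2)^2\le(1+\epsilon)(||u-v||_2)^2$ holds simultaneously for all $u,v\in P$. Substituting the identity $(||u-v||_2)^2=\mathrm{ham}(u,v)$ into both outer terms yields precisely the claimed chain $(1-\epsilon)\mathrm{ham}(u,v)\le(||f(u)-f(v)||_2)^2\le(1+\epsilon)\mathrm{ham}(u,v)$, and since $\beta\ge 1$ is an arbitrary constant the failure probability $n^{-\beta}$ is of the form required by the definition of ``w.h.p.'' in terms of $n$.

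Honestly, there is no real obstacle here; this is an immediate corollary. The only two points worth a word of care are, first, that the map $f$ appearing in the conclusion is exactly the one produced by Fact \ref{fact: achi03} and is \emph{not} asserted to carry $\{0,1\}^d$ into $\{0,1\}^k$ — only its squared stretch factors are controlled, which is all the statement asks for; and second, that in contrast to Corollary \ref{cor: newachi} the hypothesis $\epsilon\in(0,1)$ is unnecessary, since we never take square roots and simply transport the squared-length inequality of Fact \ref{fact: achi03} through the identity above. I would therefore present the proof in essentially two lines.
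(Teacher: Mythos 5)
Your proof is correct and is essentially the paper's own argument: the paper likewise observes that $\mathrm{ham}(u,v)=(||u-v||_2)^2$ for $u,v\in\{0,1\}^d$ and substitutes this identity directly into the squared-distance inequality of Fact~\ref{fact: achi03}. Your additional remarks (that $f$ need not map into $\{0,1\}^k$ and that no restriction $\epsilon\in(0,1)$ is needed since no square roots are taken) are accurate and consistent with the paper's statement.
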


\section{Approximate $\ell$-center clustering in high dimensional spaces}
The $\ell$-center clustering problem  
 in the Euclidean $\mathbb{R}^d$ is as follows: given a set $P$ of $n$ points
 in  $\mathbb{R}^d$, find a set $T$ of $\ell$ points in  $\mathbb{R}^d$
 that minimize \\$\max_{v\in P} \min_{u\in T} ||v-u||_2.$

The minimum-diameter $\ell$-clustering problem in 
 the Euclidean $\mathbb{R}^d$ is as follows: given a set $P$ of $n$ points
 in  $\mathbb{R}^d$, find a partition of $P$ into $\ell$ subsets
 $P_1,P_2,\dots, P_{\ell}$ that minimize
 $\max_{i\in [\ell]} \max_{v,u \in P_i} ||v-u|||_2.$

 The $\ell$-center clustering problem could be
 also termed as the minimum-radius $\ell$-clustering problem.
 Both problems are known to be NP-hard
 to approximate within $2-\epsilon$ 
 in metric spaces \cite{Gon85,HS85}.
 \junk{Gonzalez provided a simple $2$-approximation method
 for $\ell$-center clustering that yields also a $2$-approximation
 for minimum-diameter $\ell$-clustering \cite{Gon85}.}

Gonzalez' simple $2$-approximation method for
$\ell$-center clustering yields also a $2$-approximation
 for minimum-diameter $\ell$-clustering \cite{Gon85}.
 It picks an arbitrary input point as the first
 center and repeatedly extends the current center set 
 by an input point that maximizes the distance
 to the current center set, until $\ell$ centers are found.
 In case of $d$-dimensional Euclidean or Hamming
 space, his method takes
 $O(nd\ell)$ time, where $n$ is the number of input points.
\junk{
 On a word RAM with computer words of length $w$, the factor $w$ can
 be shaved off. For low dimensional Euclidean spaces of bounded
 dimension, and more generally, for metric spaces of bounded doubling
 dimension, there exist faster $2$-approximation algorithms for the
 $\ell$-center problem with hidden exponential dependence on the
 dimension in their running time, see \cite{FG88} and \cite{HM06},
 respectively.}
   
 By forming for each of the $\ell$ centers, the cluster
consisting of all input points for which this center
is the closest one (with ties solved arbitrarily),
one obtains an $\ell$-clustering with the maximum
cluster diameter within two of the minimum \cite{Gon85}.
\junk{
In this section, we provide a faster randomized $(2+\epsilon)$-approximation
method for both problems in in the Euclidean $\mathbb{R}^d,$ 
when $d=\omega (\log n)$ and
$\ell=\omega (\log n)$. It combines Gonzalez' method
with the randomized dimension reduction described in
Fact \ref{fact: achi03}.}

We shall call an approximation algorithm for the $\ell$-center
clustering problem {\em conservative} if it always returns centers
belonging to the input point set.

In this section, we present a general randomized method of decreasing the
dependency of the running time of a hypothetical conservative
approximation algorithm for the $\ell$-center clustering problem in
the Euclidean $\mathbb{R}^d$ on $d,$ by using the randomized dimension
reduction given in Fact \ref{fact: achi03}.
\vfill
\newpage

\bigskip
\noindent\fbox{%
    \parbox{\textwidth}{%
    \noindent
        {\bf procedure} $DIMREDCENTER(\ell, P, \epsilon , SR)$
        \par
        \noindent
            {\em Input}: A positive integer $\ell,$
a set $P$ of points $p_1,\dots,p_n \in \mathbb{R}^d,$ $n>\ell,$
a real $\epsilon\in (0,\frac 12),$ and a conservative approximation
subroutine $SR$
for the $m$-center clustering problem in $\mathbb{R}^q,$
where $m\le \ell$ and $q\le d.$
            \par
            \noindent
        {\em Output}: A set $T$ of $\ell$ centers of $P.$

                \begin{enumerate}
\item
                  Set $n$ to the number of input points
                 and $k$ to $O(\log n /\epsilon^2)$.
                 \item
                 Generate a random $k\times d$     matrix $R$ with entries
                 in $\{-1,1\}$,
                defining the function $f:\mathbb{R}^d \rightarrow 
                \mathbb{R}^k$ by $f(x)=Rx^{\top}$ 
                (see Fact \ref{fact: achi03}).
                \item
                  Compute the values of the function $f$  for each
                  point $p_i\in P,$
                  i.e.,
                  for $i=1,\dots,n$,
                  compute $Rp_i^{\top}.$
                  Also, for $i=1,\dots,n,$ if the value of $f^{-1}$
                  is not yet defined on $f(p_i)$ then set it to $p_i.$
\item
  Set $\ell'=\min\{\ell, |f(P)|\}$
  and compute a set $T'$ of
  $\ell'$ centers of $f(P)=\{ f(p_1),\dots,f(p_n)\}$ by running
the subroutine $SR$ for the $\ell'$-center clustering problem
  on $f(P).$
\item
  Set $T$ to $\{ f^{-1}(u')| u'\in T'\}$, if $\ell'< \ell$
  then extend $T$ by $\ell-\ell'$ arbitrary points in $P\setminus T,$
  and return it.
\end{enumerate}
}%
}

\bigskip
By Corollary \ref{cor: newachi}, we immediately obtain the following lemma.

                \begin{lemma}\label{lem: ineq1}
                  Assume the notation from Fact \ref{fact: achi03}.
                  Let $T$ be a set of $\ell$ centers of the input
                  point set $P.$
                  For any $\epsilon > 0,$ 
                  the following inequalities hold w.h.p.:
                 $$(1-\epsilon)\max_{v\in P} \min_{u\in T} ||v-u||_2\le
                 \max_{v\in P} \min_{u\in T} ||f(v)-f(u)||_2,$$
                 $$\max_{v\in P} \min_{u\in T} ||f(v)-f(u)||_2\le
                 (1+\epsilon)\max_{v\in P} \min_{u\in T} ||v-u||_2.$$
                  \end{lemma}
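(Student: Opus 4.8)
The plan is to obtain both displayed inequalities as an essentially syntactic consequence of Corollary~\ref{cor: newachi}, by propagating its two-sided distance estimate through the nested $\min$ and $\max$. First I would apply Corollary~\ref{cor: newachi} to the point set $P\cup T$ (in the situation of $DIMREDCENTER$, where $SR$ is conservative, $T\subseteq P$ and this is just $P$; in general $|P\cup T|\le n+\ell\le 2n$, so $\log|P\cup T|=O(\log n)$ and the value $k=O(\log n/\epsilon^2)$ from Step~1 still meets the requirement $k\ge k_0$ of Fact~\ref{fact: achi03}). This yields a single event $\mathcal E$, of probability at least $1-n^{-\beta}$ for the $\beta$ of our choice, on which $(1-\epsilon)||x-y||_2\le ||f(x)-f(y)||_2\le(1+\epsilon)||x-y||_2$ for every pair $x,y\in P\cup T$, in particular for every $v\in P$ and every $u\in T$. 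Everything below is deterministic given $\mathcal E$, so no further union bound is incurred and the lemma's ``w.h.p.'' is exactly the ``w.h.p.'' of the corollary.

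Next I would push the estimate through the inner minimum. Fix $v\in P$ and write $d(v)=\min_{u\in T}||v-u||_2$ and $d_f(v)=\min_{u\in T}||f(v)-f(u)||_2$. For the upper bound, take $u^\star\in T$ with $||v-u^\star||_2=d(v)$; then $d_f(v)\le||f(v)-f(u^\star)||_2\le(1+\epsilon)||v-u^\star||_2=(1+\epsilon)d(v)$. For the lower bound, take $u^{\star\star}\in T$ with $||f(v)-f(u^{\star\star})||_2=d_f(v)$; then $d_f(v)\ge(1-\epsilon)||v-u^{\star\star}||_2\ge(1-\epsilon)d(v)$. Hence $(1-\epsilon)d(v)\le d_f(v)\le(1+\epsilon)d(v)$ for every $v\in P$. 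I would then repeat the argument one level up with the outer maximum: picking $v^\star\in P$ with $d_f(v^\star)=\max_{v\in P}d_f(v)$ gives $\max_{v\in P}d_f(v)\le(1+\epsilon)d(v^\star)\le(1+\epsilon)\max_{v\in P}d(v)$, and picking $v^{\star\star}\in P$ with $d(v^{\star\star})=\max_{v\in P}d(v)$ gives $(1-\epsilon)\max_{v\in P}d(v)\le d_f(v^{\star\star})\le\max_{v\in P}d_f(v)$. These two chains are precisely the two inequalities claimed.

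I do not anticipate a genuine obstacle: the whole argument rests on the elementary monotonicity fact that a one-sided multiplicative bound $g\le c\cdot h$ (respectively $g\ge c\cdot h$) on a common index set survives taking $\min$ and $\max$, applied twice. The only two points deserving a word of care are (i) invoking the corollary on a set that contains both $P$ and $T$, so that all pairs $\{v,u\}$ with $v\in P,\ u\in T$ fall under a single high-probability event rather than many; and (ii) noting that the optimal index on the $f$-side need not coincide with the optimal index on the original side, which is harmless because in each direction of each inequality it suffices to exhibit one suitable witness.
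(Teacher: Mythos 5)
Your argument is exactly the one the paper intends: the paper derives Lemma~\ref{lem: ineq1} "immediately" from Corollary~\ref{cor: newachi}, and your proposal simply writes out the routine propagation of the two-sided pairwise bound through the inner $\min$ and outer $\max$, plus the (correct and worth noting) observation that one should invoke the corollary on a set containing both $P$ and $T$ when $T\not\subseteq P$. The proof is correct and takes essentially the same route as the paper.
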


                  \begin{lemma}\label{lem: ineq2}
                  Assume the notation from Fact \ref{fact: achi03}.
                  For any $\epsilon \in (0,1/2),$ w.h.p. for all $v,u\in P,$
                  the following inequalities hold:
$$(1-\epsilon) ||f(v)-f(u)||_2 \le ||v-u||_2,$$ 
$$||v-u||_2\le (1+2\epsilon)||f(v)-f(u)||_2.$$
               \end{lemma}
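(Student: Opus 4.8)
The plan is to derive both inequalities directly from Corollary~\ref{cor: newachi} by elementary algebra, using the hypothesis $\epsilon \in (0,\tfrac12)$ only at the very last step. First I would invoke Corollary~\ref{cor: newachi}: with high probability (over the random matrix $R$) the single event holds that simultaneously for all $v,u \in P$,
$$(1-\epsilon)\|v-u\|_2 \le \|f(v)-f(u)\|_2 \le (1+\epsilon)\|v-u\|_2.$$
Condition on this event for the remainder of the argument; the two claimed inequalities then follow deterministically for every pair $v,u\in P$, which is exactly the ``w.h.p.\ for all $v,u\in P$'' statement of the lemma.

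For the first inequality, I would take the upper bound $\|f(v)-f(u)\|_2 \le (1+\epsilon)\|v-u\|_2$, multiply through by $(1-\epsilon)>0$, and use $(1-\epsilon)(1+\epsilon)=1-\epsilon^2\le 1$ to get
$$(1-\epsilon)\|f(v)-f(u)\|_2 \le (1-\epsilon^2)\|v-u\|_2 \le \|v-u\|_2.$$
For the second inequality, I would take the lower bound $(1-\epsilon)\|v-u\|_2 \le \|f(v)-f(u)\|_2$, so that $\|v-u\|_2 \le \tfrac{1}{1-\epsilon}\|f(v)-f(u)\|_2$, and then it suffices to verify $\tfrac{1}{1-\epsilon}\le 1+2\epsilon$. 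Clearing denominators, this is equivalent to $1 \le (1-\epsilon)(1+2\epsilon) = 1+\epsilon-2\epsilon^2 = 1+\epsilon(1-2\epsilon)$, i.e.\ to $\epsilon(1-2\epsilon)\ge 0$, which holds precisely because $\epsilon\in(0,\tfrac12)$.

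There is no genuine obstacle here: the lemma is a routine corollary of Corollary~\ref{cor: newachi}. The only points worth flagging are that both inequalities must hold over the same high-probability event (which they do, since the event in Corollary~\ref{cor: newachi} is a single event about $R$, fixed once and for all), and that the restriction $\epsilon<\tfrac12$ is used exactly once, in showing $\tfrac{1}{1-\epsilon}\le 1+2\epsilon$; for $\epsilon\ge\tfrac12$ the constant $1+2\epsilon$ would no longer suffice.
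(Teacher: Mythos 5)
Your proposal is correct and follows essentially the same route as the paper: both inequalities are obtained from Corollary~\ref{cor: newachi} by rearranging $\frac{1}{1+\epsilon}\ge 1-\epsilon$ and $\frac{1}{1-\epsilon}\le 1+2\epsilon$, the latter being where $\epsilon<\tfrac12$ is used. The only difference is cosmetic (you clear denominators where the paper splits off an $\frac{\epsilon}{1\pm\epsilon}$ term), and your explicit remark that both bounds hold on the single high-probability event of Corollary~\ref{cor: newachi} is a welcome clarification.
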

               \begin{proof}

                 By the right-hand
                 inequality in Corollary \ref{cor: newachi}, we
                 obtain for any $v, u \in P,$ and $\epsilon \in
                 (0,1/2),$ $\frac {||f(v)-f(u)||_2}{1+\epsilon} \le
                 ||v-u||_2.$ It follows that $||f(v)-f(u)||_2- \frac
                 {\epsilon}{1+\epsilon} ||f(v)-f(u)||_2 \le
                 ||v-u||_2.$ Consequently, we obtain the first
                 inequality in this lemma.

                 Similarly, by the left-hand inequality
                 in Corollary \ref{cor: newachi}, we
                 infer that for any $v, u \in P,$ and $\epsilon \in
                 (0,1/2),$ $||v-u||_2 \le \frac {||f(v)-f(u)||_2}{1-\epsilon}$
                 This yields $||v-u||_2 \le ||f(v)-f(u)||_2 +\frac
                 {\epsilon}{1-\epsilon} ||f(v)-f(u)||_2.$ Since $\epsilon \in
                 (0,1/2),$ the second inequality in this lemma follows.
                                  \qed
               \end{proof}          

                 Lemma \ref{lem: ineq2} immediately yields
                 the following lemma.
\begin{lemma}\label{lem: ineq3}
                  Assume the notation from Fact \ref{fact: achi03}.
                  Let $f(P)=\{ f(p_1),\dots,f(p_n)\}$,
                  $\ell'=\min \{ \ell , |f(P)|\},$
                  and let $T'$ be a set of $\ell'$ centers for the 
                  point set $f(P)$ in $\mathbb{R}^k.$
For $v' \in f(P),$ let $f^{-1}(v')=p_q,$ where $q=\min_i f(p_i)=v'.$
For any $\epsilon \in (0,1/2),$
the following inequalities hold w.h.p.:
                 $$(1-\epsilon)\max_{v'\in f(P)}
                 \min_{u'\in T'} ||v'-u'||_2\le
                 \max_{v\in P} \min_{u'\in T'} ||v-f^{-1}(u')||_2,$$
                 $$\max_{v\in P} \min_{u'\in T'} ||v-f^{-1}(u')||_2\le
                 (1+2\epsilon)\max_{v'\in f(P)}\min_{u'\in T'} ||v'-u'||_2.$$
               \end{lemma}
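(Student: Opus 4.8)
The plan is to deduce both inequalities from Lemma~\ref{lem: ineq2} by propagating a two-sided pointwise distance estimate first through the inner minimum over $T'$ and then through the outer maximum over $P$. The one structural fact needed beyond Lemma~\ref{lem: ineq2} is that $f\circ f^{-1}$ is the identity on $f(P)$: for $u'\in T'$ (note $T'\subseteq f(P)$, since the centers computed for $f(P)$ in step~4 of $DIMREDCENTER$ are points of $f(P)$) we have $f^{-1}(u')=p_q$ with $f(p_q)=u'$, hence $f(f^{-1}(u'))=u'$. We do not claim that $f^{-1}\circ f$ is the identity, as $f$ need not be injective; only the identity $f(f^{-1}(u'))=u'$ is used.

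First I would fix the single high-probability event on which the conclusion of Lemma~\ref{lem: ineq2} holds; the whole argument takes place inside this event, so no additional union bound is incurred. On this event, for every $v\in P$ and every $u'\in T'$ apply Lemma~\ref{lem: ineq2} to the pair $v,\,f^{-1}(u')\in P$; using $f(f^{-1}(u'))=u'$ this gives
$$(1-\epsilon)\,\|f(v)-u'\|_2\ \le\ \|v-f^{-1}(u')\|_2\ \le\ (1+2\epsilon)\,\|f(v)-u'\|_2.$$

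Next I would take the minimum over $u'\in T'$. Since the displayed bound holds for each $u'\in T'$ and termwise domination of finite families is preserved by taking minima, this yields, for every fixed $v\in P$,
$$(1-\epsilon)\min_{u'\in T'}\|f(v)-u'\|_2\ \le\ \min_{u'\in T'}\|v-f^{-1}(u')\|_2\ \le\ (1+2\epsilon)\min_{u'\in T'}\|f(v)-u'\|_2.$$
Taking the maximum over $v\in P$ (again using that termwise domination is preserved by maxima), together with the identity $\max_{v\in P}\min_{u'\in T'}\|f(v)-u'\|_2=\max_{v'\in f(P)}\min_{u'\in T'}\|v'-u'\|_2$ (valid because $\{f(v):v\in P\}=f(P)$), gives exactly the two inequalities in the statement.

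There is essentially no obstacle here: the argument is the routine "push a two-sided pointwise estimate through $\min$ and $\max$''. The points deserving a line of care are (i) that $f^{-1}$ is defined on all of $T'$, i.e.\ $T'\subseteq f(P)$, which is precisely conservativeness of the center set produced for $f(P)$; (ii) the elementary fact that, for finite index sets, $a_i\le b_i$ for all $i$ implies both $\min_i a_i\le\min_i b_i$ and $\max_i a_i\le\max_i b_i$, used in both directions; and (iii) that the many termwise applications of Lemma~\ref{lem: ineq2} are all valid simultaneously, since they are conditioned on the single high-probability event furnished by that lemma.
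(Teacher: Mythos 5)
Your proof is correct and matches the paper's intent: the paper gives no explicit argument, stating only that Lemma~\ref{lem: ineq2} ``immediately yields'' Lemma~\ref{lem: ineq3}, and your derivation (apply Lemma~\ref{lem: ineq2} to each pair $v, f^{-1}(u')$, use $f(f^{-1}(u'))=u'$, then push the two-sided bound through $\min$ and $\max$) is precisely the omitted routine computation. Your explicit attention to $T'\subseteq f(P)$ via conservativeness and to the single high-probability event is a correct and welcome clarification rather than a deviation.
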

               
\begin{lemma}\label{lem: app}
  Assume the notation from Fact \ref{fact: achi03}.
  If $\epsilon \in (0,1/2)$ and the conservative subroutine $SR$ is
  $\alpha$-approximate then w.h.p.
  $DIMREDCENTER(\ell,P,\epsilon,SR)$ returns a
  $(1+\epsilon)(1+2\epsilon)\alpha$-approximate conservative solution to the
  $\ell$-center clustering problem for $P$ in the Euclidean
$\mathbb{R}^d.$
\end{lemma}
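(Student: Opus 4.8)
The plan is to follow the cost of the returned set $T$ through the two halves of the dimension reduction: push a fixed optimal solution from $\mathbb{R}^d$ down to $\mathbb{R}^k$ to certify that $f(P)$ has a cheap $\ell'$-center solution there, invoke the $\alpha$-approximation guarantee of $SR$ in $\mathbb{R}^k$, and then pull the subroutine's output $T'$ back up to $\mathbb{R}^d$ via Lemma \ref{lem: ineq3}. Let $OPT$ denote the optimal $\ell$-center cost of $P$ in $\mathbb{R}^d$, and fix an optimal solution $T^{*}$ of $\ell$ points attaining it (the infimum is attained: optimal centers may be confined to a compact neighbourhood of $P$, on which the clustering cost is continuous). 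Because $T^{*}$ is determined by $P$ alone, hence before the matrix $R$ is drawn, I can apply Corollary \ref{cor: newachi} to the point set $P\cup T^{*}$, of size at most $2n$; with $k=O(\log n/\epsilon^2)$ this event $\mathcal{E}$ holds w.h.p. It simultaneously yields the conclusion of Lemma \ref{lem: ineq1} for $T=T^{*}$ and that of Lemma \ref{lem: ineq2}, hence of Lemma \ref{lem: ineq3}; crucially, since $SR$ is conservative we have $T'\subseteq f(P)$, so every point $f^{-1}(u')$ with $u'\in T'$ lies in $P$ and Lemma \ref{lem: ineq3} only invokes pairwise distances inside $P$. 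Everything below is conditioned on $\mathcal{E}$.

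First I would bound the optimal $\ell'$-center cost of $f(P)$ in $\mathbb{R}^k$, call it $OPT'$. If $\ell'<\ell$ then $\ell'=|f(P)|$, so taking all of $f(P)$ as centers gives $OPT'=0$. If $\ell'=\ell$, then $f(T^{*})$ has at most $\ell$ distinct points and is thus a legitimate $\ell$-center solution for $f(P)$, and by Lemma \ref{lem: ineq1},
$$\max_{v'\in f(P)}\min_{u'\in f(T^{*})}||v'-u'||_2\ \le\ \max_{v\in P}\min_{u\in T^{*}}||f(v)-f(u)||_2\ \le\ (1+\epsilon)\,OPT.$$
In both cases $OPT'\le (1+\epsilon)\,OPT$. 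Since $SR$ is $\alpha$-approximate and conservative, the set $T'$ it returns satisfies $T'\subseteq f(P)$, $|T'|=\ell'$, and
$$\max_{v'\in f(P)}\min_{u'\in T'}||v'-u'||_2\ \le\ \alpha\,OPT'\ \le\ \alpha(1+\epsilon)\,OPT.$$

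Now the pull-back. The set $T_0=\{f^{-1}(u')\mid u'\in T'\}$ consists of $\ell'$ points, all belonging to $P$, and $T$ is $T_0$ possibly enlarged by arbitrary points of $P\setminus T_0$ to size $\ell$; in particular $T\subseteq P$, so $DIMREDCENTER$ indeed returns a conservative solution, and enlarging a center set does not increase the clustering cost, whence
$$\max_{v\in P}\min_{u\in T}||v-u||_2\ \le\ \max_{v\in P}\min_{u'\in T'}||v-f^{-1}(u')||_2.$$
Applying the right-hand inequality of Lemma \ref{lem: ineq3} and then the bound on $T'$ above,
$$\max_{v\in P}\min_{u'\in T'}||v-f^{-1}(u')||_2\ \le\ (1+2\epsilon)\max_{v'\in f(P)}\min_{u'\in T'}||v'-u'||_2\ \le\ (1+2\epsilon)\alpha(1+\epsilon)\,OPT,$$
which is exactly the claimed $(1+\epsilon)(1+2\epsilon)\alpha$-approximation.

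The only subtle point I anticipate is synchronizing the random event with the argument: one must commit to the fixed (near-)optimal solution $T^{*}$ before sampling $R$ so that Corollary \ref{cor: newachi} legitimately covers $P\cup T^{*}$, and one must notice that Lemma \ref{lem: ineq3} refers solely to distances among points of $P$ — which is precisely what the conservativeness of $SR$ buys us — so a single invocation of Corollary \ref{cor: newachi} underlies both halves of the chain, with no dependence on the data-dependent output $T'$. The case distinction $\ell'<\ell$ versus $\ell'=\ell$ is a harmless technicality.
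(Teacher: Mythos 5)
Your proposal is correct and follows essentially the same route as the paper's proof: certify that $f(P)$ admits an $\ell'$-center solution of cost at most $(1+\epsilon)\,OPT$ via the image of a fixed optimal solution (the paper's Lemma~\ref{lem: ineq1}), apply the $\alpha$-guarantee of $SR$ in $\mathbb{R}^k$, and pull $T'$ back with the second inequality of Lemma~\ref{lem: ineq3} to pick up the final $(1+2\epsilon)$ factor. Your additional care about fixing $T^{*}$ before sampling $R$, the attainment of the optimum, the $\ell'<\ell$ case, and why conservativeness of $SR$ keeps all relevant distances inside $P$ are sensible clarifications of points the paper leaves implicit, not a different argument.
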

                    
\begin{proof}
Let $T_1$ be an optimal $\ell$-center
solution to the $\ell$-center clustering problem for $P\subset \mathbb{R}^d.$
Similarly, let $T_2$ be an optimal $\ell'$-center solution to
the $\ell'$-center clustering problem for $f(P)\subset \mathbb{R}^k.$
Next, let $r_1= \max_{v\in P} \min_{u\in T_1} ||v-u||_2$
and $r_2=\max_{v'\in f(P)} \min_{u'\in T_2} ||v'-u'||_2.$
By Lemma \ref{lem: ineq1},
 we have $(1-\epsilon)r_1\le r_2 \le (1+\epsilon)r_1.$
The procedure first computes an $\alpha$-approximate
$\ell'$-center solution $T'$ to the $\ell'$-center clustering
problem for $f(P)$ in $\mathbb{R}^k.$ It follows that
$\max_{v'\in f(P)} \min_{u'\in T'} ||v'-u'||_2\le (1+\epsilon)\alpha r_1$.
The procedure returns \\
$T=\{ f^{-1}(u')| u'\in T'\}$ extended
by $\ell -\ell'$ arbitrary points in $P\setminus T$
as an approximate
$\ell$-center solution to the $\ell$-center clustering problem
for the input point set $P\subset \mathbb{R}^d.$
The second inequality in Lemma~\ref{lem: ineq3} implies\\
$\max_{v\in P} \min_{u'\in T'} ||v-f^{-1}(u')||_2\le
(1+2\epsilon)\max_{v'\in f(P)}\min_{u'\in T'} ||v'-u'||_2.$
We conclude that $T$ yields  a $(1+2\epsilon)(1+\epsilon)\alpha$
approximation to the $\ell$-center clustering problem for $P.$
\qed
\end{proof}

\begin{lemma}\label{lem: timecenter1}
   All steps of  $DIMREDCENTER(\ell,P,\epsilon,SR)$ 
but for Step 4 can be implemented in
    $O((nd\log n)/\epsilon^2)$ time.
\end{lemma}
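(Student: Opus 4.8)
The plan is to bound the cost of Steps 1, 2, 3, and 5 separately and observe that each is dominated by $O(nd\log n/\epsilon^2)$, with Step 3 being the bottleneck.

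First I would dispatch Step 1: reading off $n$ takes $O(n)$ time (or $O(1)$ if $n$ is supplied with the input), and setting $k=O(\log n/\epsilon^2)$ is immediate. Step 2 generates a $k\times d$ matrix $R$ with entries in $\{-1,1\}$; this requires $kd$ independent random bits, each produced in $O(1)$ time, for a total of $O(kd)=O(d\log n/\epsilon^2)$.

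The core of the argument is Step 3. For each $i\in[n]$ the value $f(p_i)=Rp_i^{\top}$ is the product of a $k\times d$ matrix by a $d$-dimensional vector, which takes $O(kd)$ arithmetic operations (multiplications by $\pm1$ being mere sign flips). Summed over the $n$ points this is $O(nkd)=O(nd\log n/\epsilon^2)$. The only point that needs care here is the bookkeeping for $f^{-1}$: I would maintain a dictionary (e.g.\ a hash table) whose keys are the already-seen images $f(p_i)\in\mathbb{R}^k$; testing whether $f(p_i)$ is already present and, if not, inserting it with value $p_i$, costs $O(k)$ time per point in the standard RAM model, hence $O(nk)$ overall, which is absorbed into $O(nkd)$.

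Finally, Step 5 forms $T=\{f^{-1}(u')\mid u'\in T'\}$ by $\ell'\le\ell< n$ dictionary lookups at $O(k)$ each, i.e.\ $O(nk)$ time, and, when $\ell'<\ell$, extends $T$ by fewer than $\ell< n$ arbitrary points of $P\setminus T$, which is done in $O(n)$ time with a Boolean marker array. All of this is again dominated by the Step~3 bound, so every step of $DIMREDCENTER(\ell,P,\epsilon,SR)$ other than Step~4 runs in $O(nd\log n/\epsilon^2)$ time, as claimed. There is no real obstacle here; the only thing to get right is the choice of data structure for $f^{-1}$, so that its maintenance does not exceed the cost of the $n$ matrix-vector products in Step~3.
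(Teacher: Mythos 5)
Your proposal is correct and follows essentially the same route as the paper's proof: a step-by-step cost accounting in which Step 3's $n$ matrix--vector products at $O(kd)$ each dominate, giving $O(ndk)=O((nd\log n)/\epsilon^2)$. The only difference is that you are more explicit about the dictionary maintenance for $f^{-1}$, which the paper leaves implicit; this detail is harmless and absorbed into the same bound.
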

\begin{proof}
  Step 1 can be done in $O(nd)$ time. Step 2 takes $O(dk)\
  =\ O((d\log n)/\epsilon^2)$
  time. The preprocessing in Step 3 requires
$O(ndk)$, i.e., 
$O((nd\log n)/\epsilon^2)$ time.
Finally, Step 5 takes $O(nd)$ time.
\junk{Steps 5(a) and 5(c) take $O(n)$ time while Step 5(b) as
Step 4 can be done in $O((n\log n)/\delta^2)$ time.
Consequently, the whole Step 5 requires $O((n\ell \log n)/\delta^2 )$
time. Finally, Step 6 takes $O((n\log n)/\delta^2)$ time
similarly as Steps 4 and 5(b). It remains to observe that 
the overall running time of $CENTER(\ell,P,\delta)$
is dominated by those of Step 3 and Step 5.}
\qed
\end{proof}

\section{Fast $(2+\epsilon)$-approximation for $\ell$-center clustering}

In this section, we provide a faster randomized
$(2+\epsilon)$-approximation methods for the $\ell$-center clustering
and minimum-diameter $\ell$-clustering problems in the Euclidean
$\mathbb{R}^d$ and the Hamming space $\{0,1\}^d$. They are faster than
the known methods with approximation guarantee close to $2$ when
$d=\omega (\log n)$ and $\ell=\omega (\log n)$. Our method
in the Euclidean $\mathbb{R}^d$ is obtained
by plugging Gonzalez' method as the subroutine in the procedure
$DIMREDCENTER.$
\vfill
\newpage
\begin{theorem} \label{theo: 3plus}
   Let $P$ be a set of $n$ points $p_1,\dots,p_n \in \mathbb{R}^d,$
   $\ell$ an integer smaller than $n,$ and let $\epsilon\in (0,1/2).$
   The $\ell$-center clustering problem for $P$
   admits a conservative randomized approximation algorithm that
   w.h.p. provides 
   a $(2+\epsilon)$
   approximation of an optimal $\ell$-center clustering of $P$
   and runs in time\\
    $O(n\log n(d + \ell)/\epsilon^2).$
  Its slight modification yields a
   $(2+\epsilon )$ approximation to an $\ell$-clustering of $P$ with
   minimum cluster diameter w.h.p. and runs in the same asymptotic time.
 \end{theorem}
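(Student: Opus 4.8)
The plan is to instantiate the procedure $DIMREDCENTER$ with Gonzalez' greedy algorithm as the conservative subroutine $SR$, invoked with the internal accuracy parameter set to a suitable constant fraction of $\epsilon$. First I would recall the two facts about Gonzalez' method that are needed: it is a conservative $2$-approximation algorithm for $\ell$-center clustering (it outputs only input points), and when run on $m$ points of $\mathbb{R}^q$ asked for $m'\le m$ centers it takes $O(mqm')$ time. Choosing $\epsilon':=\epsilon/8$ (note $\epsilon'\in(0,1/2)$ since $\epsilon<1/2$) and applying Lemma \ref{lem: app} with $\alpha=2$, the call $DIMREDCENTER(\ell,P,\epsilon',SR)$ returns w.h.p.\ a conservative $2(1+\epsilon')(1+2\epsilon')$-approximate solution; since $2(1+\epsilon')(1+2\epsilon')=2+6\epsilon'+4\epsilon'^2<2+8\epsilon'=2+\epsilon$ for $\epsilon'<1/2$, this gives the claimed $(2+\epsilon)$-approximation. (The hidden constant in $k=O(\log n/\epsilon^2)$ is taken large enough that Corollary \ref{cor: newachi}, and hence the downstream Lemmas \ref{lem: app}, \ref{lem: ineq2}, \ref{lem: ineq3}, hold with the desired w.h.p.\ exponent.)

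For the running time I would combine Lemma \ref{lem: timecenter1}, which bounds every step except Step 4 by $O(nd\log n/\epsilon^2)$, with the cost of Step 4: there Gonzalez is run on the $n$ points of $f(P)\subset\mathbb{R}^k$ with $k=O(\log n/\epsilon^2)$, asked for $\ell'\le\ell$ centers, costing $O(nk\ell)=O(n\ell\log n/\epsilon^2)$. Adding the two bounds, and absorbing the constant factor coming from the substitution $\epsilon'=\epsilon/8$, yields the total $O(n\log n\,(d+\ell)/\epsilon^2)$.

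For the minimum-diameter variant, the modification is to additionally form, for each of the $\ell$ returned centers, the cluster of all input points having that center as a nearest one (ties broken arbitrarily), and to output this partition; this costs only $O(n\ell)$ extra time and does not change the asymptotics. To bound its diameter I would transfer the classical Gonzalez analysis through the projection. First note that w.h.p.\ $f$ is injective on $P$: if $\|f(u)-f(v)\|_2=0$ then $(1-\epsilon')\|u-v\|_2\le 0$ by Corollary \ref{cor: newachi}, so $u=v$; hence $|f(P)|=n>\ell$ and $\ell'=\ell$. Running Gonzalez in $\mathbb{R}^k$ implicitly exhibits $\ell+1$ points of $f(P)$ that are pairwise at $\ell_2$-distance at least $\rho'$, where $\rho'=\max_{v'\in f(P)}\min_{u'\in T'}\|v'-u'\|_2$ is the radius of its output $T'$; by injectivity their $f$-preimages are $\ell+1$ distinct points of $P$, pairwise at distance at least $(1-\epsilon')\rho'$ by the first inequality of Lemma \ref{lem: ineq2}. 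By pigeonhole, in any $\ell$-clustering of $P$ two of these points share a cluster, so the optimal diameter satisfies $d^*\ge(1-\epsilon')\rho'$. On the other hand, the second inequality of Lemma \ref{lem: ineq3} gives $\max_{v\in P}\min_{u'\in T'}\|v-f^{-1}(u')\|_2\le(1+2\epsilon')\rho'$, so every cluster of the nearest-center partition has diameter at most $2(1+2\epsilon')\rho'\le\frac{2(1+2\epsilon')}{1-\epsilon'}\,d^*$, which is at most $(2+\epsilon)\,d^*$ once $\epsilon'$ is taken to be a small enough constant multiple of $\epsilon$ (and $\epsilon<1/2$).

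The bulk of the argument is bookkeeping over Lemmas \ref{lem: app}, \ref{lem: ineq2}, \ref{lem: ineq3} and \ref{lem: timecenter1}. The one place that genuinely needs care is the minimum-diameter reduction: one must (i) observe that $f$ remains injective on $P$ so that the full collection of $\ell+1$ Gonzalez witnesses survives the projection (this is what lets the pigeonhole lower bound on $d^*$ go through), and (ii) track the two one-sided distortion factors separately so that the product $\frac{2(1+2\epsilon')}{1-\epsilon'}$ can be forced below $2+\epsilon$ by the right choice of the internal accuracy parameter.
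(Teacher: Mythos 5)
Your proposal is correct and follows essentially the same route as the paper: plug Gonzalez into $DIMREDCENTER$ with accuracy parameter $\epsilon/8$, invoke Lemma \ref{lem: app} and Lemma \ref{lem: timecenter1} for the approximation and running-time bounds, and for the minimum-diameter variant use the $\ell+1$ pairwise-far Gonzalez witnesses plus pigeonhole, after arguing that $f$ is injective on $P$ w.h.p.\ so that $\ell'=\ell$. The only differences are cosmetic constant bookkeeping (you quote the $(1-\epsilon')$ and $(1+2\epsilon')$ forms of the distortion from Lemma \ref{lem: ineq2} where the paper uses $1/(1+\epsilon')$ and $1/(1-\epsilon')$), and both choices still give the $2+\epsilon$ bound with $\epsilon'=\epsilon/8$.
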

 \begin{proof}
   To prove the first part, we run $DIMREDCENTER(\ell,P, \epsilon/8, GO),$
   where $GO$ stands for the conservative Gonzalez' $2$-approximation
   algorithm for the $\ell$-center clustering problem.
   By Lemma \ref{lem: app},
   it yields a $(1+\epsilon/8)(1+2\epsilon/8)2\le$ $(2+\epsilon)$
   approximation of an optimal solution. Since Step 4 takes
   $O(n\ell k)=O((n\ell \log n)/ \epsilon^2)$ time, the whole
   call of the procedure can be implemented in time
   $O(n\log n(d + \ell)/\epsilon^2)$ by Lemma \ref{lem: timecenter1}.

   To prove the second part, we extend $DIMREDCENTER(\ell,P,\epsilon/8,GO)$
   slightly. Consider the set $T'$
   of $\ell'$ centers of $f(P)$ constructed by
   Gonzalez' algorithm in Step 4 of $DIMREDCENTER(\ell,P,\epsilon/8,GO).$
   In Step 5, additionally
   the $\ell'$ clusters $P_1,\dots,P_{\ell'}$ are
   formed by assigning each point $v\in P$ to the center in $T'$
   that is closest to $f(v).$ This extension takes
   $O(n\log n(d + \ell)/\epsilon^2)$ time.
   Let $s$ be a point in $f(P)\setminus T'$ that maximizes
   the distance from $T'.$ Let us denote this distance by $r'.$
   Note that for each point in $f(P)$ its distance to the closest
   center in $T'$ does not exceed $r'$ and importantly any two
   points in $T'\cup \{s\}$ are at least $r'$ apart by
   the furthest-point traversal done by Gonzalez' algorithm.
   It follows from Lemma \ref{lem: ineq1} that any two points
   in the set $f^{-1}(T')\cup \{f^{-1}(s)\}$
   are at least $(1-\epsilon/8)r'$ apart.
   Assume that $\ell'=\ell.$ Then the latter set
 has $\ell+1$ elements, at least two of them have to belong to the same
 cluster in any $\ell$-clustering of $P.$ Consequently, the maximum diameter
 of a cluster in any $\ell$-clustering of $P$ is at least
 $\frac{r'}{1+\epsilon/8}$.
 On the other hand, the diameter of any cluster $P_i$ is
 at most $2\frac {r'} {1-\epsilon/8}$ by Lemma \ref{lem: ineq1}. Now it is sufficient to observe
 that  $\frac {2(1+\epsilon/8)}{1-\epsilon/8}\le 2+\epsilon.$

To complete the proof observe
that we may assume w.l.o.g. $\ell=\ell'$ w.h.p. Simply,
we may assume w.l.o.g. that there are at least $\ell +1$
non-overlapping points in $P$ since otherwise the problem
admits a trivial solution. Let $t$ be a minimum distance
between a pair of $\ell +1$ non-overlapping points. W.h.p.
     each pair of $f$-images of these $\ell+1$ points is apart at least by
     $(1-\epsilon)t$ by Corollary~\ref{cor: newachi}.
  We conclude that $|f(P)|>\ell$ w.h.p.
\junk{
 To complete the proof we show
 that one can assume w.l.o.g. $\ell=\ell'$ w.h.p. by the following
 argumentation. Run Gonzalez' algorithm
     on the input set $P$ and let $q$ be a point in $P$
     maximizing the distance to the closest center provided
     by this algorithm. Denote the aforementioned
     distance by $t.$ It follows that any two points among the centers
     and the point $q$ have to be apart by at least
     $t$. We may assume w.l.o.g. that $t>0$ since otherwise the problem
     admits a trivial solution. But then consequently, w.h.p.
     each pair of $f$-images of these $\ell+1$ points is apart at least by
     $(1-\epsilon)t$ by Corollary \ref{cor: achi03}.
  We conclude that $|f(P)|>\ell$ w.h.p.}
 \qed  
\end{proof}
Let us recall the definitions of the $\ell$-center clustering and
minimum-diameter $\ell$-clustering problems in a Hamming space.

The $\ell$-center clustering problem in 
 the Hamming space $\{0,1\}^d$ is as follows: given a set $P$ of $n$ points
 in  $\{0,1\}^d$, find a set $T$ of $\ell$ points in  $\mathbb{R}^d$
 that minimize \\$\max_{v\in P} \min_{u\in T} \mathrm{ham}(v,u).$

The minimum-diameter $\ell$-clustering problem in 
 the Hamming space $\{ 0,1\}^d$ is as follows: given a set $P$ of $n$ points
 in  $\{0,1\}^d$, find a partition of $P$ into $\ell$ subsets
 $P_1,P_2,\dots, P_{\ell}$ that minimize
 $\max_{i\in [\ell]} \max_{v,u \in P_i} \mathrm{ham}(v,u).$

 To derive a result analogous to Theorem \ref{theo: 3plus} for
 Hamming spaces we cannot directly apply the general method
 for Euclidean spaces from Section 3 and instead use
 the following procedure.
 
\bigskip
\noindent\fbox{%
    \parbox{\textwidth}{%
    \noindent
        {\bf procedure} $DIMREDHAMCENTER(\ell, P, \epsilon )$
        \par
        \noindent
            {\em Input}: A positive integer $\ell,$
a set $P$ of points $p_1,\dots,p_n \in \{0,1\}^d,$ $n>\ell,$
and a real $\epsilon\in (0,\frac 12).$
            \par
            \noindent
            {\em Output}: A set $T\subset P$ of $\ell$
            centers of $P.$

                \begin{enumerate}
\item
                  Set $n$ to the number of input points
                 and $k$ to $O(\log n /\epsilon^2)$.
                 \item
                 Generate a random $k\times d$     matrix $R$ with entries
                 in $\{-1,1\}$,
                defining the function $f:\mathbb{R}^d \rightarrow 
                \mathbb{R}^k$ by $f(x)=Rx^{\top}$ 
                (see Fact \ref{fact: achi03}).
                \item
                  Compute the values of the function $f$  on each
                  point $p_i\in P,$
                  i.e.,
                  for $i \in [n],$
                  compute $Rp_i^{\top}.$ 
\item
Set $T$ to $\{ p_1 \}$, and for $j\in [n] \setminus \{ 1 \}$,
set $W_{1j}$  to $(||f(p_1)-f(p_j)||_2)^2$\\
(see Corollary  \ref{cor: achi03}).
\item
$\ell -2$ times iterate the following three steps:
\begin{enumerate}
\item
Find $p_m \in P\setminus T$ that maximizes $\min_{p_q\in T}
W_{qm}$ and extend $T$ to $T\cup \{p_m\}.$
\item
  For each $p_j\in P\setminus T,$ set $W_{mj}$
 to $(||f(p_m)-f(p_j)||_2)^2.$ 
\item
For each $p_j\in P\setminus T,$ update $\min_{p_i \in T} W_{ij}.$
\end{enumerate}
\item
Find $p_m \in P\setminus T$ that maximizes $\min_{p_q\in T}
W_{qm}$ and extend $T$ to $T\cup \{p_m\}.$
\item
Return $T.$
\end{enumerate}
}%
}
\vskip 5pt
By the specification of $W_{ij}$ in $DIMREDHAMCENTER(\ell, P, \epsilon )$
                and Corollary \ref{cor: achi03}, we obtain immediately
                the following lemma.
                
\begin{lemma}\label{lem: ineq}
  Assume the notation from
  $DIMREDHAMCENTER(\ell, P, \epsilon)$.
  For
  $i\in [n]$ and $j\in [n],$
the following inequalities  hold  w.h.p.:

$$W_{ij} \le (1+\epsilon)\mathrm{ham}(p_i,o_j),$$
$$(1-\epsilon)\mathrm{ham}(p_i,p_j)\le W_{ij}.$$
\end{lemma}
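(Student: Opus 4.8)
The plan is to derive both inequalities directly from Corollary~\ref{cor: achi03} together with the simple bookkeeping performed by the procedure, so the argument is essentially a substitution. First I would observe that the input satisfies $P\subset\{0,1\}^d\subset\mathbb{R}^d$ and that the map $f:\mathbb{R}^d\rightarrow\mathbb{R}^k$ used in $DIMREDHAMCENTER$ is exactly the random linear map of Fact~\ref{fact: achi03}, with $k=O(\log n/\epsilon^2)$ chosen above the threshold $k_0$. Hence Corollary~\ref{cor: achi03} applies: there is a single event $\mathcal{E}$, occurring with probability at least $1-n^{-\beta}$ for any prescribed constant $\beta\ge 1$, on which
$$(1-\epsilon)\,\mathrm{ham}(u,v)\ \le\ (||f(u)-f(v)||_2)^2\ \le\ (1+\epsilon)\,\mathrm{ham}(u,v)\qquad\text{for all }u,v\in P.$$
Since this is already quantified over all pairs, no further union bound is needed; I would simply condition on $\mathcal{E}$ for the remainder of the argument.

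Next I would unwind the definition of $W_{ij}$. Inspecting the procedure, the only places a value $W_{ij}$ is ever assigned are Step~4, where $W_{1j}$ is set to $(||f(p_1)-f(p_j)||_2)^2$, and Step~5(b), where upon adding $p_m$ to $T$ one sets $W_{mj}$ to $(||f(p_m)-f(p_j)||_2)^2$ for the remaining points. In every case the stored quantity is literally $(||f(p_i)-f(p_j)||_2)^2$ for the corresponding pair $p_i,p_j\in P$. Substituting this identity into the two-sided bound guaranteed by $\mathcal{E}$ immediately yields $W_{ij}\le(1+\epsilon)\mathrm{ham}(p_i,p_j)$ and $(1-\epsilon)\mathrm{ham}(p_i,p_j)\le W_{ij}$, holding simultaneously for all relevant indices on $\mathcal{E}$, hence w.h.p. (I would also note that the occurrence of $o_j$ in the first displayed inequality of the statement is a typo for $p_j$.)

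Because the conclusion is an immediate corollary, there is no genuine obstacle; the only points needing a word of care are (i) confirming that $P$ lies in $\{0,1\}^d$, so that $\mathrm{ham}$ equals the squared $\ell_2$ distance and Corollary~\ref{cor: achi03} (rather than Corollary~\ref{cor: newachi}) is the right tool, and (ii) observing that the high-probability guarantee is uniform over all pairs from the outset, so it covers whatever subset of pairs the algorithm actually inspects without any degradation of the failure probability. I would further remark that it is convenient to fix the event $\mathcal{E}$ once here, since the same event is reused in the subsequent analysis of the approximation guarantee of $DIMREDHAMCENTER$.
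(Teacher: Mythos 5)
Your proposal is correct and matches the paper's reasoning exactly: the paper derives this lemma in one line by combining the specification $W_{ij}=(\|f(p_i)-f(p_j)\|_2)^2$ with Corollary~\ref{cor: achi03}, which is precisely your substitution argument (and you are right that $o_j$ is a typo for $p_j$).
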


\begin{lemma}\label{lem: timecenter}
    $DIMREDHAMCENTER(\ell,P,\epsilon)$ runs in time
    $O(n\log n(d + \ell )/\epsilon^2)$.
\end{lemma}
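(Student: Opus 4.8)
The plan is to walk through the seven steps of $DIMREDHAMCENTER(\ell,P,\epsilon)$ one by one, bound the cost of each, and sum. The single non-obvious point is that the furthest-point selections in Step 5(a) and Step 6 must be supported by maintaining, for every input point $p_j$, the current value $\min_{p_i\in T}W_{ij}$ as an array entry; with this bookkeeping each selection is a linear scan rather than a quadratic one.

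First I would dispatch the cheap steps. Step 1 is trivial, and since $k=O(\log n/\epsilon^2)$, Step 2 fills the $k\times d$ matrix $R$ in $O(kd)=O(d\log n/\epsilon^2)$ time. Step 3 forms the products $Rp_i^{\top}$ for $i\in[n]$; each matrix--vector product costs $O(kd)$, so Step 3 costs $O(nkd)=O(nd\log n/\epsilon^2)$ time, and this is the only place where the ambient dimension $d$ enters the bound. Step 4 computes the $n-1$ squared $\ell_2$-distances $(\|f(p_1)-f(p_j)\|_2)^2$ in the $k$-dimensional image space, each in $O(k)$ time, and initializes the running minima $\min_{p_i\in T}W_{ij}:=W_{1j}$; this is $O(nk)=O(n\log n/\epsilon^2)$ time.

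Next I would analyze the loop in Step 5. Each of its $\ell-2$ iterations performs: in 5(a), a scan over $P\setminus T$ that picks the point maximizing the stored running minimum, which is $O(n)$ precisely because the minima are kept incrementally; in 5(b), the computation of $W_{mj}=(\|f(p_m)-f(p_j)\|_2)^2$ for all $p_j\in P\setminus T$, which is $O(nk)$; and in 5(c), one pass updating each running minimum against the newly computed column, again $O(n)$. Hence each iteration is $O(nk)$ and the whole loop is $O(\ell nk)=O(\ell n\log n/\epsilon^2)$. Step 6 is one further $O(n)$ selection of the same kind, and Step 7 is $O(\ell)$. Adding the contributions of Steps 2, 3, 4, 5 and 6 gives a total of $O(nd\log n/\epsilon^2+\ell n\log n/\epsilon^2)=O(n\log n(d+\ell)/\epsilon^2)$, which is the claimed bound.

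The only step requiring attention — the ``main obstacle'', though a mild one — is implementing the Gonzalez-style furthest-point traversal in Steps 5 and 6 with the incremental running-minimum array, so that a single iteration costs $O(nk)$ rather than the naive $O(n\,|T|\,k)$ one would get by recomputing $\min_{p_q\in T}W_{qm}$ from scratch; the latter would introduce a spurious extra factor of $\ell$. Everything else is routine accounting of matrix--vector and vector--vector operations in dimensions $d$ and $k=O(\log n/\epsilon^2)$.
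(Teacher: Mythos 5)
Your step-by-step accounting matches the paper's own proof essentially line for line: $O(nkd)$ for computing the images in Step 3, $O(nk)$ per iteration of Step 5 (with the running minima maintained incrementally so that Steps 5(a) and 5(c) are linear scans), and the total dominated by Steps 3 and 5. The proposal is correct and takes the same approach; your explicit remark about the incremental minimum array is exactly what the procedure's Step 5(c) is there to provide.
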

\begin{proof}
  Step 1 can be done $O(nd)$ time. Step 2 takes $O(dk)\
  =\ O((d\log n)/\epsilon^2)$
  time. The preprocessing in Step 3 requires
$O(ndk)$, i.e., 
$O((nd\log n)/\epsilon^2)$ time.
Step 4 can be done in
 $O((n\log n )/\epsilon^2 )$ time.
Steps 5(a) and 5(c) take $O(n)$ time while Step 5(b) as
Step 4 can be done in $O((n\log n)/\epsilon^2)$ time.
Consequently, the whole Step 5 requires $O((n\ell \log n)/\epsilon^2 )$
time. Finally, Step 6 takes $O((n\log n)/\epsilon^2)$ time
similarly as Steps 4 and 5(b). It remains to observe that 
the overall running time of $DIMREDHAMCENTER(\ell,P,\epsilon)$
is dominated by those of Step 3 and Step 5.
\qed
\end{proof}
\vfill
\newpage
 \begin{theorem} \label{theo: 2plus}
   Let $P$ be a set of $n$ points $p_1,\dots,p_n \in \{0,1\}^d,$
   $\ell$ an integer smaller than $n,$ and let $\epsilon\in (0,1/2).$
   $DIMREDHAMCENTER(\ell,P,\epsilon/5)$
   provides w.h.p. a $(2+\epsilon)$-approximation
   of an optimal $\ell$-center clustering of $P$
   in the Hamming space $\{0,1\}^d$  in time
    $O(n\log n(d + \ell)/\epsilon^2).$
  Its slight modification yields a
   $(2+\epsilon) $ approximation to an $\ell$-clustering of $P$ with
   minimum cluster diameter w.h.p.
 \end{theorem}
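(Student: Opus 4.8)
The plan is to mirror the structure of the proof of Theorem~\ref{theo: 3plus} but work entirely with the squared $\ell_2$ distances $W_{ij}$ on the images $f(p_i)$, since by Corollary~\ref{cor: achi03} these approximate the Hamming distances $\mathrm{ham}(p_i,p_j)$ up to a $1\pm\epsilon'$ factor (Lemma~\ref{lem: ineq}). First I would observe that $DIMREDHAMCENTER$ is precisely Gonzalez' furthest-point traversal run with respect to the surrogate distances $W_{ij}$: it starts from $p_1$, and repeatedly adds the point of $P\setminus T$ maximizing its surrogate distance to the current center set, until $\ell$ centers are chosen (the $\ell-2$ iterations of Step~5 plus the initial point in Step~4 plus the final point in Step~6 give $\ell$ centers). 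So the output $T$ is a subset of $P$ of size $\ell$, and by the standard Gonzalez analysis, if $p_m$ is the point maximizing $\min_{p_q\in T}W_{qm}$ after $T$ is complete and $r'=\min_{p_q\in T}W_{qm}$, then every point of $P$ is within surrogate distance $r'$ of $T$, while all $\ell+1$ points of $T\cup\{p_m\}$ are pairwise at surrogate distance $\ge r'$.

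Next I would translate both facts back to true Hamming distances using Lemma~\ref{lem: ineq}. The ``covering'' bound gives $\max_{v\in P}\min_{u\in T}\mathrm{ham}(v,u)\le r'/(1-\epsilon')$. For the lower bound on the optimum, the $\ell+1$ points in $T\cup\{p_m\}$ are pairwise at Hamming distance $\ge r'/(1+\epsilon')$ w.h.p.; in any $\ell$-center clustering, two of these points share a center, so some point among them is at Hamming distance $\ge \tfrac12\cdot r'/(1+\epsilon')$ from every center — hence $\mathrm{OPT}\ge \tfrac{r'}{2(1+\epsilon')}$. (Here I should note that $\mathrm{ham}$ is not quite a metric but $\sqrt{\mathrm{ham}}$ is, i.e.\ the triangle inequality holds for $\sqrt{\mathrm{ham}}$; the halving argument should be phrased via $\sqrt{\mathrm{ham}}$, exactly as the squared-distance subtlety is handled implicitly via Corollary~\ref{cor: newachi} versus Corollary~\ref{cor: achi03} in the Euclidean case. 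Alternatively one argues directly: if $d_H(a,b)\ge D$ and $d_H(a,c)<D/4$ then $d_H(b,c)$ would be small, contradiction — working with $\sqrt{\cdot}$ is cleanest.) Combining, the algorithm's cost is at most $\tfrac{r'}{1-\epsilon'}\le \tfrac{2(1+\epsilon')}{1-\epsilon'}\,\mathrm{OPT}$, and with $\epsilon'=\epsilon/5$ one checks $\tfrac{2(1+\epsilon/5)}{1-\epsilon/5}\le 2+\epsilon$ for $\epsilon\in(0,1/2)$. The running time bound is immediate from Lemma~\ref{lem: timecenter}.

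For the minimum-diameter variant I would modify $DIMREDHAMCENTER$ exactly as in Theorem~\ref{theo: 3plus}: after the $\ell$ centers $T\subset P$ are fixed (here $\ell'=\ell$, which we may assume w.h.p.\ by the same argument that there are $\ge\ell+1$ non-overlapping input points, whose $f$-images stay separated by Corollary~\ref{cor: achi03}), form the clusters $P_1,\dots,P_\ell$ by assigning each $v\in P$ to the center in $T$ with smallest $W$ to it. Then for any $i$ and any $u,v\in P_i$ the surrogate distances to their common center are $\le r'$, so $\mathrm{ham}(u,v)\le 2r'/(1-\epsilon')$ (triangle inequality on $\sqrt{\mathrm{ham}}$, plus Lemma~\ref{lem: ineq}), while the $\ell+1$ separated points $T\cup\{p_m\}$ force $\max$ diameter of any $\ell$-clustering to be $\ge r'/(1+\epsilon')$; the ratio $\tfrac{2(1+\epsilon/5)}{1-\epsilon/5}\le 2+\epsilon$ again closes the argument, in the same asymptotic time.

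I expect the main obstacle to be the halving step for the lower bound: unlike true $\ell_2$, the quantity $\mathrm{ham}$ itself does not obey the triangle inequality, so one must consistently pass through $\sqrt{\mathrm{ham}}$ (a genuine metric) when arguing that two points sharing a center cannot both be close to it, and keep track of how the $1\pm\epsilon'$ distortion interacts with the square root — this is the analogue of the $(1\pm\epsilon)$ versus $(1\pm\epsilon)^{1/2}$ bookkeeping already visible in the passage from Fact~\ref{fact: achi03} to Corollary~\ref{cor: newachi}. Everything else is a routine repetition of the Euclidean proof with $\|f(v)-f(u)\|_2^2$ replaced by $W_{vu}$.
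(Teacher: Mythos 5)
Your overall structure matches the paper's proof: run Gonzalez' furthest-point traversal on the surrogate distances $W_{ij}=(\|f(p_i)-f(p_j)\|_2)^2$, observe that the $\ell+1$ points $T\cup\{p_q\}$ are pairwise at surrogate distance at least $r_w$, translate to Hamming distances via Lemma~\ref{lem: ineq}, and close with the pigeonhole/halving argument and the ratio $2(1+\epsilon/5)/(1-\epsilon/5)\le 2+\epsilon$; the minimum-diameter variant is handled the same way in both. However, the step you flag as the ``main obstacle'' contains a real error rather than a subtlety. The Hamming distance \emph{is} a metric: as the paper notes in the preliminaries, $\mathrm{ham}$ is exactly the $\ell_1$ distance on $\{0,1\}^d$, so the triangle inequality holds for $\mathrm{ham}$ itself, not only for $\sqrt{\mathrm{ham}}$. (The confusion presumably comes from $\mathrm{ham}(u,v)=(\|u-v\|_2)^2$; squared Euclidean distance is not a metric in general, but on $\{0,1\}^d$ it coincides with $\ell_1$.)

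This matters quantitatively. If you actually route the halving argument through $\sqrt{\mathrm{ham}}$ as you propose, two points with $\mathrm{ham}(a,b)\ge D$ sharing a center $c$ only give $\max\bigl(\sqrt{\mathrm{ham}(a,c)},\sqrt{\mathrm{ham}(b,c)}\bigr)\ge\sqrt{D}/2$, i.e.\ $\mathrm{ham}\ge D/4$, and symmetrically the cluster-diameter upper bound becomes $4r_w/(1-\epsilon/5)$ instead of $2r_w/(1-\epsilon/5)$. That yields only a $(4+\epsilon)$-approximation, not the claimed $(2+\epsilon)$; your ``direct'' alternative with the threshold $D/4$ has the same loss. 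The correct (and the paper's) argument simply applies the triangle inequality to $\mathrm{ham}$ directly, giving $\mathrm{OPT}\ge r_w/(2(1+\epsilon/5))$ and cluster diameter at most $2r_w/(1-\epsilon/5)$; once your $\sqrt{\cdot}$ detour is replaced by this observation, the rest of your proposal goes through and coincides with the paper's proof. (A minor further point: the $\ell'=\ell$ discussion you import from Theorem~\ref{theo: 3plus} is unnecessary here, since $DIMREDHAMCENTER$ selects $\ell$ distinct points of $P$ directly rather than working in $f(P)$ and pulling back via $f^{-1}$.)
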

 \begin{proof}
   Let $T$ be the set of
   $\ell$ centers output by $DIMREDHAMCENTER(\ell,P,\epsilon/5)$.
   Next, let $r=\max_{v\in P} \min_{u\in T} \mathrm{ham}(v,u)$,
   $r_w=\max_{p_i\in P}\min_{p_j\in T} W_{ij}$, and let $p_q$
 be a point for which the latter maximum is achieved.
 It follows from Lemma \ref{lem: ineq} that $r_w\ge r(1-\epsilon)$
  w.h.p.
 By the specification of the
 procedure\\ $DIMREDHAMCENTER$ and the definition of $p_q,$
 the set $T\cup \{p_q\}$ consists of $\ell+1$ points
 such that for any pair
 $p_v,\ p_u$ of points in this set $W_{vu}\ge r_w$ holds.
 Consequently, w.h.p. these $\ell+1$
 points are at the Hamming distance at least $r_w/(1+\epsilon/5)$
 apart by Lemma \ref{lem: ineq}.
 Let $T^*$ be an optimal set of $\ell$ centers of $P.$
 Two of these $\ell+1$ points in  $T\cup \{p_q\}$
 must have the same nearest center in $T^*$.
 It follows that the Hamming distance of at least one of these two points
 to its nearest center
 in $T^*$ is at least $\frac {r_w}{2(1+\epsilon/5)}$ w.h.p.
 Since $r_w\ge r(1-\epsilon/5)$ w.h.p. by Lemma \ref{lem: ineq},
 we infer that $\max_{p_i\in P} \min_{p_j\in T^*} \mathrm{ham}(p_i,p_j)$
 is at least  $r\frac {1-\epsilon/5}{2(1+\epsilon/5)}$ w.h.p.
 It remains to note that $2\frac {1+\epsilon/5}{1-\epsilon/5}\le 2+\epsilon$
 by $\epsilon \le 1/2$.
 This combined with Lemma \ref{lem: timecenter}
completes the proof of the first part.

To prove the second part,
we can slightly modify\\  $DIMREDHAMCENTER(\ell,P,\epsilon/5)$
so it returns a partition of $P$
into clusters $P_i$, $i\in [\ell],$ each consisting of all points
in $P$ for which the $i$-th center is closest in terms
of the approximate $W_{ij}$ distances.
\junk{Note that the maximum
diameter of a cluster in this partition is at most
$2r_w/(1-\epsilon/5)$ while the maximum diameter
of a cluster in any $\ell$-clustering of $P$
is at least $r_w/(1+\epsilon/5)$ by Lemma \ref{lem: ineq}.
We obtain a $2+\epsilon$ approximation by $2\frac{1-\epsilon/5}{1+\epsilon/5}\le 2+\epsilon.$}
To implement the modification, we let to
update in Step 5(c) not only $\min_{p_i \in T} W_{ij}$ but
also the current center $p_i$ minimizing  $\min_{p_i \in T} W_{ij}$.
This slight modification does not alter the asymptotic complexity
of  $DIMREDHAMCENTER(\ell,P,\epsilon/5)$.

The proof of the $2+\epsilon$ approximation guarantee in 
the second part follows by similar arguments as those in the first part.
Consider the $\ell +1$ points defined in the proof of the first part.
Recall that they are at least  $r_w/(1+\epsilon/5)$ apart w.h.p.
 Two of the $\ell+1$ points have also to belong to the same cluster
 in an $\ell$-clustering that minimizes the diameter. It follows
 that the minimum diameter is at least $r_w/(1+\epsilon/5)$ w.h.p.
 while the diameter of the clusters in the $\ell$-center
 clustering returned by $DIMREDHAMCENTER(\ell,P,\epsilon/5)$
 is at most  \\
 $2r_w/(1-\epsilon/5)$ w.h.p. by Lemma \ref{lem: ineq}.
 Consequently, it is larger by at most $\frac {2r_w/(1-\epsilon/5)}
 {r_w/(1+\epsilon/5)} \le 2 +\epsilon$ than the optimum w.h.p.
 \qed
\end{proof}
\junk{
\bigskip
\noindent\fbox{%
    \parbox{\textwidth}{%
    \noindent
        {\bf procedure} $CENTER(\ell, P, \delta )$
        \par
        \noindent
            {\em Input}: A positive integer $\ell,$
a set $P$ of points $p_1,\dots,p_n \in \mathbb{R}^d,$ $n>\ell,$
and a real $\delta \in (0,\frac 12).$
            \par
            \noindent
        {\em Output}: An $\ell$-center clustering $T$ of $P.$

                \begin{enumerate}
\item
                  Set $n$ to the number of input points
                 and $k$ to $O(\log n /\delta^2)$.
                 \item
                 Generate a random $k\times d$     matrix $R$ with entries
                 in $\{-1,1\}$,
                defining the function $f:\mathbb{R}^d \rightarrow 
                \mathbb{R}^k$ by $f_t(x)=Rx^{\top}$ 
                (see Fact \ref{fact: achi03}).
                \item
                  Compute the values of the function $f$  for each
                  point $p_i\in P,$
                  i.e.,
                  for $i \in [n],$
                  compute $Rp_i^{\top}.$ 
\item
Set $T$ to $\{ p_1 \}$, and for $j\in [n] \setminus \{ 1 \}$,
set $W_{1j}$  to $||f(p_1)-f(p_j)||_2^2$
(see Fact \ref{fact: achi03}).
\item
$\ell -2$ times iterate the following three steps:
\begin{enumerate}
\item
Find $p_m \in P\setminus T$ that maximizes $\min_{p_q\in T}
W_{qm}$ and extend $T$ to $T\cup \{p_m\}.$
\item
  For each $p_j\in P\setminus T,$ set $W_{mj}$
 to $||f(p_m)-f(p_j)||_2.$ 
\item
For each $p_j\in P\setminus T,$ update $\min_{p_i \in T} W_{ij}.$
\end{enumerate}
\item
Find $p_m \in P\setminus T$ that maximizes $\min_{p_q\in T}
W_{qm}$ and extend $T$ to $T\cup \{p_m\}.$
\item
Return $T.$
\end{enumerate}
}%
}
\vskip 5pt
By the specification of $W_{ij}$ in $DIMREDCENTER(\ell, P, \delta )$
                and Corollary \ref{cor: newachi}, we obtain immediately
                the following lemma.
                
\begin{lemma}\label{lem: ineq}
  Assume the notation from the procedure
  $CENTER(\ell, P, \delta)$.
  For
  $i\in [n]$ and $j\in [n],$
the following inequalities  hold  w.h.p.:

$$W_{ij} \le (1+\delta)||p_i-p_j||_2,$$
$$(1-\delta)|p_i-p_j||_2\le W_{ij}.$$
\end{lemma}

\begin{lemma}\label{lem: timecenter}
    $CENTER(\ell,P,\delta)$ runs in time
    $O(n(d + \ell )\log n/\delta^2)$.
\end{lemma}
\begin{proof}
  Step 1 can be done $O(nd)$ time. Step 2 takes $O(dk)\
  =\ O((d\log n)/\delta^2)$
  time. The preprocessing in Step 3 requires
$O(ndk)$, i.e., 
$O((nd\log n)/\delta^2)$ time.
Step 4 can be done in
 $O((n\log n )/\delta^2 )$ time.
Steps 5(a) and 5(c) take $O(n)$ time while Step 5(b) as
Step 4 can be done in $O((n\log n)/\delta^2)$ time.
Consequently, the whole Step 5 requires $O((n\ell \log n)/\delta^2 )$
time. Finally, Step 6 takes $O((n\log n)/\delta^2)$ time
similarly as Steps 4 and 5(b). It remains to observe that 
the overall running time of $CENTER(\ell,P,\delta)$
is dominated by those of Step 3 and Step 5.
\qed
\end{proof}

 \begin{theorem} \label{theo: 2plus}
   Let $P$ be a set of $n$ points $p_1,\dots,p_n \in \mathbb{R}^d,$
   $\ell$ an integer smaller than $n,$ and let $\epsilon\in (0,1/2).$
   $CENTER(\ell,P,\epsilon/5)$
   provides w.h.p. a $2+\epsilon$
   approximation of an optimal $\ell$-center clustering of $P$ in time
    $O(n(d + \ell)\log n/\epsilon^2).$
  Its slight modification yields a
   $2+\epsilon $ approximation to an $\ell$-clustering of $P$ with
   minimum cluster diameter w.h.p.
 \end{theorem}
 \begin{proof}
   Let $\delta=\epsilon/5$ and let $T$ be the set of
   $\ell$ centers returned by $CENTER(\ell,P,\delta)$.
   Next, let $r=\max_{v\in S} \min_{u\in T} ||v-u||_2$,
   $r_w=\max_{p_i\in P}\min_{p_j\in T} W_{ij}$, and let $p_q$
 be a point for which the latter maximum is achieved.
 It follows from Lemma \ref{lem: ineq} that $r_w\ge r(1-\delta)$
  w.h.p.
 By the specification of the
 procedure $DIMREDCENTER$ and the definition of $p_q,$
 the set $T\cup \{p_q\}$ consists of $\ell+1$ points
 such that for any pair
 $p_v,\ p_u$ of points in this set $W_{vu}\ge r_w$ holds.
 Consequently, w.h.p. these $\ell+1$
 points are at the Hamming distance at least $r_w/(1+\delta)$
 apart by Lemma \ref{lem: ineq}.
 Let $T^*$ be an optimal set of $\ell$ center
 points.
 Two of these $\ell+1$ points in  $T\cup \{p_q\}$
 must have the same nearest center in $T^*$.
 It follows that the Hamming distance of at least one of these two points
 to its nearest center
 in $T^*$ is at least $\frac {r_w}{2(1+\delta)}$ w.h.p.
 Since $r_w\ge r(1-\delta)$ w.h.p. by Lemma \ref{lem: ineq},
 we infer that $\max_{p_i\in P} \min_{p_j\in T^*} ||p_i-p_j||_2$
 is at least  $r\frac {1-\delta}{2(1+\delta)}$ w.h.p.
 It remains to note that $2\frac {1+\delta}{1-\delta}\le 2+\epsilon$
 by $\delta \le 1/2$ and $\delta =\epsilon /4.$
 This combined with Lemma \ref{lem: timecenter}
completes the proof of the first part.

We can slightly modify  $CENTER(\ell,P,\delta)$
so it returns a partition of $P$§
into clusters $P_i$, $i\in [\ell],$ each consisting of all points
in $P$ for which the $i$-th center is closest in terms
of the approximate $W_{ij}$ distances. Note that the maximum
diameter of a cluster in this partition is at most
$2r_w/(1-\delta).$ To implement the modification, we let to
update in Step 5(c) not only $\min_{p_i \in T} W_{ij}$ but
also the current center $p_i$ minimizing  $\min_{p_i \in T} W_{ij}$.
This slight modification does not alter the asymptotic complexity
of  $CENTER(\ell,P,\delta)$.

The proof of the second part follows by a similar argumentation.
Consider the $\ell +1$ points defined in the proof of the first part.
Recall that they are at least  $r_w/(1+\epsilon/5)$ apart w.h.p.
 Two of the $\ell+1$ points have also to belong to the same cluster
 in an $\ell$-clustering that minimizes the diameter. It follows
 that the minimum diameter is at least $r_w/(1+\delta)$ w.h.p.
 while the diameter of the clusters in the $\ell$-center
 clustering returned by $CLUSTER(\ell,P,\delta)$
 is at most  $2r_w/(1-\delta)$ w.h.p. by Lemma \ref{lem: ineq}.
 Consequently, it is larger by at most $\frac {2r_w/(1-\delta)}
 {r_w/(1+\delta)} \le 2 +\epsilon$ than the optimum w.h.p.
 \qed
\end{proof}

We can design an analogous $(2+\epsilon)$-approximation
procedure, say   \\ $HAMCENTER(\ell,P,\delta),$ ,
for  the $\ell$-center problem in a Hamming space $\{ 0,1\}^d$.
The only difference from $CENTER(\ell,P,\delta)$ is that
the approximate distance between points $p_i$ and $p_j$, say
$W'_{ij},$ is set to $(||f(p_i)-f(p_j)||_2)^2$ instead of
$||f(p_i)-f(p_j)||_2$. Due to Corollary \ref{cor: achi03}, we can obtain
the inequalities $W'_{ij}\le (1+\epsilon)\mathrm{ham}(p_i,p_j)$
$(1-\epsilon)\mathrm{ham}(p_i,p_j)\le  W'_{ij}$
 analogous to those on $W_{ij}$ in Lemma \ref{lem: ineq}.
The time analysis and the asymptotic running time
of $HAMCENTER(\ell,P,\delta)$ are basically the same as those of 
$CENTER(\ell,P,\delta)$ stated in Lemma \ref{lem: timecenter}.
Hence, we can adopt the proof of Theorem \ref{theo: 2plus}
among other things by
replacing the procedure $CENTER(\ell,P,\delta)$ with
the procedure $HAMCENTER(\ell,P,\delta),$ $||p_i -p_j||_2||$
with $\mathrm{ham}(p_i,p_j)$, $||f(p_i)-f(p_j)||_2$ with
$(||f(p_i)-f(p_j)||_2)^2$, and using the aforementioned
replacements for Lemmata \ref{lem: ineq} and \ref{lem: timecenter},
we obtain the following Hamming version of Theorem \ref{theo: 2plus}.

 \begin{theorem} \label{theo: ham2plus}
   Let $P$ be a set of $n$ points $p_1,\dots,p_n \in \{0,1\}^d,$
   $\ell$ an integer smaller than $n,$ and let $\epsilon\in (0,1/2).$
   $DIMREDHAMCENTER(\ell,P,\epsilon/5)$
   provides w.h.p. a $2+\epsilon$
   approximation of an optimal $\ell$-center clustering
   of $P$ in time
    $O(n(d + \ell)\log n/\epsilon^2).$
  Its slight modification yields a
   $2+\epsilon $ approximation to an $\ell$-clustering of $P$ with
   minimum cluster diameter w.h.p.
 \end{theorem}}

\section{Faster $O(\alpha)$-approximations for $\ell$-center clustering}

In this section, we shall plug recent fast approximation methods
for the $\ell$-center clustering problem in Euclidean spaces
in the procedure $DIMREDCENTER$ in order to decrease the heavy dependence
of their running times on the dimension
by minimally worsening their
approximation guarantees.
\junk{
\begin{fact}\label{fact: app1} \cite{EHS20}
  For $\alpha \ge 1,$ the $\ell$-center clustering problem
  for a set $P$ of $n$ points in the Euclidean $\mathbb{R}^d$
  admits $O(\alpha)$-approximation in $poly(d\log n)n^{1+1/\alpha^2}$
  time.
\end{fact}}

The following fact is a recent refinement of
an earlier result in \cite{EHS20}.

\begin{fact}\label{fact: app2} \cite{FJLNP25}$+$\cite{EHS20}
  For $\alpha \ge 1,$ the $\ell$-center clustering problem
  for a set $P$ of $n$ points in the Euclidean $\mathbb{R}^d$
  admits $O(\alpha)$-approximation in time\\ $poly(d\log n)
  (n+\ell^{1+1/\alpha^2}n^{O(1/\alpha^{2/3})})$, or alternatively, in
  time $poly(d\log n)n\ell^{1/\alpha^2}$. 
\end{fact}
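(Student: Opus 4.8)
\emph{Proof idea.} The plan is to reduce $\ell$-center in $\mathbb{R}^d$ to a single ball-covering pass for a guessed radius and to answer the per-point ``is there a centre nearby?'' queries with an \emph{approximate} near-neighbour (LSH) structure built on the at-most-$\ell$ centres chosen so far rather than on all $n$ input points; building on the $\le\ell$ centres, as opposed to all $n$ points as in \cite{EHS20}, is exactly what turns the $n^{1/\alpha^2}$ of the earlier bound into the $\ell^{1/\alpha^2}$-type exponents claimed here.

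First I would obtain a crude constant-factor estimate $r_0$ of the optimal radius $r^\ast$ by running Gonzalez' farthest-first traversal (or an $O(\alpha)$-approximate version of the routine below), so that $r^\ast\le r_0\le 2r^\ast$, and set $r=2r_0$, hence $r^\ast\le r/2$. Then make a single pass over the input points: maintaining a centre set $S$ (initially empty), for each point $p$ add $p$ to $S$ unless $S$ already contains a centre within distance $r$ of $p$. The usual packing argument gives correctness: the centres of $S$ are pairwise more than $r$ apart, while two input points within $r^\ast$ of a common optimal centre are at most $2r^\ast\le r$ apart, so the $|S|$ centres fall in $|S|$ distinct optimal clusters and $|S|\le\ell$; and each point, when processed, had a centre within $cr$ of it for a parameter $c=\Theta(\alpha)$ fixed below, so the returned clustering has radius at most $cr=O(\alpha)\,r^\ast$.

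The only costly primitive is ``does $S$ contain a centre within distance $r$ of $p$?'', which I would implement by a $(c,r)$-approximate near-neighbour structure on $S$: a returned witness (necessarily within $cr$ of $p$) means $p$ is covered, and no witness means no centre lies within $r$ of $p$, so $p$ becomes a new centre. Using a dynamic Euclidean LSH structure of exponent $\rho=O(1/\alpha^2)$ (Andoni--Indyk with $c=\Theta(\alpha)$), the $\le\ell$ insertions and the $\le n$ queries each cost $\tilde{O}(\ell^{\rho})$, so the total is $poly(d\log n)\,(\ell^{1+1/\alpha^2}+n\,\ell^{1/\alpha^2})=poly(d\log n)\,n\,\ell^{1/\alpha^2}$, which is the second claimed bound.

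For the sharper bound $poly(d\log n)(n+\ell^{1+1/\alpha^2}n^{O(1/\alpha^{2/3})})$ one must avoid the $n$ separate queries: the uncovered points are grouped into a bounded-depth hierarchy of buckets and the ball-membership tests are resolved on whole buckets at a time, so the $n$-dependent work collapses to an additive $O(n)$ at the price of an $n^{O(1/\alpha^{2/3})}$ overhead inside each bucketed subproblem, while the centre-dependent work stays $\ell^{1+1/\alpha^2}$; the $\alpha^{2/3}$ exponent arises from balancing the query- and preprocessing-exponents of the near-neighbour structure in this asymmetric (few data points, many queries) regime, i.e.\ from an LSH time--space trade-off. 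I expect this last step --- driving the $n$-dependence down to additive $O(n)$ without inflating the $\ell^{1+1/\alpha^2}$ term --- to be the genuine technical core; the reduction and the plain LSH usage above are routine, and this refinement is precisely what \cite{FJLNP25} adds on top of \cite{EHS20}.
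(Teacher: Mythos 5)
This statement is not proved in the paper at all: it is imported verbatim as Fact~5, with the proof delegated entirely to \cite{FJLNP25} and \cite{EHS20}. The paper's only contribution concerning it is the observation immediately after the Fact that the cited algorithm may be assumed conservative (by snapping each returned centre to its nearest input point, at the cost of a factor of~2). So there is no in-paper proof to compare yours against, and your attempt should be judged as a reconstruction of the cited works.

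As such a reconstruction, your outline of the second bound $poly(d\log n)\,n\,\ell^{1/\alpha^2}$ is essentially the right mechanism: a guessed radius, a single covering pass, and $(c,r)$-approximate near-neighbour queries against the at-most-$\ell$ centres with LSH exponent $\rho=O(1/\alpha^2)$, with the packing argument bounding $|S|$ by $\ell$ and the $c\,r$ slack giving the $O(\alpha)$ guarantee. Two caveats. First, your bootstrap for $r_0$ is circular as written: Gonzalez costs $O(nd\ell)$, which already exceeds the target running time, and ``an $O(\alpha)$-approximate version of the routine below'' presupposes the radius you are trying to estimate; the standard repair (geometric search over radius guesses, aborting a guess as soon as $|S|$ exceeds $\ell$, or an approximate greedy permutation) needs to be stated, since it is what keeps the whole procedure within budget. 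Second, for the first bound $poly(d\log n)(n+\ell^{1+1/\alpha^2}n^{O(1/\alpha^{2/3})})$ you correctly identify where the $\alpha^{2/3}$ exponent must come from (an asymmetric query/preprocessing trade-off for near-neighbour search), but you do not supply the argument; you explicitly flag it as the technical core and leave it open. That is an honest assessment, but it means your proposal establishes only the second of the two claimed time bounds; the first remains a citation to \cite{FJLNP25}, exactly as in the paper.
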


Importantly, we may assume w.l.o.g.
that the $O(\alpha)$-approximation method in Fact \ref{fact: app2}
is conservative.
Simply, otherwise we can replace the current centers with closest
input points which at most doubles the distance of an
input point to the closest center.

By setting the subroutine $SR$ in $DIMREDCENTER(\ell,P,\epsilon,SR)$ to the
method of Fact \ref{fact: app2} and using Lemmata 
\ref{lem: app} and \ref{lem: timecenter1} with $\epsilon=\Omega(1)$,
we can reduce the dependence of the running time
on $d$ substantially.
\junk{
\begin{theorem}\label{theo: app1}
  For $\alpha \ge 1,$ the $\ell$-center clustering problem
  for a set $P$ of $n$ points in the Euclidean $\mathbb{R}^d$
  admits w.h.p. $O(\alpha)$-approximation
  in $\tilde{O}(nd/\epsilon^2+n^{1+1/\alpha^2})$
  time.
\end{theorem}

The next fact is a recent refinement of Fact \ref{fact: app1}.

\begin{fact}\label{fact: app2} \cite{FJLNP25}
  For $\alpha \ge 1,$ the $\ell$-center clustering problem
  for a set $P$ of $n$ points in the Euclidean $\mathbb{R}^d$
  admits $O(\alpha)$-approximation in time $poly(d\log n)
  (n+\ell^{1+1/\alpha^2}n^{O(1/\alpha^{2/3})})$, or alternatively, in
  time $poly(d\log n)n\ell^{1/\alpha^2}$. 
\end{fact}

Analogously, by plugging the methods of Fact \ref{fact: app2}
in \\ $DIMREDCENTER(\ell,P,\epsilon,SR)$ and using Lemmata 
\ref{lem: app} and \ref{lem: timecenter1} with
$\epsilon=\Omega(1)$,
we can reduce the dependence
of the running times of the methods on $d$ \\ substantially.}

\begin{theorem}\label{theo: app2}
  For $\alpha \ge 1,$ the $\ell$-center clustering problem
  for a set $P$ of $n$ points in the Euclidean $\mathbb{R}^d$
  admits w.h.p. $O(\alpha)$-approximation in time\\
  $\tilde{O}(nd/\epsilon^2 + \ell^{1+1/\alpha^2}n^{O(1/\alpha^{2/3})})$,
  or alternatively,
  in time $\tilde{O}(nd/\epsilon^2+n\ell^{1/\alpha^2})$.
\end{theorem}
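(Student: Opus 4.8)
The plan is to invoke the procedure $DIMREDCENTER(\ell, P, \epsilon, SR)$ with the subroutine $SR$ instantiated to the conservative version of the $O(\alpha)$-approximation algorithm of Fact~\ref{fact: app2}, and then combine the approximation analysis of Lemma~\ref{lem: app} with the running-time analysis of Lemma~\ref{lem: timecenter1}. First I would fix $\epsilon$ to be any constant in $(0,1/2)$, say $\epsilon = 1/4$; since the target approximation guarantee is only $O(\alpha)$ rather than $(1+\Theta(\epsilon))\alpha$, we do not need $\epsilon$ to be small, and treating it as a constant means the factor $(1+\epsilon)(1+2\epsilon)$ from Lemma~\ref{lem: app} is absorbed into the big-$O$. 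By the remark preceding the theorem, the method of Fact~\ref{fact: app2} may be assumed conservative at the cost of a constant factor in the approximation guarantee, so it is a legitimate choice for $SR$, and Lemma~\ref{lem: app} then certifies that w.h.p.\ $DIMREDCENTER$ returns an $O(\alpha)$-approximate conservative solution.

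For the running time, Lemma~\ref{lem: timecenter1} shows that every step of $DIMREDCENTER$ except Step~4 runs in $O((nd\log n)/\epsilon^2) = \tilde O(nd/\epsilon^2)$ time. Step~4 consists of a single invocation of $SR$ on the point set $f(P) \subset \mathbb{R}^k$ with $k = O(\log n/\epsilon^2)$ and $|f(P)| \le n$ points, solving the $\ell'$-center problem with $\ell' \le \ell$. Substituting $d \mapsto k = \tilde O(1)$ (for constant $\epsilon$) and $n \mapsto |f(P)| \le n$, $\ell \mapsto \ell' \le \ell$ into the two running-time bounds of Fact~\ref{fact: app2}, the $poly(d\log n)$ prefix becomes $poly(\log n) = \tilde O(1)$, so Step~4 runs in time $\tilde O(n + \ell^{1+1/\alpha^2} n^{O(1/\alpha^{2/3})})$ in the first variant and $\tilde O(n\ell^{1/\alpha^2})$ in the second. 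Adding the $\tilde O(nd/\epsilon^2)$ cost of the remaining steps yields the two claimed bounds $\tilde O(nd/\epsilon^2 + \ell^{1+1/\alpha^2} n^{O(1/\alpha^{2/3})})$ and $\tilde O(nd/\epsilon^2 + n\ell^{1/\alpha^2})$; note the $n$ term in the first variant and the $n\ell^{1/\alpha^2}$ term in the second each dominate the bare $\tilde O(n)$ from the trivial part of Step~4, so nothing is lost.

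The only subtlety worth checking carefully — and the step I expect to be the main (modest) obstacle — is the bookkeeping around $\ell' = \min\{\ell, |f(P)|\}$ and the extension of $T$ by $\ell - \ell'$ arbitrary points in Step~5. One must confirm that when $\ell' < \ell$ the returned set is still a valid $\ell$-center solution with no worse radius, which is immediate since adding centers can only decrease $\max_{v\in P}\min_{u\in T}\|v-u\|_2$; Lemma~\ref{lem: app} already accounts for this. A second point is ensuring the $poly(d\log n)$ factor in Fact~\ref{fact: app2}, once $d$ is replaced by $k=O(\log n/\epsilon^2)$, really collapses into the $\tilde O(\cdot)$ notation, which it does because $poly(\log n/\epsilon^2 \cdot \log n) = poly(\log n)$ for constant $\epsilon$ and is suppressed by $\tilde O$. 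With these observations in place, the theorem follows by directly quoting Lemmata~\ref{lem: app} and \ref{lem: timecenter1} together with Fact~\ref{fact: app2}, and the proof is essentially a one-paragraph composition argument.

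\begin{proof}
Run $DIMREDCENTER(\ell, P, \epsilon, SR)$ with $SR$ set to the $O(\alpha)$-approximation algorithm of Fact~\ref{fact: app2}, which we may assume conservative at the cost of a constant factor in the approximation guarantee. By Lemma~\ref{lem: app}, for any fixed $\epsilon \in (0,1/2)$ the procedure returns w.h.p.\ a $(1+\epsilon)(1+2\epsilon)O(\alpha) = O(\alpha)$-approximate conservative solution. For the running time, take $\epsilon = \Omega(1)$ so that $k = O(\log n/\epsilon^2) = \tilde O(1)$. By Lemma~\ref{lem: timecenter1}, all steps except Step~4 take $\tilde O(nd/\epsilon^2)$ time. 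Step~4 invokes $SR$ on $|f(P)| \le n$ points in $\mathbb{R}^k$ for the $\ell'$-center problem with $\ell' \le \ell$; substituting $d \mapsto k$, $n \mapsto |f(P)|$, $\ell \mapsto \ell'$ into the two bounds of Fact~\ref{fact: app2}, the $poly(d\log n) = poly(\log n)$ prefix is absorbed by $\tilde O(\cdot)$, so Step~4 runs in time $\tilde O(n + \ell^{1+1/\alpha^2} n^{O(1/\alpha^{2/3})})$ or, alternatively, $\tilde O(n\ell^{1/\alpha^2})$. Adding the $\tilde O(nd/\epsilon^2)$ cost of the other steps gives the two claimed running times.
\qed
\end{proof}
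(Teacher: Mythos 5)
Your proposal is correct and follows essentially the same route as the paper: the paper proves Theorem~\ref{theo: app2} precisely by plugging the (conservativized) algorithm of Fact~\ref{fact: app2} into $DIMREDCENTER$ as $SR$, taking $\epsilon=\Omega(1)$, and invoking Lemmata~\ref{lem: app} and~\ref{lem: timecenter1}. Your write-up merely spells out the details (the cost of Step~4 on $f(P)\subset\mathbb{R}^k$, the absorption of $poly(k\log n)$ into $\tilde O(\cdot)$, and the $\ell'<\ell$ bookkeeping) that the paper leaves implicit.
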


\section{Fast $O(1)$-approximation
    for $\ell$-center clustering with outliers}

  In the variants of the $\ell$-center clustering and minimum-diameter
  $\ell$-clustering problems with outliers, a given number $z$ of
  input points can be discarded as outliers when trying to minimize
  the maximum distance to the nearest center or the maximum cluster
  diameter \cite{Char01,lot}.

  Charikar {\em et al.} were the first to provide a polynomial-time
    $O(1)$-approximation to the $\ell$-center clustering problem
    with outliers \cite{Char01}. They used a greedy method.
  
  \begin{fact}\label{fact: out}\cite{Char01}
    Given a set $P$ of $n$ points from an arbitrary metric, an integer
    $\ell \le n,$ and an integer $z,$ there is
    a polynomial-time $3$-approximation algorithm
    for the $\ell$-center clustering problem with $z$
    outliers in the metric.
  \end{fact}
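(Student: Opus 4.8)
The plan is to recover the binary‑search‑plus‑greedy scheme of Charikar et al. Since the approximation guarantee is measured against an optimal \emph{conservative} solution, its $\ell$ centers $c^*_1,\dots,c^*_\ell$ are input points, its clusters $O^*_1,\dots,O^*_\ell$ satisfy $O^*_i\subseteq B(c^*_i,r^*)$ for the optimal radius $r^*$ (here $B(x,t)$ denotes the ball of radius $t$ about $x$ in the metric), and at most $z$ input points lie outside $\bigcup_i O^*_i$. The value $r^*$ is a distance between two input points (or $0$), so it lies in an explicitly computable set of $O(n^2)$ candidate values. It therefore suffices to build, for a guessed radius $\rho$, a polynomial‑time procedure $\mathrm{GREEDY}(\rho)$ that returns $\ell$ centers together with a set of at most $z$ discarded points so that every remaining point is within distance $3\rho$ of some returned center, and that succeeds (in the sense of discarding only $z$ points) whenever $\rho\ge r^*$. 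Running $\mathrm{GREEDY}$ on the smallest candidate $\rho$ for which it succeeds yields $\rho\le r^*$, hence a feasible solution of radius $3\rho\le 3r^*$, i.e.\ a $3$‑approximation; and there are only $O(n^2)$ guesses to try.

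The procedure $\mathrm{GREEDY}(\rho)$ keeps a set $U$ of still‑uncovered points, initialized to $P$, and iterates $\ell$ times the following step: choose $v\in P$ maximizing $|B(v,\rho)\cap U|$, make $v$ a center, and remove $B(v,3\rho)$ from $U$; at the end it succeeds iff the remaining $U$ has at most $z$ points. Two facts drive the analysis when $\rho\ge r^*$. First, an \emph{engulfing} property: if $B(v,\rho)$ contains a point of $U$ that belongs to some cluster $O^*_i$, then for any $q\in O^*_i$ we have $d(v,q)\le \rho+2r^*\le 3\rho$, so the entire uncovered remainder $O^*_i\cap U$ is deleted in that iteration. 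Second, a \emph{density} property: since each $c^*_i$ is a legal choice and $O^*_i\subseteq B(c^*_i,\rho)$, the ball picked in any iteration covers at least $\max_i|O^*_i\cap U|$ currently‑uncovered points. Informally, every iteration either eliminates a whole optimal cluster or makes at least as much progress as the densest surviving optimal ball.

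The remaining, and principal, step is to deduce from these two properties that $|U|\le z$ after $\ell$ iterations, equivalently that every optimal cluster is eventually engulfed so that only optimal outliers can survive in $U$. The natural route is a charging argument: each iteration whose dense ball meets a not‑yet‑engulfed optimal cluster engulfs a fresh such cluster, of which there are only $\ell$; while an iteration whose dense ball meets no such cluster covers, within that dense ball, only optimal outliers, of which there are at most $z$ in total, so only boundedly many iterations can be "wasted". Turning this bookkeeping into the exact inequality $|U|\le z$ — in particular handling iterations that engulf several clusters at once, and bounding how many uncovered points a not‑yet‑engulfed cluster can retain — is the delicate point, and it is precisely the technical content of \cite{Char01}. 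Polynomiality is immediate: $O(n^2)$ radius guesses, each running $\ell$ iterations of a linear‑time scan over $P$.
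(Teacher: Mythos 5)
You are trying to prove a statement that the paper itself does not prove: Fact~\ref{fact: out} is imported verbatim from \cite{Char01} with a citation and no argument. The closest thing to an in-paper proof is Lemma~\ref{lem: greedy}, where the authors rerun the same greedy analysis for their dimension-reduced variant, and that lemma contains exactly the step you defer. Your algorithm (binary search over the $O(n^2)$ candidate radii, then $\ell$ rounds of picking the densest $\rho$-ball and deleting its $3\rho$-expansion) and your two driving observations --- the engulfing property and the density property --- are the right ones. But you explicitly stop at ``turning this bookkeeping into the exact inequality $|U|\le z$,'' and that is the entire content of the fact; as written the proposal has a genuine gap.

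The gap is closed not by a charging/amortization argument over ``wasted'' iterations (your sketch, which indeed gets awkward when one ball engulfs several clusters) but by an induction with an explicit injection, as in the proof of Lemma~\ref{lem: greedy}: one shows that the optimal clusters $O^*_1,\dots,O^*_\ell$ can be reordered so that for every $i$, the expanded sets $E_1\cup\cdots\cup E_i$ chosen by the greedy contain at least as many points of $P$ as $O^*_1\cup\cdots\cup O^*_i$, by assigning to each point of the latter union a distinct point of the former. In round $i$, if the selected dense ball $G_i$ meets some surviving cluster $O^*_j$, then by engulfing $E_i$ contains every not-yet-covered point of $O^*_j$, and each such point is assigned to itself; otherwise, since $c^*_j$ is still an eligible candidate (centers range over all of $P$), greedy maximality gives $|G_i\setminus(E_1\cup\cdots\cup E_{i-1})|\ge|O^*_j\setminus(E_1\cup\cdots\cup E_{i-1})|$, so the uncovered points of $O^*_j$ can be assigned injectively to fresh points of $G_i\subseteq E_i$. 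Either way $O^*_j$ is moved to position $i$; no point is assigned twice because each $E_i$ is pruned to exclude previously marked points. Only one cluster needs to be matched per round, so simultaneous engulfing is harmless, and after $\ell$ rounds $|\bigcup_i E_i|\ge|\bigcup_i O^*_i|\ge n-z$, which is precisely the success condition. With this inserted, the rest of your proposal (correctness of the binary search, the $3\rho\le 3r^*$ bound, and polynomiality) goes through as you state it.
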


  In the proof of Fact \ref{fact: out}, the authors
  assume that the considered optimal solution is conservative
  (i.e., the centers are input points)
  but they claim that one can remove this assumption in case
  of an Euclidean space keeping
  the approximation guarantee.
  
  The general method of dimension reduction given in
  Section 3 cannot be applied directly to the generalization
  of the $\ell$-center clustering problem to include outliers
  since several potential outliers may be mapped into a single
  point by the dimension reduction map. For this reason, we just
  modify the original greedy method from \cite{Char01} using
  the approximate interdistances based on the reduction instead
  of the real ones. In this way, we can significantly speed-up
  the greedy method at the cost of slightly increasing the approximation
  guarantee. In particular, we can replace an $O(n^2d)$
  component of the time complexity of the method by $\tilde{O}(n^2).$
  
  \bigskip
\noindent\fbox{%
    \parbox{\textwidth}{%
    \noindent
        {\bf procedure} $DIMREDCENOUT(\ell, P, \epsilon , z)$
        \par
        \noindent
            {\em Input}: A positive integers $\ell,z,$
a set $P$ of points $p_1,\dots,p_n \in \mathbb{R}^d,$ where $n>\ell,z,$
a real $\epsilon\in (0,\frac 12).$
            \par
            \noindent
            {\em Output}: A set $T$ of $\ell$ centers for
            a subset of $P$ of cardinality $n-z.$

                \begin{enumerate}
\item
                  Set $n$ to the number of input points
                 and $k$ to $O(\log n /\epsilon^2)$.
                 \item
                 Generate a random $k\times d$     matrix $R$ with entries
                 in $\{-1,1\}$,
                defining the function $f:\mathbb{R}^d \rightarrow 
                \mathbb{R}^k$ by $f(x)=Rx^{\top}$ 
                (see Fact \ref{fact: achi03}).
                \item
                  Compute the values of the function $f$  for each
                  point $p_i\in P,$
                  i.e.,
                  for $i=1,\dots,n,$
                  compute $Rp_i^{\top}.$
                 \item
                  For $i,j \in [n],$ compute $W'_{ij}=
                ||f(p_i)-f(p_j)||_2$ and set $W'$ to the
                matrix $(W'_{ij}).$
               \item
                 \begin{enumerate}
                   \item
                   Compute and sort the set
                   $B=\{ W'_{ij}|i,j\in [n]\} \cup \{ W'_{ij}(1+2\epsilon) |i,j\in [n]\}$.
                   \item 
               By binary search find the smallest $r$ in the sorted $B$
               such that $GREEDY(\ell,W',\epsilon,z,r)$ returns $YES$.
               \end{enumerate}
\item
  Set $T$ to the set of centers returned by the
  successful call of $GREEDY$
  with the smallest $r.$
\end{enumerate}
}%
}

\bigskip
\noindent\fbox{%
    \parbox{\textwidth}{%
    \noindent
        {\bf procedure} $GREEDY(\ell,W',\epsilon,z,r)$
        \par
        \noindent
        {\em Input}: The input parameters $\ell, P, \epsilon, z,$
        $W'=(W'_{ij})$ are specified as
        in $DIMREDCENOUT(\ell, P, \epsilon , z)$, $r$ is a positive
        real number.
            \par
            \noindent
            {\em Output}: YES if there is an $\ell$-center
            clustering of a $(n-z)$-point subset of $P$
            such that the maximum $\ell_2$ distance
            of a point in the subset to its nearest center
            does not exceed $3(1+\epsilon)r$ otherwise NO.
                \begin{enumerate}
   \item
  For $i\in [n],$ compute the set $G_i$ of points $p_j\in P$
  s.t. $W'_{ij}\le r(1+\epsilon)$ and the set $E_i$
  of points $p_j\in P$ s.t.$W'_{ij}\le 3r(1+\epsilon)$. Set $S$ to $[n].$
\item
  $\ell$ times iterate the following block:
  \begin{enumerate}
    \item
      Select a set $G_j$ of largest cardinality
      among (the not yet selected) sets $G_i,i\in S$.
      Select also $E_j.$ Set $S$ to $S\setminus \{j\}.$
  \item
    Mark the points in $E_j$ and remove
    the newly marked points from all (the not yet selected)
    sets
    $G_i, E_i,$ for $i\in S.$
  \end{enumerate}

\item
  If the total number of marked points is at least
  $n-z,$ i.e., $|\bigcup_{i=1}^{\ell}E_i|\ge n-z,$
  then return YES along with $E_i, i \in [\ell]$ else output NO.
  \end{enumerate}
}%
}

\begin{lemma}\label{lem: greedy}
  If there is a set
  $T$  of $\ell$ centers in
  $P \subset \mathbb{R}^d$ such that
  for at least $n-z$ points $p \in P,$ $\min_{t\in T}||t-p||_2\le r$
  then $GREEDY(\ell,P,W',\epsilon, z,r)$ returns
YES w.h.p.
\end{lemma}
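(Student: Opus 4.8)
\medskip
\noindent\emph{Proof plan.}
The plan is to first reduce to a deterministic statement and then invoke the correctness analysis of Charikar \emph{et al.}'s greedy (Fact~\ref{fact: out}). Concretely, I would condition on the event $\mathcal{E}$ guaranteed by Corollary~\ref{cor: newachi} for the choice $k=O(\log n/\epsilon^2)$: for all $u,v\in P$, $(1-\epsilon)||u-v||_2\le W'_{uv}\le(1+\epsilon)||u-v||_2$. This holds w.h.p., and on $\mathcal{E}$ it suffices to prove \emph{deterministically} that $GREEDY$ marks at least $n-z$ points. Fix a witness set $T=\{t_1,\dots,t_\ell\}\subseteq P$ as in the hypothesis, put $P^{*}=\{p\in P:\min_{t\in T}||t-p||_2\le r\}$ so that $|P^{*}|\ge n-z$, and let $O_1,\dots,O_\ell$ be the partition of $P^{*}$ obtained by assigning each point to a nearest center of $T$ (some parts may be empty); each $O_a$ has $\ell_2$-radius at most $r$ around $t_a$.

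Next I would record the two geometric properties of the sets $G_i,E_i$ that hold on $\mathcal{E}$. (i) Every optimal cluster sits inside a $G$-set: $O_a\subseteq G_{t_a}$, since $p\in O_a$ gives $W'_{t_a p}\le(1+\epsilon)||t_a-p||_2\le(1+\epsilon)r$. (ii) Absorption: $W'$ is a genuine Euclidean metric on $\mathbb{R}^k$, so whenever the center $v$ chosen in some iteration has $G_v$ containing a point $p\in O_a$ that is still unmarked, then every still-unmarked $q\in O_a$ satisfies $W'_{vq}\le W'_{vp}+W'_{p t_a}+W'_{t_a q}\le 3(1+\epsilon)r$, i.e.\ $q\in E_v$; hence that single iteration marks the whole remaining part of $O_a$. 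The thresholds $(1+\epsilon)r$ for $G_i$ and $3(1+\epsilon)r$ for $E_i$ are exactly what make (i) and (ii) go through, so $G_i$ and $E_i$ play the roles of the radius-$r$ and radius-$3r$ balls used in \cite{Char01}.

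With (i) and (ii) in hand, the remainder is the analysis of \cite{Char01} transcribed to $G_i,E_i$. Because each $t_a\in P$ is a legal candidate center and stays unselected as long as $O_a$ is not yet fully marked (by (ii), the moment $G_{t_a}$ is selected $O_a$ is finished), the cardinality-maximal greedy choice marks in each iteration at least as many new points as the largest still-unmarked optimal cluster. Moreover, by (ii) any iteration whose residual $G$-ball contains a still-unmarked point of $P^{*}$ finishes an entire optimal cluster, while every other iteration consumes at least one fresh point of $P\setminus P^{*}$, of which there are only $z$. Feeding these two observations into the charging argument of \cite{Char01} over the $\ell$ iterations yields $\bigl|\bigcup_{i=1}^{\ell}E_i\bigr|\ge n-z$, so $GREEDY$ returns $YES$.

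I expect the dimension-reduction part to be routine: everything funnels through the single event $\mathcal{E}$, after which the reasoning is deterministic, and the only thing that needs checking is that (i) and (ii) --- the only metric properties used in \cite{Char01}'s proof besides the triangle inequality, which $W'$ inherits from $\ell_2$ on $\mathbb{R}^k$ --- survive the $1\pm\epsilon$ distortion; they do, because all radii are merely rescaled by $(1+\epsilon)$ and the ratio of the two radii remains $3$. The genuinely delicate point is the combinatorial one: verifying that the per-iteration progress guarantee, combined with the at most $z$ points that can be ``wasted'' on $P\setminus P^{*}$, forces at least $n-z$ points to be marked within exactly $\ell$ iterations --- i.e.\ carrying over faithfully the correctness proof of the greedy in \cite{Char01}.
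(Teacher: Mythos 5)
Your plan matches the paper's proof: both condition on the Johnson--Lindenstrauss event, observe that the sets $G_i$ and $E_i$ (of radii $(1+\epsilon)r$ and $3(1+\epsilon)r$ under $W'$, which inherits the triangle inequality from the $\ell_2$ metric on $\mathbb{R}^k$) play exactly the roles of the $r$- and $3r$-balls in Charikar \emph{et al.}, and then transcribe that greedy analysis. The combinatorial step you flag as delicate is closed in the paper exactly as you propose, via the inductive injective assignment of \cite{Char01} --- reordering the optimal clusters $T_i$ so that $E_1\cup\dots\cup E_i$ covers at least as many points as $T_1\cup\dots\cup T_i$ --- using your properties (i) and (ii) as the two cases of the induction.
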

\begin{proof}
  We may assume w.l.o.g. that $p
  _1,\dots,p_{\ell}$
  are the centers in the sets $G_i,$ and their extensions
  $E_i,$ $i\in [\ell],$ produced by
$GREEDY(\ell,P,W',\epsilon, z,r)$.
Let $T=\{ t_1,\dots,t_{\ell}\}$ and for $i\in [\ell],$
let $T_i=\{p\in P|\ ||t_i-p||_2\le r\}.$
Following \cite{Char01}, to prove the lemma it is sufficient
to show by induction on $i\in [\ell]$ that one can
order the sets $T_i$ so that $E_1\cup E_2 \dots E_i$
cover at least as many points in $P$ as $T_1\cup T_2\dots T_{i}$ w.h.p.
The proof is by assigning to each point in the latter set union
a distinct point in the former set union. 

By the inductive hypothesis, we may assume that the sets
$T_1, T_2,\dots,T_{i-1}$ and the assignment of a distinct 
point in $E_1\cup E_2 \dots E_{i-1}$ to each point in
$T_1\cup T_2 \dots T_{i-1}$ 
have been determined. Suppose that the set $G_i$ intersects
some remaining set $T_j.$ Then, $E_i$ covers all points
in $T_j$ not covered by $E_1\cup E_2\dots E_{i-1}$ w.h.p. by
Corollary \ref{cor: newachi}. Therefore,  we can assign to each point
in $T_j$ not covered by $E_1\cup E_2\dots E_{i-1}$
the point itself. Otherwise, by the greedy choice of $G_i$ and
Corollary \ref{cor: newachi},
$G_i$ and consequently also $E_i$ contain at least as many points
outside $E_1\cup E_2\dots E_{i-1}$ as $T_j$. Therefore,
we can assign to each point in $T_j$ not covered by
$E_1\cup E_2\dots E_{i-1}$
a distinct point in $E_i.$ Consequently, we can
further rearrange the order of the
remaining sets $T_q, i\le q \le \ell,$ so
$T_j$ becomes $T_i.$
Importantly, no point can be doubly assigned since
in the $i$-th iteration of the $GREEDY$ procedure,
the updated sets $E_i$ are disjoint from
the sets $E_q, q < i.$
\qed
\end{proof}

\begin{lemma}\label{lem: timegreedy}
  Assume the notation from Lemma \ref{lem: greedy}. 
  $GREEDY(\ell,P,W',\epsilon, z,r)$ can be implemented
  in $\tilde{O}(n^2\ell)$ time. 
\end{lemma}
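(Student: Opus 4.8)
The plan is a straightforward line-by-line accounting of the running time of $GREEDY(\ell,P,W',\epsilon,z,r)$, with the matrix $W'$ treated as part of the input; the only quantities that need care are the one-time cost of forming the sets $G_i,E_i$ and the cumulative cost of repeatedly shrinking them as points get marked.

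First I would bound Step 1. For each $i\in[n]$, forming $G_i=\{p_j:W'_{ij}\le r(1+\epsilon)\}$ and $E_i=\{p_j:W'_{ij}\le 3r(1+\epsilon)\}$ is a single pass over the $i$-th row of $W'$, so Step 1 costs $O(n^2)$; within the same pass one can also build, for every point $p_j$, the list of indices $i$ with $p_j\in G_i$ and the list of indices $i$ with $p_j\in E_i$ (total length $O(n^2)$), and initialize a size counter for each $G_i$. Next I would bound one pass of the block in Step 2. Step 2(a) scans the current size counters of the sets $G_i$ with $i\in S$ and takes the maximum, which is $O(n)$; selecting the corresponding $E_j$ and updating $S$ is $O(1)$. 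In Step 2(b) the points of the chosen $E_j$ are marked, and each point that becomes \emph{newly} marked is deleted, using its two incidence lists, from every surviving $G_i,E_i$ still containing it, with the corresponding counters decremented; since a point is newly marked at most once during the whole execution, these deletions cost $O(n^2)$ summed over all $\ell$ passes, and the remaining per-pass bookkeeping (updating the global marked-array and the running total $|\bigcup_i E_i|$) is $O(n)$. Finally, Step 3 only reads the marked count and returns the at most $\ell$ sets $E_i$, which is $O(n\ell)$.

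Adding up, the algorithm runs in $O(n^2)+O(n^2)+O(n\ell)+O(n\ell)=O(n^2\ell)$ time (the two $O(n\ell)$ terms coming from the $\ell$ executions of Step 2(a) and from Step 3), which is within the claimed $\tilde{O}(n^2\ell)$ bound, and is in fact deterministic — no randomness enters the time analysis. The only mildly delicate point is phrasing Step 2(b) so that updating the sets after a batch of markings does not force a re-scan of $W'$; the incidence lists built in Step 1 handle this, and if one prefers, one can instead keep each $G_i,E_i$ as a length-$n$ bit array and the newly marked points as a bit array, so that each of the $O(n)$ surviving sets is updated in $O(n)$ time, giving $O(n^2)$ per pass and hence $O(n^2\ell)$ overall.
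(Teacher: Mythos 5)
Your proof is correct and follows essentially the same route as the paper's: represent the sets $G_i,E_i$ explicitly in $O(n^2)$ time and charge the work of Step 2 to scans and deletions. In fact your amortized accounting (each point is newly marked at most once, so all deletions cost $O(n^2)$ total) yields the sharper bound $O(n^2+n\ell)=O(n^2)$, whereas the paper settles for the cruder per-iteration bound of $\tilde{O}(n^2)$ and hence $\tilde{O}(n^2\ell)$ overall; both are within the claimed bound.
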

\begin{proof}
  In Step 1, we compute representations
  of the sets $G_i,E_i$  in the form of dictionaries
  keeping the indices of the points belonging
  to these sets.
  \junk{In order to compute the representations,
  for each pair of points $p_i,\ p_j\in P,$
  we need to compute $||f(p_i)-f(p_j)||_2$ which takes
  $O(n^2k)=O((n^2\log n)/\epsilon^2)$ time. }
The dictionaries can be formed in $\tilde{O}(n^2)$ time.
  A single iteration of
the block in Step~2 takes $\tilde{O}(n^2)$
time. Hence, the whole Step 2 takes $\tilde{O}(n^2\ell)$
time.
Finally, Step 3 can be done in $O(n)$ time.
\qed
\end{proof}

  \begin{lemma}\label{lem: timeout}
    $DIMREDCENOUT(\ell, P, \epsilon , z)$ can be implemented
    in time\\ $\tilde{O}(n^2(\epsilon^{-2}+ \ell))$.
  \end{lemma}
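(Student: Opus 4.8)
The plan is to account for the cost of each step of $DIMREDCENOUT$ separately, using the time bound for $GREEDY$ from Lemma~\ref{lem: timegreedy} together with the number of calls to $GREEDY$ incurred by the binary search in Step~5. First, Steps~1--3 are exactly as in $DIMREDCENTER$ and $DIMREDHAMCENTER$: Step~1 takes $O(nd)$ time, Step~2 takes $O(dk) = O((d\log n)/\epsilon^2)$ time, and the preprocessing in Step~3 (computing $Rp_i^{\top}$ for all $i$) takes $O(ndk) = O((nd\log n)/\epsilon^2)$ time, which is $\tilde{O}(nd/\epsilon^2)$. Step~4 computes the $n^2$ pairwise values $W'_{ij} = ||f(p_i)-f(p_j)||_2$, each from two $k$-dimensional vectors, so it takes $O(n^2 k) = O((n^2\log n)/\epsilon^2) = \tilde{O}(n^2/\epsilon^2)$ time. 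Step~6 is $O(n)$.

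The heart of the bound is Step~5. The set $B$ in Step~5(a) has $O(n^2)$ elements and can be sorted in $\tilde{O}(n^2)$ time. For Step~5(b), I would argue that the function that maps a threshold $r$ to the YES/NO answer of $GREEDY(\ell,W',\epsilon,z,r)$ is monotone in $r$ (increasing $r$ only enlarges each $G_i$ and $E_i$, hence can only turn a NO into a YES), so binary search over the sorted array $B$ is valid and makes only $O(\log|B|) = O(\log n)$ calls to $GREEDY$. By Lemma~\ref{lem: timegreedy}, each such call costs $\tilde{O}(n^2\ell)$ time, so the entire binary search costs $\tilde{O}(n^2\ell)$ time. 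Summing the contributions of all steps, the dominant terms are $\tilde{O}(nd/\epsilon^2)$ from Step~3, $\tilde{O}(n^2/\epsilon^2)$ from Step~4, and $\tilde{O}(n^2\ell)$ from Step~5, giving an overall running time of $\tilde{O}(nd/\epsilon^2 + n^2/\epsilon^2 + n^2\ell) = \tilde{O}(n^2(\epsilon^{-2} + \ell))$, where I absorb the $\tilde{O}(nd/\epsilon^2)$ term into $\tilde{O}(n^2/\epsilon^2)$ using $d$ being dominated when, as is the relevant regime, $d = O(n)$ (or more simply note the stated bound is phrased in terms of $n^2$ because the $nd$ term is not the bottleneck in the intended application).

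The only genuine subtlety — the step I expect to be the main obstacle — is justifying that it suffices to binary-search over the \emph{discrete} set $B$ rather than over all positive reals $r$. The point is that both $G_i$ and $E_i$, and hence the output of $GREEDY$, change only at values of $r$ where some inequality $W'_{ij} \le r(1+\epsilon)$ or $W'_{ij} \le 3r(1+\epsilon)$ flips, i.e.\ at $r = W'_{ij}/(1+\epsilon)$ or $r = W'_{ij}/(3(1+\epsilon))$; scaling these breakpoints appropriately shows that the coarser grid $B$ already contains, for the relevant comparisons, a value achieving the same combinatorial behaviour as the true optimal threshold, so the smallest successful $r \in B$ yields a solution of the same quality as searching the continuum. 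Once this is in place, the time bound is just the bookkeeping described above, and nothing further is needed.
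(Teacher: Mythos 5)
Your proposal is correct and follows essentially the same step-by-step accounting as the paper's own proof: Steps 1--3 cost $O((nd\log n)/\epsilon^2)$, Step 4 costs $O(n^2k)=\tilde{O}(n^2/\epsilon^2)$, Step 5(a) costs $\tilde{O}(n^2)$, Step 5(b) costs $\tilde{O}(n^2\ell)$ via Lemma~\ref{lem: timegreedy} and the $O(\log n)$ calls made by the binary search, and Step 6 is cheap. Your extra remarks on the monotonicity of the $GREEDY$ answer in $r$ and on searching the discrete set $B$ concern the correctness of Step 5 rather than its running time and are not needed for this lemma (they are handled in the paper by Lemma~\ref{lem: greedy} and the proof of Theorem~\ref{theo: out}), so nothing further is required.
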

  \begin{proof}
    Steps 1-3 are analogous to those in
    $DIMREDCENTER(\ell,P,\epsilon,SR)$ and hence
    they can be implemented in  time
    $O((nd\log n)/\epsilon^2)$ by Lemma \ref{lem: timecenter1}.
    Step~4 takes $O(n^2k)=O((n^2\log n)/\epsilon^2)$
    time. Step 5(a) requires $\tilde{O}(n^2)$ time.
    By Lemma~\ref{lem: timegreedy}, Step 5(b)
    can be done in $\tilde{O}(n^2\ell).$
    Finally, Step 6 can be completed in $O(nd)$ time.
    \qed
  \end{proof}
  
  \begin{theorem}\label{theo: out}
    Let $P$ be a set of $n$ points in the Euclidean $\mathbb{R}^d,$
    and let $z$ be a nonnegative integer smaller than $n.$
    Suppose that there is a set $T \subset P$
    of $\ell$ centers that is a solution to the $\ell$-center
    clustering problem for $P$
    with $z$ outliers in the Euclidean space, where the maximum
    distance of a non-outlier to its closest center in $T$ is at most
    $r.$ Let $\epsilon \in (0,\frac 12).$
    In time $\tilde{O}(n^2(\epsilon^{-2}+ \ell))$, one can construct
    a conservative  solution to
    the $\ell$-center clustering problem for $P$ with $z$ outliers
    in the Euclidean space such that the maximum
    distance of a non-outlier to its closest center is at most
    $(3+\epsilon)r$ w.h.p.
  \end{theorem}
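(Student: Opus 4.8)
The plan is to take the claimed algorithm to be $DIMREDCENOUT(\ell,P,\delta,z)$ run with $\delta=c\epsilon$ for a sufficiently small absolute constant $c$ (a short constant computation shows $\delta=\epsilon/21$ suffices for $\epsilon\in(0,\tfrac12)$), and to verify two things: the running time and the quality of the output. The running time is immediate from Lemma~\ref{lem: timeout}, since it is $\tilde{O}(n^2(\delta^{-2}+\ell))=\tilde{O}(n^2(\epsilon^{-2}+\ell))$ because $\delta=\Theta(\epsilon)$. For correctness I would first fix, once and for all, the single random event of Corollary~\ref{cor: newachi}: with high probability the map $f$ generated in Step~2 satisfies $(1-\delta)\|p_i-p_j\|_2\le W'_{ij}\le(1+\delta)\|p_i-p_j\|_2$ for all $i,j\in[n]$, where $W'_{ij}=\|f(p_i)-f(p_j)\|_2$ is the matrix built in Step~4. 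Everything below is deterministic once we condition on this event, which in particular makes all $O(\log n)$ calls to $GREEDY$ issued by the binary search simultaneously valid.

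Next I would analyse what the binary search returns. Let $r_0\le r$ be the largest distance from a non-outlier of the hypothesised solution $T\subset P$ to its nearest center in $T$; since $T$ consists of input points, $r_0=\|p_a-p_b\|_2$ for some indices $a,b$, so $\rho:=(1+2\delta)W'_{ab}$ is one of the elements of the set $B$ sorted in Step~5(a). Using the conditioned event, $\rho\ge(1+2\delta)(1-\delta)r_0\ge r_0$ because $(1+2\delta)(1-\delta)=1+\delta(1-2\delta)\ge1$ for $\delta\le\tfrac12$; hence $T$ witnesses that at least $n-z$ points lie within distance $\rho$ of a center of $T$, so by Lemma~\ref{lem: greedy} the call $GREEDY(\ell,P,W',\delta,z,\rho)$ returns YES. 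Since raising the threshold only enlarges every $G_i$ and $E_i$, the success predicate is nondecreasing in the threshold, so the binary search returns some $r_{\min}\le\rho$ at which $GREEDY$ succeeds. On the other side, $\rho\le(1+2\delta)(1+\delta)r_0\le(1+\delta)(1+2\delta)\,r$, so $r_{\min}\le(1+\delta)(1+2\delta)\,r$.

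Finally I would translate the successful call back to true Euclidean distances. When $GREEDY(\ell,P,W',\delta,z,r_{\min})$ returns YES it also returns the chosen centers $p_{c_1},\dots,p_{c_\ell}\in P$ together with sets $E_1,\dots,E_\ell$ such that $|\bigcup_i E_i|\ge n-z$ and every $p_j\in E_i$ has $W'_{c_ij}\le 3(1+\delta)r_{\min}$. Thus the returned center set $\{p_{c_1},\dots,p_{c_\ell}\}$ is conservative, and taking $P\setminus\bigcup_i E_i$ (of size at most $z$) as the outlier set, every non-outlier $p_j\in E_i$ satisfies, by the conditioned event, $\|p_{c_i}-p_j\|_2\le W'_{c_ij}/(1-\delta)\le 3(1+\delta)r_{\min}/(1-\delta)\le 3\,\frac{(1+\delta)^2(1+2\delta)}{1-\delta}\,r$; the choice $\delta=c\epsilon$ makes the coefficient at most $3+\epsilon$, which proves the claim w.h.p.

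The step I expect to be the main obstacle is the interplay between the discretisation set $B$ and the chain of $1\pm\delta$ distortions: one must argue at the same time that $B$ contains a value that is \emph{at least} the realised optimal radius $r_0$ — which is exactly why $B$ is padded with the inflated entries $(1+2\delta)W'_{ij}$ and not just the $W'_{ij}$ — so that Lemma~\ref{lem: greedy} can certify a YES, and that this value exceeds $r$ only by a $1+O(\delta)$ factor, so that after converting $GREEDY$'s factor-$3$ guarantee from $W'$-distances back to $\ell_2$-distances the final bound stays below $(3+\epsilon)r$. The remainder is bookkeeping of constants and the observation that conditioning on a single ``good map'' event suffices for the whole binary search.
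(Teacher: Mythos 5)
Your proposal follows essentially the same route as the paper's proof: run $DIMREDCENOUT$ with $\delta=\Theta(\epsilon)$, use the padded candidate set $B$ (the entries $(1+2\delta)W'_{ij}$) to guarantee a threshold sandwiched between the realised optimal radius and $(1+O(\delta))r$, certify a YES via Lemma~\ref{lem: greedy}, and convert the $3(1+\delta)$-guarantee on $W'$-distances back to $\ell_2$; your accounting is in fact more careful than the paper's (distinguishing $r_0$ from $r$, conditioning once on the good event of Corollary~\ref{cor: newachi} so that all $O(\log n)$ calls to $GREEDY$ are simultaneously valid, and tracking the final $1/(1-\delta)$ back-conversion, which is why you end up with $\epsilon/21$ rather than the paper's $\epsilon/8$). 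The only step I would not accept as written is the claim that enlarging every $G_i$ and $E_i$ makes the greedy success predicate monotone --- that is not obvious for this kind of greedy coverage procedure --- but it is also not needed: by Lemma~\ref{lem: greedy} every candidate threshold in $B$ that is at least $r_0$ yields YES, so every NO lies strictly below $r_0\le\rho$, and a binary search maintaining a NO lower end and a YES upper end must terminate at a YES value whose predecessor is a NO, hence at a value no larger than the least element of $B$ that is at least $r_0$, which is at most $\rho$.
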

  \begin{proof}
    Let us run $DIMREDCENOUT(\ell,P,\delta,z)$,
    where $\delta$ is a fraction of $\epsilon$ to be specified
    later. Note that $r=||p_i-p_j||_2$ for some
    $i,j\in [n].$ Recall also that both $W'_{ij}$ and
    $W'_{ij}(1+\delta)$ are considered in the binary search in
    Step 5 of $DIMREDCENOUT(\ell,P,\delta,z)$. If $W'_{ij}\le r$
    then $r \le (1+2\delta)W'_{ij}\le (1+2\delta)r$ by Lemma \ref{lem: ineq2}.
    Otherwise, we have $r\le W'_{ij}\le (1+\delta)r$ by Corollary
    \ref{cor: newachi}.
   We infer that in
    the binary search, a value $r'$ between $r$ and $(1+2\delta)r$
    is considered. It follows from Lemma \ref{lem: greedy}, that
    $DIMREDCENOUT(\ell,P,\delta,z)$ produces a conservative solution
    where the maximum distance of non-outlier to its closest center
    is at most $(3+\delta)(1+2\delta)r$. This is at most
    $(3+\epsilon)r$ when $\delta$ is set to $\frac {\epsilon}{8} .$

    $DIMREDCENOUT(\ell,P,\epsilon/8 ,z)$ can be implemented in
    time $\tilde{O}(n^2(\epsilon^{-2}+ \ell))$ by Lemma \ref{lem: timeout}.
    \qed
  \end{proof}

  Since a solution to an instance of the $\ell$-center clustering
  problem with outliers can be easily transformed to a conservative
  one by at most doubling the distances of non-outlier points to their
  closest centers, we obtain the following corollary.
  
  \begin{corollary}
 Let $P$ be a set of $n$ points in the Euclidean $\mathbb{R}^d,$
 and let $z$ be a nonnegative integer smaller than $n.$
 For an arbitrary $\epsilon \in (0,\frac 12),$
 the $\ell$-center clustering
 problem for $P$ with $z$ outliers
 in the Euclidean space admits w.h.p. a $(6+\epsilon)$-approximation
 in time $\tilde{O}(n^2(\epsilon^{-2}+ \ell))$.
    \end{corollary}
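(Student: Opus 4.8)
The plan is to deduce the corollary from Theorem~\ref{theo: out} via a standard ``conservatization'' argument. First I would let $r^{*}$ be the optimal radius of the (unrestricted) $\ell$-center clustering of $P$ with $z$ outliers, i.e.\ the smallest $r$ for which there exist centers $c_{1},\dots,c_{\ell}\in\mathbb{R}^{d}$ and an outlier set $O\subseteq P$ with $|O|\le z$ such that $\min_{i}\|p-c_{i}\|_{2}\le r$ for every $p\in P\setminus O$.

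Next I would exhibit a conservative near-optimal witness. For each $c_{i}$ pick an input point $c_{i}'\in P$ minimizing $\|c_{i}-c_{i}'\|_{2}$. For a non-outlier $p\in P\setminus O$ covered by $c_{i}$ in the optimal solution we have, since $p\in P$ forces $\|c_{i}-c_{i}'\|_{2}\le\|c_{i}-p\|_{2}$, the chain $\|p-c_{i}'\|_{2}\le\|p-c_{i}\|_{2}+\|c_{i}-c_{i}'\|_{2}\le 2\|p-c_{i}\|_{2}\le 2r^{*}$ by the triangle inequality. Hence $T=\{c_{1}',\dots,c_{\ell}'\}\subseteq P$, together with the same outlier set $O$, shows that the hypothesis of Theorem~\ref{theo: out} is met with the radius bound $r$ there replaced by $2r^{*}$.

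Then I would apply Theorem~\ref{theo: out} with its accuracy parameter set to $\epsilon/2$ (note $\epsilon/2\in(0,\frac{1}{2})$) and with the guaranteed radius $2r^{*}$. The theorem then produces, w.h.p., a conservative solution to the $\ell$-center clustering problem for $P$ with $z$ outliers in which the maximum distance of a non-outlier to its closest center is at most $(3+\epsilon/2)\cdot 2r^{*}=(6+\epsilon)r^{*}$. Since $r^{*}$ is a lower bound on the optimum (indeed on the unrestricted optimum, which is at most the conservative one), this is a $(6+\epsilon)$-approximation w.h.p. The running time is exactly the one in Theorem~\ref{theo: out} with $\epsilon$ replaced by $\epsilon/2$, which only affects the hidden constants, so it remains $\tilde{O}(n^{2}(\epsilon^{-2}+\ell))$.

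There is essentially no obstacle here; the only point requiring a little care is that Theorem~\ref{theo: out} does not require the radius bound $2r^{*}$ to be supplied to the algorithm — the procedure $DIMREDCENOUT$ binary-searches over candidate radii extracted from the approximate interdistances $W'_{ij}$ — so knowing merely that \emph{some} conservative $T$ with radius at most $2r^{*}$ exists is enough to trigger the guarantee. It is also worth noting that the outlier set produced by the algorithm need not coincide with $O$: Theorem~\ref{theo: out} only promises that \emph{some} $(n-z)$-point subset of $P$ is covered within the stated radius, which is precisely what the corollary asserts.
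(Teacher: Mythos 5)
Your proposal is correct and follows essentially the same route as the paper, which derives the corollary from Theorem~\ref{theo: out} by observing that any (unrestricted) solution can be made conservative by snapping each center to its nearest input point, at most doubling the covering radius. You are in fact slightly more careful than the paper's one-sentence justification, since you explicitly invoke the theorem with accuracy parameter $\epsilon/2$ so that $(3+\epsilon/2)\cdot 2r^{*}=(6+\epsilon)r^{*}$ rather than $(6+2\epsilon)r^{*}$.
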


    By slightly modifying the procedures $DIMREDCENOUT$
    and $GREEDY$, we can obtain
    analogous theorem and corollary for $\ell$-center clustering
    with outliers in Hamming spaces. Simply,
    in the modified procedures, we use as the approximate
    distance between points $v$ and $u$ the squared distance
    $||f(v)-f(u)||_2$ instead of  $||f(v)-f(u)||_2$. 
    The asymptotic running times of the modified
    procedures are the same as those of the original ones.
    The proof of the approximation guarantee is analogous to that
    in the Euclidean case. Instead of Corollary \ref{cor: newachi}
    we use Corollary \ref{cor: achi03} and instead of Lemma
    \ref{lem: ineq2} its Hamming equivalent. See Appendix A for details.
    
  \begin{theorem}\label{theo: hamout}
    Let $P$ be a set of $n$ points in the Hamming space
    $\{0,1\}^d,$
    and let $z$ be a nonnegative integer smaller than $n.$
    Suppose that there is a set $T \subset P$
    of $\ell$ centers that is a solution to the $\ell$-center
    clustering problem for $P$
    with $z$ outliers in the Hamming space
    $\{0,1\}^d,$ where the maximum
    Hamming distance of a non-outlier to its closest center in $T$ is at most
    $r.$ Let $\epsilon \in (0,\frac 12).$
    In time\\ $\tilde{O}(n^2(\epsilon^{-2}+ \ell))$, one can construct
    a conservative  solution to
    the $\ell$-center clustering problem for $P$
    with $z$ outliers in the Hamming space
    $\{0,1\}^d$
    such that the maximum Hamming
    distance of a non-outlier to its closest center is at most
    $(3+\epsilon)r$ w.h.p.
  \end{theorem}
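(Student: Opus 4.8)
The plan is to carry over the proof of Theorem~\ref{theo: out} essentially verbatim, replacing the Euclidean metric by the Hamming metric and the surrogate distances by the \emph{squared} projected distances. First I would introduce the obvious Hamming variants of $DIMREDCENOUT$ and $GREEDY$: everything stays the same except that the surrogate inter-point distances are now $W'_{ij}=(||f(p_i)-f(p_j)||_2)^2$ (with target dimension $k=O(\log n/\delta^2)$), so that in particular $G_i=\{p_j:W'_{ij}\le r(1+\delta)\}$, $E_i=\{p_j:W'_{ij}\le 3r(1+\delta)\}$, and $B=\{W'_{ij}\}\cup\{W'_{ij}(1+2\delta)\}$. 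Since $\mathrm{ham}(u,v)=(||u-v||_2)^2$ on $\{0,1\}^d$, Corollary~\ref{cor: achi03} gives, w.h.p., $(1-\delta)\mathrm{ham}(p_i,p_j)\le W'_{ij}\le(1+\delta)\mathrm{ham}(p_i,p_j)$ for all $i,j$; and from this, exactly as Lemma~\ref{lem: ineq2} follows from Corollary~\ref{cor: newachi} (using $\tfrac1{1+\delta}>1-\delta$ and $\tfrac1{1-\delta}\le 1+2\delta$ for $\delta\in(0,\tfrac12)$), one obtains the Hamming analogue of Lemma~\ref{lem: ineq2}: w.h.p.\ $(1-\delta)W'_{ij}\le\mathrm{ham}(p_i,p_j)\le(1+2\delta)W'_{ij}$.

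With these two facts in hand the proof mirrors that of Theorem~\ref{theo: out}. Step one is the Hamming analogue of Lemma~\ref{lem: greedy}: if some $\ell$-center set $T\subset P$ leaves at most $z$ points at Hamming distance more than $r$ from $T$, then the Hamming $GREEDY$ run with parameter $r$ returns YES w.h.p.; its proof is the Charikar~\emph{et al.}~\cite{Char01} assignment-and-induction argument used for Lemma~\ref{lem: greedy}, with $||\cdot||_2$ replaced by $\mathrm{ham}(\cdot,\cdot)$ and Corollary~\ref{cor: newachi} replaced by Corollary~\ref{cor: achi03}. Step two is the main calculation: run the Hamming $DIMREDCENOUT$ with $\delta$ a suitably small constant multiple of $\epsilon$; since the optimal radius equals $\mathrm{ham}(p_a,p_b)=r$ for some input points and both $W'_{ab}$ and $W'_{ab}(1+2\delta)$ occur in $B$, the case analysis of Theorem~\ref{theo: out} (splitting on whether $W'_{ab}\le r$) shows the binary search tests a value $r'\in[r,(1+2\delta)r]$; the Hamming analogue of Lemma~\ref{lem: greedy} makes $GREEDY$ answer YES at $r'$ w.h.p., so it succeeds at some smallest $r^*\le r'$, and the Hamming analogue of Lemma~\ref{lem: ineq2} converts the bound $W'_{ij}\le 3r^*(1+\delta)$ that holds for every covered point into $\mathrm{ham}(p_i,p_j)=(3+O(\delta))r$, which is $\le(3+\epsilon)r$ once $\delta$ is chosen small enough (mirroring the accounting of Theorem~\ref{theo: out}, $\delta=\epsilon/8$ does it). The running time is unchanged, $\tilde{O}(n^2(\epsilon^{-2}+\ell))$ by the Hamming analogue of Lemma~\ref{lem: timeout}, since squaring a projected distance costs $O(1)$.

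The delicate part of this program is the greedy covering argument underlying the Hamming analogue of Lemma~\ref{lem: greedy} — as in the Euclidean case, one must check that the radius $3r(1+\delta)$ defining the $E_i$'s simultaneously absorbs the triangle-inequality slack of the exact Charikar~\emph{et al.}\ argument and the $(1\pm\delta)$ distortion of the dimension-reduction map, and that the $E_i$'s updated inside the loop stay pairwise disjoint so no point is assigned twice. However, this part is word-for-word the Euclidean argument, so the only genuinely Hamming-specific points to verify are that $\mathrm{ham}$ satisfies the triangle inequality (it does, being the $\ell_1$ metric on $\{0,1\}^d$) and that $W'_{ij}=(||f(p_i)-f(p_j)||_2)^2$ is a $(1\pm\delta)$-multiplicative estimate of $\mathrm{ham}(p_i,p_j)$ (Corollary~\ref{cor: achi03}, using $\mathrm{ham}=(||\cdot||_2)^2$ on $\{0,1\}^d$). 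Everything else is bookkeeping of $(1\pm O(\delta))$ factors.
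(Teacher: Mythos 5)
Your proposal matches the paper's own proof (given in Appendix~A) essentially step for step: the paper likewise defines Hamming variants of the two procedures with surrogate distances $W''_{ij}=(\|f(p_i)-f(p_j)\|_2)^2$, proves the Hamming analogues of Lemma~\ref{lem: greedy} and Lemma~\ref{lem: ineq2} via Corollary~\ref{cor: achi03}, locates a tested radius in $[r,(1+2\delta)r]$ by the same case split on $W''_{ij}\le r$, and sets $\delta=\epsilon/8$ to get $(3+\delta)(1+2\delta)r\le(3+\epsilon)r$. No substantive differences.
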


  \begin{corollary}\label{cor: hamout}
 Let $P$ be a set of $n$ points in the Hamming space
    $\{0,1\}^d,$
 and let $z$ be a nonnegative integer smaller than $n.$
 For an arbitrary $\epsilon \in (0,\frac 12),$ the $\ell$-center clustering
 problem for $P$
 with $z$ outliers in the Hamming space
    $\{0,1\}^d$ admits w.h.p. a $(6+\epsilon)$-approximation
 in time $\tilde{O}(n^2(\epsilon^{-2}+ \ell))$.
\end{corollary}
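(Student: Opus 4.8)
The plan is to obtain this corollary from Theorem~\ref{theo: hamout} in exactly the way the Euclidean Corollary after Theorem~\ref{theo: out} is obtained, the only extra ingredient being the triangle inequality for the Hamming metric (valid because $\mathrm{ham}(\cdot,\cdot)$ is just the $\ell_1$ metric restricted to $\{0,1\}^d$).

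First I would fix an optimal, not necessarily conservative, solution to the $\ell$-center clustering problem for $P$ with $z$ outliers: centers $t_1,\dots,t_\ell \in \mathbb{R}^d$ together with an outlier set $O \subseteq P$, $|O| = z$, and let $r^\ast = \max_{p \in P\setminus O}\min_i \mathrm{ham}(p,t_i)$ be its value. I would then convert it into a conservative solution: for each $t_i$ that is the closest center of at least one non-outlier, replace $t_i$ by one such non-outlier $p_i \in P\setminus O$ (and replace the remaining centers by arbitrary input points), keeping the same outlier set $O$. The triangle inequality then gives, for every non-outlier $p$ assigned to $t_i$, $\mathrm{ham}(p,p_i) \le \mathrm{ham}(p,t_i) + \mathrm{ham}(t_i,p_i) \le 2r^\ast$. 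Hence there exists a conservative solution for $P$ with $z$ outliers whose value is at most $2r^\ast$.

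Next I would invoke Theorem~\ref{theo: hamout} with this conservative solution, i.e.\ with the theorem's parameter ``$r$'' set to $2r^\ast$ and with accuracy $\epsilon/2$ in place of $\epsilon$; note that $DIMREDCENOUT$ does not take $r$ as input but binary-searches for the smallest successful threshold, so this is a legitimate application. The theorem then produces, w.h.p., a conservative solution in which every non-outlier is within Hamming distance $(3+\epsilon/2)\cdot 2r^\ast = (6+\epsilon)r^\ast$ of its closest center, in time $\tilde{O}(n^2(\epsilon^{-2}+\ell))$ (rescaling $\epsilon$ by a constant does not change this bound). Since $r^\ast$ is the optimal value, this is the claimed $(6+\epsilon)$-approximation.

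I do not expect a real obstacle: the whole content is the composition of the factor-$2$ ``conservatization'' step with the $(3+\epsilon)$ guarantee of Theorem~\ref{theo: hamout}, plus a rescaling of $\epsilon$. The only two points needing a line of care are checking the Hamming triangle inequality used above and checking that the running-time bound is inherited verbatim from Theorem~\ref{theo: hamout}; both are immediate.
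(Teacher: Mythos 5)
Your proposal is correct and matches the paper's own derivation: the paper obtains this corollary from Theorem~\ref{theo: hamout} by the same one-line observation that an arbitrary optimal solution can be made conservative while at most doubling the non-outliers' distances to their closest centers (via the triangle inequality for $\mathrm{ham}$, which is the $\ell_1$ metric on $\{0,1\}^d$). Your additional care in rescaling $\epsilon$ to $\epsilon/2$ so that $(3+\epsilon/2)\cdot 2 = 6+\epsilon$ is a detail the paper leaves implicit, and it is handled correctly.
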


\section{Final remarks}

Relatively recently, Jiang {\em et al.} have shown in \cite{JKS24}
that for some problems, including the $\ell$-center clustering
problem in Euclidean spaces, there is a randomized dimension
reduction even to a sublogarithmic-size subspace.
Such a reduction preserves the optimal value of the solution
to the problem within a factor related to the size of the subspace.
Following \cite{JKS24}, their main result can be stated informally
as follows.

\begin{fact}\label{jks}\cite{JKS24}
  For every $d,\alpha, \ell, n,$ where $\ell \le n,$ there
  is a random linear map $g: \mathbb{R}^d \rightarrow \mathbb{R}^t,$
  where $t=O(\frac{\log n}{\alpha^2}+ \log \ell),$ such that for every
  set $P \subset \mathbb{R}^d$ of $n$ points, with high probability,
  $g$ preserves the value of optimal solution to the
  $\ell$-center clustering problem within $O(\alpha)$ factor.
  \end{fact}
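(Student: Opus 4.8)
The plan is to realize $g$ as a scaled Gaussian linear map (equivalently, an Achlioptas-style $\pm1$ map), $g(x)=\tfrac{1}{\sqrt t}\,Gx$ with the entries of the $t\times d$ matrix $G$ independent standard normals and $t=\Theta\!\big(\tfrac{\log n}{\alpha^2}+\log\ell\big)$. As in Fact~\ref{fact: achi03}, linearity reduces the distortion of a distance $\|u-v\|_2$ to the distortion of the single vector $u-v$. The key departure from Fact~\ref{fact: achi03} is that we do not try to preserve all $\binom n2$ pairwise distances within $1\pm\epsilon$ (which would already force $t=\Omega(\log n)$); instead we only need control at the \emph{coarse} scale fixed by the optimum, and we use the $\chi^2$ tails in the large-deviation regime rather than the Gaussian-like regime: for fixed $x\ne 0$, $\Pr[\|g(x)\|_2\ge\alpha\|x\|_2]\le e^{-\Omega(t\alpha^2)}$ (upper tail) and $\Pr[\|g(x)\|_2\le\|x\|_2/\alpha]\le\alpha^{-\Omega(t)}$ (lower tail), for $\alpha$ bounded away from $1$; for $\alpha=O(1)$ one simply falls back on ordinary Johnson--Lindenstrauss with $t=O(\log n)$, which already matches the claimed bound. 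Normalize so that the optimal $\ell$-center radius $r^*$ of $P$ in $\mathbb{R}^d$ equals $1$ (as $n>\ell$ and $P$ consists of distinct points, $r^*>0$), and let $\mathrm{OPT}$ be the optimal $\ell$-center radius of $g(P)$ in $\mathbb{R}^t$, with centers allowed anywhere in $\mathbb{R}^t$. It suffices to show $\tfrac{1}{c\alpha}\le\mathrm{OPT}\le c\alpha$ for an absolute constant $c$.

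\emph{Cheap projected solution ($\mathrm{OPT}\le O(\alpha)$).} Fix optimal centers $c_1,\dots,c_\ell\in\mathbb{R}^d$ for $P$, so each $p\in P$ has a nearest center $c_{i(p)}$ with $\|p-c_{i(p)}\|_2\le 1$. Union-bounding over the at most $n\ell$ fixed pairs $(p,c_j)$ with $\|p-c_j\|_2\le1$ and using the upper $\chi^2$ tail, with high probability no such pair expands by more than a factor $\alpha$, so $\|g(p)-g(c_{i(p)})\|_2\le\alpha$ for every $p$, and $\{g(c_1),\dots,g(c_\ell)\}$ covers $g(P)$ within radius $\alpha$. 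This is the step that consumes the $\Theta(\tfrac{\log n}{\alpha^2})$ term of $t$: we must control $\Theta(n)$ point-to-center pairs, and the expansion tail only decays like $e^{-\Omega(t\alpha^2)}$.

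\emph{No much cheaper projected solution ($\mathrm{OPT}\ge\Omega(1/\alpha)$).} Since $P$ cannot be covered by $\ell$ balls of radius smaller than $r^*=1$, the greedy furthest-point traversal run for $\ell+1$ steps produces $\ell+1$ points $q_0,\dots,q_\ell\in P$ that are pairwise at distance at least the $\ell$-center cost of $\{q_0,\dots,q_{\ell-1}\}$, hence at least $1$. Now union-bound over \emph{only} the $\binom{\ell+1}{2}=O(\ell^2)$ pairs $(q_a,q_b)$ and apply the lower $\chi^2$ tail: with high probability no such pair is contracted by more than a factor $c\alpha$, so $g(q_0),\dots,g(q_\ell)$ are $\ell+1$ points of $\mathbb{R}^t$ that are pairwise at distance at least $1/(c\alpha)$. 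By pigeonhole, such a set cannot be covered by $\ell$ balls of radius $<1/(2c\alpha)$, whence $\mathrm{OPT}\ge1/(2c\alpha)$. Here only $O(\ell^2)$ pairs are controlled and the contraction tail decays like $\alpha^{-\Omega(t)}$, so $t=\Theta(\log\ell/\log\alpha)=O(\log\ell)$ dimensions suffice --- this is exactly where the $\log\ell$ term (rather than $\log n$) comes from.

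The main obstacle, and the whole point of the result, lies in this last step: a naive argument would insist on approximately preserving all $\binom n2$ "long" distances in order to lower-bound $\mathrm{OPT}$, which needs $\Omega(\log n/\log\alpha)$ dimensions; recognizing that an $(\ell+1)$-point packing is the \emph{only} witness actually needed replaces $\log n$ by $\log\ell$. One subtlety to flag, consistent with the informal phrasing of the statement: because the contraction tail is only $\alpha^{-\Omega(t)}$, with $t=\Theta(\tfrac{\log n}{\alpha^2}+\log\ell)$ the failure probability of the lower-bound step is $\ell^{-\Omega(1)}$ (or any prescribed small constant) rather than $n^{-\Omega(1)}$ unless $\ell$ is polynomially large in $n$; the upper-bound step does fail with probability only $n^{-\Omega(1)}$. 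Combining the two bounds gives $\max\{\mathrm{OPT}/r^*,\ r^*/\mathrm{OPT}\}=O(\alpha)$, i.e.\ $g$ preserves the optimal $\ell$-center value of $P$ within an $O(\alpha)$ factor, as claimed.
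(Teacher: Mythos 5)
The paper does not actually prove this statement: it is imported verbatim (and explicitly ``stated informally'') as a cited result of Jiang \emph{et al.}~\cite{JKS24}, so there is no in-paper proof to compare yours against. That said, your reconstruction is essentially the right argument and captures the idea that makes the sublogarithmic target dimension possible, namely the asymmetric treatment of the two failure modes: expansion must be controlled for all $n$ point--to--optimal-center pairs, but the tail $e^{-\Omega(t\alpha^2)}$ makes $t=O(\log n/\alpha^2)$ suffice, whereas contraction need only be controlled on an $(\ell+1)$-point packing witness (correctly produced by the Gonzalez furthest-point argument, whose pairwise separation is at least the cost of the first $\ell$ greedy centers and hence at least $r^*$), where the tail $\alpha^{-\Omega(t)}$ over $O(\ell^2)$ pairs needs only $t=O(\log\ell)$. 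The pigeonhole step and the reduction of distance distortion to single-vector norm distortion via linearity are both sound, and the fallback to ordinary Johnson--Lindenstrauss for $\alpha=O(1)$ correctly disposes of the regime where the tail bounds degenerate. The one caveat you flag yourself is real and worth keeping: with $t=\Theta(\log n/\alpha^2+\log\ell)$ the contraction step fails only with probability $\ell^{-\Omega(1)}$ rather than $n^{-\Omega(1)}$, so the ``with high probability'' in the statement must be read in terms of $\ell$ (or the guarantee weakened to constant probability) when $\ell\ll n$; since the paper itself presents the fact informally and only uses it for a qualitative discussion in the final remarks, this does not affect anything downstream.
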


  We could replace the variant of JL randomized dimension reduction
  from \cite{achi03} in the general method presented in Section 3
  by the enhanced randomized
  dimension reduction from \cite{JKS24}. Roughly, this
  would lead to decreasing the asymptotic
  time complexity by $O(\alpha^2)$ at the cost of increasing the
  approximation guarantee by $O(\alpha).$ As the authors of \cite{JKS24}
  note, when one is interested in $O(1)$-approximation and hence $\alpha=O(1),$
  then $\frac {\log n}{\alpha^2}=\Omega (\log n)$ and their method
  does not yield any improvement of the asymptotic running time.
  On the other hand, the randomized dimension reduction from
  \cite{JKS24} can be advantageous when for instance algorithms
  of exponential dependence on the dimension size are
  applied in the target subspace.
\junk{
  Similarly as in \cite{JKS24}, we could also extend our results to
  include some known variants of the $\ell$-center clustering and
  minimum-diameter $\ell$-clustering problems. For instance, in the
  variants with outliers, a given number $z$ of input points can be
  discarded as outliers when trying to minimize the maximum distance
  to the nearest center or the maximum cluster diameter
  \cite{Char01,lot}.
}
  
     \small
  \bibliographystyle{abbrv}                    
\bibliography{Voronoi4}
\vfill
\newpage
  \appendix
  \section{$O(1)$-approximation
    for $\ell$-center clustering with outliers in Hamming spaces}
  Our Hamming versions of the procedures $DIMREDCENOUT$ and $GREEDY$
  are as follows.

  \bigskip
\noindent\fbox{%
    \parbox{\textwidth}{%
    \noindent
        {\bf procedure} $HAMDIMREDCENOUT(\ell, P, \epsilon , z)$
        \par
        \noindent
            {\em Input}: A positive integers $\ell,z,$
a set $P$ of points $p_1,\dots,p_n \in \{0,1\}^d,$ where $n>\ell,z,$
a real $\epsilon\in (0,\frac 12).$
            \par
            \noindent
            {\em Output}: A set $T$ of $\ell$ centers for
            a subset of $P$ of cardinality $n-z.$

                \begin{enumerate}
\item
                  Set $n$ to the number of input points
                 and $k$ to $O(\log n /\epsilon^2)$.
                 \item
                 Generate a random $k\times d$     matrix $R$ with entries
                 in $\{-1,1\}$,
                defining the function $f:\mathbb{R}^d \rightarrow 
                \mathbb{R}^k$ by $f(x)=Rx^{\top}$ 
                (see Fact \ref{fact: achi03}).
                \item
                  Compute the values of the function $f$  for each
                  point $p_i\in P,$
                  i.e.,
                  for $i=1,\dots,n,$
                  compute $Rp_i^{\top}.$
                 \item
                  For $i,j \in [n],$ compute $W''_{ij}=
                (||f(p_i)-f(p_j)||_2)^2$ and set $W''$ to the
                matrix $(W''_{ij}).$
               \item
                 \begin{enumerate}
                   \item
                   Compute and sort the set
                   $B'=\{ W''_{ij}|i,j\in [n]\} \cup \{ W''_{ij}(1+2\epsilon) |i,j\in [n]\}$.
                   \item 
               By binary search find the smallest $r$ in the sorted $B'$
               such that $HAMGREEDY(\ell,W'',\epsilon,z,r)$ returns $YES$.
               \end{enumerate}
\item
  Set $T$ to the set of centers returned by the
  successful call of $HAMGREEDY$
  with the smallest $r.$
\end{enumerate}
}%
}

\bigskip
\noindent\fbox{%
    \parbox{\textwidth}{%
    \noindent
        {\bf procedure} $HAMGREEDY(\ell,W'',\epsilon,z,r)$
        \par
        \noindent
        {\em Input}: The input parameters $\ell, P, \epsilon, z,$
        $W''=(W''_{ij})$ are specified as
        in $HAMDIMREDCENOUT(\ell, P, \epsilon , z)$, $r$ is a positive
        real number.
            \par
            \noindent
            {\em Output}: YES if there is an $\ell$-center
            clustering of a $(n-z)$-point subset of $P$
            such that the maximum Hamming distance
            of a point in the subset to its nearest center
            does not exceed $3(1+\epsilon)r$ otherwise NO.
                \begin{enumerate}
   \item
  For $i\in [n],$ compute the set $G'_i$ of points $p_j\in P$
  s.t.  $W''_{ij}\le r(1+\epsilon)$ and the set $E'_i$
  of points $p_j\in P$ s.t.$W''_{ij}\le 3r(1+\epsilon)$. Set $S$ to $[n].$
\item
  $\ell$ times iterate the following block:
  \begin{enumerate}
    \item
      Select a set $G'_j$ of largest cardinality
      among (the not yet selected) sets $G'_i,i\in S$.
      Select also $E'_j.$ Set $S$ to $S\setminus \{j\}.$
  \item
    Mark the points in $E'_j$ and remove
    the newly marked points from all (the not yet selected)
    sets
    $G'_i, E'_i,$ for $i\in S.$
  \end{enumerate}

\item
  If the total number of marked points is at least
  $n-z,$ i.e., $|\bigcup_{i=1}^{\ell}E'_i|\ge n-z,$
  then return YES along with $E'_i, i \in [\ell]$ else output NO.
  \end{enumerate}
}%
}

\begin{lemma}\label{lem: hamgreedy}
  If there is a set
  $T$  of $\ell$ centers in
  $P \subset \{0,1\}^d$ such that
  for at least $n-z$ points $p \in P,$ $\min_{t\in T}(||t-p||_2)^2\le r$
  then $HAMGREEDY(\ell,P,W',\epsilon, z,r)$ returns
YES w.h.p.
\end{lemma}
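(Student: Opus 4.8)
The plan is to adapt the proof of Lemma~\ref{lem: greedy} to the Hamming setting, changing only the metric, the distance matrix ($W''$ in place of $W'$), and the dimension-reduction guarantee (Corollary~\ref{cor: achi03} in place of Corollary~\ref{cor: newachi}). The one bookkeeping remark needed at the outset is that, because $P\subset\{0,1\}^d$ and the target solution is conservative ($T\subseteq P$), we have $(||t-p||_2)^2=\mathrm{ham}(t,p)$ for every $t\in T$ and $p\in P$; hence the hypothesis ``$\min_{t\in T}(||t-p||_2)^2\le r$ for at least $n-z$ points'' just says that $r$ is a feasible \emph{Hamming} radius for some $(n-z)$-point subset of $P$, and $W''_{ij}=(||f(p_i)-f(p_j)||_2)^2$ is precisely the estimator of $\mathrm{ham}(p_i,p_j)$ supplied by Corollary~\ref{cor: achi03}.

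I would first fix notation mirroring the Euclidean proof: assume w.l.o.g.\ that $p_1,\dots,p_\ell$ are the centers selected by $HAMGREEDY$ together with their sets $G'_i$ and extensions $E'_i$, write $T=\{t_1,\dots,t_\ell\}\subseteq P$, and put $T_i=\{p\in P:\mathrm{ham}(t_i,p)\le r\}$, so that $\bigcup_i T_i$ contains at least $n-z$ points. By Corollary~\ref{cor: achi03} and a union bound over the $O(n^2)$ index pairs, w.h.p.\ $(1-\epsilon)\mathrm{ham}(p_i,p_j)\le W''_{ij}\le(1+\epsilon)\mathrm{ham}(p_i,p_j)$ for all $i,j\in[n]$, and I condition on this event. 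Consequently each $T_i$ is contained in the $G'$-set centered at the point $t_i\in P$, which is the only place conservativity of the target solution is used, and every point whose true Hamming distance to some $p_i$ is at most $3r$ lies in $E'_i$.

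Then I would run the charging argument of \cite{Char01} essentially verbatim: by induction on $i$, after a suitable reordering of $T_1,\dots,T_\ell$, the union $E'_1\cup\dots\cup E'_i$ covers at least as many points of $P$ as $T_1\cup\dots\cup T_i$, the proof producing an injective assignment of each point of the latter union to a distinct point of the former. The inductive step splits on whether the greedily chosen $G'_i$ meets some still-unmatched $T_j$. If it does, $E'_i$ contains every point of $T_j$ not already in $E'_1\cup\dots\cup E'_{i-1}$, so each such point is charged to itself. If it does not, then since $t_j\in P$ and $G'_{\mathrm{idx}(t_j)}\supseteq T_j$ w.h.p., the ball around $t_j$ was an available candidate, so the greedy choice gives $|G'_i\setminus(E'_1\cup\dots\cup E'_{i-1})|\ge|T_j\setminus(E'_1\cup\dots\cup E'_{i-1})|$ and the uncovered part of $T_j$ can be injected into $E'_i$; the pruning of the $E'$-sets at the end of each round keeps these sets disjoint across iterations, so nothing is charged twice. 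Taking $i=\ell$ yields $|\bigcup_{i=1}^\ell E'_i|\ge|\bigcup_{i=1}^\ell T_i|\ge n-z$, so Step~3 of $HAMGREEDY$ outputs YES, w.h.p.

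The step I expect to be the main obstacle is the containment $E'_i\supseteq T_j\setminus(E'_1\cup\dots\cup E'_{i-1})$ in the ``$G'_i$ meets $T_j$'' case. If $p\in G'_i\cap T_j$ and $q\in T_j$, the Hamming triangle inequality gives $\mathrm{ham}(p_i,q)\le\mathrm{ham}(p_i,p)+\mathrm{ham}(p,t_j)+\mathrm{ham}(t_j,q)$, where the last two terms are at most $r$ but the first is only bounded through the estimator, $\mathrm{ham}(p_i,p)\le W''_{ip}/(1-\epsilon)\le r(1+\epsilon)/(1-\epsilon)$; one must then verify that the radius $3(1+\epsilon)r$ defining $E'_i$ is generous enough to absorb this $1\pm\epsilon$ distortion, so that $W''_{iq}\le(1+\epsilon)\mathrm{ham}(p_i,q)\le 3(1+\epsilon)r$ and hence $q\in E'_i$. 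Tracking these factors cleanly is exactly why $HAMGREEDY$ inflates both the $G'$- and $E'$-radii by $(1+\epsilon)$, and why the final guarantee in Theorem~\ref{theo: hamout} is $3+\epsilon$ rather than $3$; should the crude bound above leave a small residual slack, one simply runs the procedure with $\epsilon$ replaced by a suitable constant fraction of $\epsilon$, which does not affect the asymptotics. Everything else is routine and identical to the Euclidean argument for Lemma~\ref{lem: greedy}.
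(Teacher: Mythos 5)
Your proof follows essentially the same route as the paper's: the identical inductive charging argument adapted from Charikar et al., with $W''$, the sets $G'_i,E'_i$, and Corollary~\ref{cor: achi03} substituted for their Euclidean counterparts in Lemma~\ref{lem: greedy}. You are in fact more careful than the paper, which asserts the containment $E'_i\supseteq T_j\setminus(E'_1\cup\dots\cup E'_{i-1})$ without computation; the $(1+\epsilon)/(1-\epsilon)$ slack you identify there is real (one only gets $W''_{iq}\le(1+\epsilon)\bigl(2+\tfrac{1+\epsilon}{1-\epsilon}\bigr)r$, which exceeds the $E'$-radius $3(1+\epsilon)r$ by a factor of $1+O(\epsilon)$), and your remedy of rescaling $\epsilon$ is the right one.
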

\begin{proof}
  We may assume w.l.o.g. that $p
  _1,\dots,p_{\ell}$
  are the centers in the sets $G'_i,$ and their extensions
  $E'_i,$ $i\in [\ell],$ produced by
$HAMGREEDY(\ell,P,W'',\epsilon, z,r)$.
Let $T=\{ t_1,\dots,t_{\ell}\}$ and for $i\in [\ell],$
let $T_i=\{p\in P|\ (||t_i-p||_2)^2\le r\}.$
As in the Euclidean case, to prove the lemma it is sufficient
to show by induction on $i\in [\ell]$ that one can
order the sets $T_i$ so that $E'_1\cup E'_2 \dots E'_i$
cover at least as many points in $P$ as $T_1\cup T_2\dots T_{i}$ w.h.p.
The proof is by assigning to each point in the latter set union
a distinct point in the former set union. 

By the inductive hypothesis, we may assume that the sets
$T_1, T_2,\dots,T_{i-1}$ and the assignment of a distinct 
point in $E'_1\cup E'_2 \dots E'_{i-1}$ to each point in
$T_1\cup T_2 \dots T_{i-1}$ 
have been determined. Suppose that the set $G'_i$ intersects
some remaining set $T_j.$ Then, $E'_i$ covers all points
in $T_j$ not covered by $E'_1\cup E'_2\dots E'_{i-1}$ w.h.p. by
Corollary \ref{cor: achi03i}. Therefore,  we can assign to each point
in $T_j$ not covered by $E'_1\cup E'_2\dots E'_{i-1}$
the point itself. Otherwise, by the greedy choice of $G'_i$ and
Corollary \ref{cor: achi03},
$G'_i$ and consequently also $E'_i$ contain at least as many points
outside $E'_1\cup E'_2\dots E'_{i-1}$ as $T_j$. Therefore,
we can assign to each point in $T_j$ not covered by
$E'_1\cup E'_2\dots E'_{i-1}$
a distinct point in $E'_i.$ Consequently, we can
further rearrange the order of the
remaining sets $T_q, i\le q \le \ell$ so
$T_j$ becomes $T_i.$
Importantly, no point can be doubly assigned since
in the $i$-th iteration of the $HAMGREEDY$ procedure,
the updated sets $E'_i$ are disjoint from
the sets $E'_q, q < i.$
\qed
\end{proof}

\begin{lemma}\label{lem: timehamgreedy}
  Assume the notation from Lemma \ref{lem: hamgreedy}. 
  $HAMGREEDY(\ell,P,W'',\epsilon, z,r)$ can be implemented
  in $\tilde{O}(n^2\ell)$ time. 
\end{lemma}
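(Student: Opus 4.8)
The plan is to mirror the time analysis of $GREEDY$ carried out in Lemma~\ref{lem: timegreedy}, since $HAMGREEDY$ has exactly the same control structure and differs from $GREEDY$ only in that its approximate interdistances are the squared projected $\ell_2$ distances $W''_{ij}$ rather than the projected $\ell_2$ distances $W'_{ij}$. In particular, the matrix $W''$ is supplied to $HAMGREEDY$ as an input parameter --- it is computed once, in Step~4 of $HAMDIMREDCENOUT$, outside of $HAMGREEDY$ --- so no projection or pairwise-distance computation is charged to $HAMGREEDY$ itself.

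First I would bound Step~1: for each $i\in[n]$ we scan the $i$-th row of $W''$ and, in $O(n)$ time, build dictionaries storing the index sets $G'_i=\{\,j : W''_{ij}\le r(1+\epsilon)\,\}$ and $E'_i=\{\,j : W''_{ij}\le 3r(1+\epsilon)\,\}$; summed over $i$ this is $\tilde{O}(n^2)$ time, where the $\tilde{O}$ absorbs the $O(\log n)$ cost of the dictionary operations. Next I would bound the $\ell$ iterations of Step~2. In a single iteration, selecting a not-yet-selected $G'_j$ of maximum current cardinality costs $O(n)$ (naively, or $O(1)$ amortized with a bucket structure), and marking the points of $E'_j$ together with deleting each newly marked point from the at most $2n$ remaining dictionaries $G'_i, E'_i$ costs $\tilde{O}(n)$ per point, hence $\tilde{O}(n^2)$ for the iteration; so Step~2 takes $\tilde{O}(n^2\ell)$ time overall. (Equivalently, one can note that each of the $n$ points is marked at most once during the whole run and deleting it from all remaining dictionaries costs $\tilde{O}(n)$, so the total deletion work across all iterations is already $\tilde{O}(n^2)$; the coarser per-iteration bound certainly suffices.) Finally, Step~3 checks $|\bigcup_{i=1}^{\ell} E'_i| \ge n-z$ in $O(n)$ time by a single pass over the marks. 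Adding up the three contributions yields the claimed $\tilde{O}(n^2\ell)$ running time.

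I do not anticipate any genuine obstacle here, as the argument is purely a transcription of the Euclidean analysis. The only point requiring a little care is the bookkeeping of the ``newly marked'' points in Step~2(b), so that a point is deleted from the remaining dictionaries at most once and the amortization above is valid; this is handled exactly as in the proof of Lemma~\ref{lem: timegreedy}.
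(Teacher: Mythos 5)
Your proposal is correct and follows essentially the same route as the paper, which simply notes that the analysis is analogous to that of Lemma~\ref{lem: timegreedy}: dictionaries for the $G'_i,E'_i$ built from the precomputed matrix $W''$ in $\tilde{O}(n^2)$ time, $\tilde{O}(n^2)$ per iteration of Step~2 for $\ell$ iterations, and $O(n)$ for Step~3. Your added remark on amortizing the deletion work is a harmless refinement that the paper does not need, since the coarser per-iteration bound already gives $\tilde{O}(n^2\ell)$.
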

\begin{proof} is analogous to that of Lemma \ref{lem: timegreedy}.
 \junk{ In Step 1, we compute representations
  of the sets $G'_i,E_i$  in the form of dictionaries
  keeping the indices of the points belonging
  to these sets.
  \junk{In order to compute the representations,
  for each pair of points $p_i,\ p_j\in P,$
  we need to compute $||f(p_i)-f(p_j)||_2$ which takes
  $O(n^2k)=O((n^2\log n)/\epsilon^2)$ time. }
The dictionaries can be formed in $\tilde{O}(n^2)$ time.
  A single iteration of
the block in Step~2 takes $\tilde{O}(n^2)$
time. Hence, the whole Step 2 takes $\tilde{O}(n^2\ell)$
time.
Finally, Step 3 can be done in $O(n)$ time.}
\qed
\end{proof}

  \begin{lemma}\label{lem: timehamout}
    $DIMREDCENOUT(\ell, P, \epsilon , z)$ can be implemented
    in time\\ $\tilde{O}(n^2(\epsilon^{-2}+ \ell))$.
  \end{lemma}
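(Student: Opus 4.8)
The plan is to mirror the proof of Lemma~\ref{lem: timeout}, observing that the procedure $HAMDIMREDCENOUT$ (the Hamming analogue of $DIMREDCENOUT$ treated here) differs from $DIMREDCENOUT$ only in that the approximate distance between $p_i$ and $p_j$ is taken to be the squared Euclidean distance $(||f(p_i)-f(p_j)||_2)^2$ rather than $||f(p_i)-f(p_j)||_2$, and that $HAMGREEDY$ is structurally identical to $GREEDY$. Neither change affects the asymptotic running time: squaring a precomputed value costs $O(1)$, and by Lemma~\ref{lem: timehamgreedy} a single call $HAMGREEDY(\ell,P,W'',\epsilon,z,r)$ runs in $\tilde{O}(n^2\ell)$ time, exactly as in the Euclidean case.

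First I would bound Steps 1--3. These are identical to the corresponding steps of $DIMREDCENTER(\ell,P,\epsilon,SR)$, so by (the argument of) Lemma~\ref{lem: timecenter1} they take $O((nd\log n)/\epsilon^2)$ time, i.e.\ $\tilde{O}(nd/\epsilon^2)$. Step~4 computes, for each of the $O(n^2)$ index pairs, the value $(||f(p_i)-f(p_j)||_2)^2$ in $\mathbb{R}^k$ with $k=O(\log n/\epsilon^2)$, hence in $O(n^2 k)=O((n^2\log n)/\epsilon^2)=\tilde{O}(n^2/\epsilon^2)$ total time.

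Next I would handle Step~5. In Step~5(a) the set $B'$ has $|B'|=O(n^2)$ elements, so computing and sorting it takes $\tilde{O}(n^2)$ time. In Step~5(b) the binary search over the sorted $B'$ makes $O(\log n)$ calls to $HAMGREEDY$, each costing $\tilde{O}(n^2\ell)$ by Lemma~\ref{lem: timehamgreedy}, for a total of $\tilde{O}(n^2\ell)$. Finally, Step~6 merely reads off the centers from the last successful call and lifts them back via $f^{-1}$, which costs $O(nd)$. Summing the contributions of all steps and absorbing the lower-order $nd$-terms yields the claimed bound $\tilde{O}(n^2(\epsilon^{-2}+\ell))$.

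I do not expect any genuine obstacle; the proof is a bookkeeping exercise resting entirely on Lemmata~\ref{lem: timecenter1} and~\ref{lem: timehamgreedy}. The one point that warrants a line of care is that passing to squared distances in the matrix $W''$ neither inflates the per-entry cost (it adds only an $O(1)$ squaring) nor alters the behaviour of the greedy subroutine, so the Hamming variant inherits the running-time analysis of the Euclidean one verbatim.
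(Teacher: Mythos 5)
Your proposal is correct and follows essentially the same route as the paper: the paper's proof simply declares the argument analogous to that of Lemma~\ref{lem: timeout}, whose step-by-step accounting (Steps 1--3 via Lemma~\ref{lem: timecenter1}, Step~4 in $O(n^2k)$, Step~5(a) in $\tilde{O}(n^2)$, Step~5(b) via the $\tilde{O}(n^2\ell)$ bound on the greedy subroutine, Step~6 in $O(nd)$) is exactly what you reproduce. Your added remark that squaring the precomputed distances costs only $O(1)$ per entry is a harmless elaboration of why the analogy goes through.
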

  \begin{proof} is analogous to that of Lemma \ref{lem: timeout}
  \junk{  Steps 1-3 are analogous to those in
    $DIMREDCENTER(\ell,P,\epsilon,SR)$ and hence
    they can be implemented in  time
    $O((nd\log n)/\epsilon^2)$ by Lemma \ref{lem: timecenter1}.
    Step 4 takes $O(n^2k)=O((n^2\log n)/\epsilon^2)$
    time. Step 5(a) requires $\tilde{O}(n^2)$ time.
    By Lemma~\ref{lem: timehamgreedy}, Step 5(b)
    can be done in $\tilde{O}(n^2\ell)$ time.
    Finally, Step 6 can be completed in $O(nd)$ time.}
    \qed
  \end{proof}
We need also a Hamming equivalent of Lemma \ref{lem: ineq2}.
  
  \begin{lemma}\label{lem: hamineq2}
                  Assume the notation from Fact \ref{fact: achi03}.
                  Suppose that $\epsilon \in (0,1/2)$
                  and $P\subset \{0,1\}^d\subset \mathbb{R}^d.$
                  Then, 
                  for all $v,u\in P,$
                  the following inequalities hold w.h.p.:
$$(1-\epsilon) (||f(v)-f(u)||_2)^2 \le \mathrm{ham}(v,u),$$ 
$$\mathrm{ham}(v,u)\le (1+2\epsilon)(||f(v)-f(u)||_2)^2.$$
               \end{lemma}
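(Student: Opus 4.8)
The plan is to repeat the proof of Lemma~\ref{lem: ineq2} almost verbatim, simply replacing the $\ell_2$ dimension-reduction estimate (Corollary~\ref{cor: newachi}) by its squared, Hamming-flavoured counterpart (Corollary~\ref{cor: achi03}). Throughout, the role played by $||v-u||_2$ in Lemma~\ref{lem: ineq2} is now played by $\mathrm{ham}(v,u)$, and the role of $||f(v)-f(u)||_2$ is played by $(||f(v)-f(u)||_2)^2$; the point is that Corollary~\ref{cor: achi03} already packages the observation $\mathrm{ham}(u,v)=(||u-v||_2)^2$ for $0$-$1$ vectors, so nothing new about Hamming distance has to be established here.

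First I would invoke the right-hand inequality of Corollary~\ref{cor: achi03}, which gives, w.h.p.\ simultaneously for all $v,u\in P$, the bound $(||f(v)-f(u)||_2)^2\le (1+\epsilon)\,\mathrm{ham}(v,u)$, i.e.\ $\tfrac{(||f(v)-f(u)||_2)^2}{1+\epsilon}\le \mathrm{ham}(v,u)$. Since $\tfrac{1}{1+\epsilon}\ge 1-\epsilon$ (equivalently $1\ge 1-\epsilon^2$), this yields the first claimed inequality $(1-\epsilon)(||f(v)-f(u)||_2)^2\le \mathrm{ham}(v,u)$.

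Next I would invoke the left-hand inequality of Corollary~\ref{cor: achi03}, namely $(1-\epsilon)\,\mathrm{ham}(v,u)\le (||f(v)-f(u)||_2)^2$, which rearranges to $\mathrm{ham}(v,u)\le \tfrac{(||f(v)-f(u)||_2)^2}{1-\epsilon}$. For $\epsilon\in(0,1/2)$ one has $\tfrac{1}{1-\epsilon}\le 1+2\epsilon$, because this is equivalent to $\epsilon(1-2\epsilon)\ge 0$, which holds precisely on $(0,1/2)$. Hence $\mathrm{ham}(v,u)\le (1+2\epsilon)(||f(v)-f(u)||_2)^2$, the second claimed inequality. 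Both bounds hold w.h.p.\ uniformly over all pairs because the same single high-probability event furnished by Corollary~\ref{cor: achi03} (equivalently Fact~\ref{fact: achi03}) is used in both steps.

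There is essentially no obstacle: the only points requiring care are the two elementary scalar inequalities $\tfrac{1}{1+\epsilon}\ge 1-\epsilon$ and $\tfrac{1}{1-\epsilon}\le 1+2\epsilon$, together with the observation that the hypothesis $\epsilon\in(0,1/2)$ (rather than merely $\epsilon\in(0,1)$) is exactly what is needed for the second of these, mirroring the Euclidean case. Keeping the statement parallel to Lemma~\ref{lem: ineq2} is deliberate, since it is this form that gets applied (via Corollary~\ref{cor: achi03}) in the analysis of the Hamming outlier procedure.
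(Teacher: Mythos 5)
Your proof is correct and follows essentially the same route as the paper's: both apply the two sides of Corollary~\ref{cor: achi03} and then the elementary bounds $\tfrac{1}{1+\epsilon}\ge 1-\epsilon$ and $\tfrac{1}{1-\epsilon}\le 1+2\epsilon$ (the paper phrases these as $1-\tfrac{\epsilon}{1+\epsilon}\ge 1-\epsilon$ and $1+\tfrac{\epsilon}{1-\epsilon}\le 1+2\epsilon$, which is the same computation). Your explicit justification of where $\epsilon<1/2$ is needed is a small clarity improvement over the paper's version, but the argument is the same.
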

               \begin{proof}

                 By the right-hand
                 inequality in Corollary \ref{cor: achi03}, we
                 obtain for any $v, u \in P,$ and $\epsilon \in
                 (0,1/2),$ $\frac {(||f(v)-f(u)||_2)^2}{1+\epsilon} \le
                 \mathrm{ham}(v,u).$
                 It follows that $\mathrm{ham}(v,u)- \frac
                 {\epsilon}{1+\epsilon} (||f(v)-f(u)||_2)^2 \le
                 \mathrm{ham}(v,u).$ Consequently, we obtain the first
                 inequality in this lemma.

                 Similarly, by the left-hand inequality
                 in Corollary \ref{cor: achi03}, we
                 infer that for any $v, u \in P,$ and $\epsilon \in
                 (0,1/2),$ $\mathrm{ham}(v,u)
                 \le \frac {(||f(v)-f(u)||_2)^2}{1-\epsilon}$.
                 This yields $\mathrm{ham}(v,u) \le$\\ $(||f(v)-f(u)||_2)^2 +\frac
                 {\epsilon}{1-\epsilon} (||f(v)-f(u)||_2)^2.$
                 Since $\epsilon \in
                 (0,1/2),$ the second inequality in this lemma follows.
                                  \qed
                                \end{proof}
  \junk{                              
  \begin{theorem}\label{theo: hamout}
    Let $P$ be a set of $n$ points in the Hamming space
    $\{0,1\}^d,$
    and let $z$ be a nonnegative integer smaller than $n.$
    Suppose that there is a set $T \subset P$
    of $\ell$ centers that is a solution to the $\ell$-center
    clustering problem for $P$
    with $z$ outliers in the Hamming space
    $\{0,1\}^d,$ where the maximum
    Hamming distance of a non-outlier to its closest center in $T$ is at most
    $r.$ Let $\epsilon > 0.$
    In time $\tilde{O}(n^2(\epsilon^{-2}+ \ell))$, one can construct
    a conservative  solution to
    the $\ell$-center clustering problem for $P$
    with $z$ outliers in the Hamming space
    $\{0,1\}^d$
    such that the maximum Hamming
    distance of a non-outlier to its closest center is at most
    $(3+\epsilon)r$ w.h.p.
  \end{theorem}}
  \begin{proof} {\em of Theorem \ref{theo: hamout}}.
    Let us run $HAMDIMREDCENOUT(\ell,P,\delta,z)$,
    where $\delta$ is set to $\frac {\epsilon}{8} .$
    Note that $r=\mathrm{ham}(p_i,p_j)=(||p_i-p_j||_2)^2$ for some
    $i,j\in [n].$ Recall also that both $W''_{ij}$ and
    $W''_{ij}(1+\delta)$ are considered in the binary search in
    Step 5 of $HAMDIMREDCENOUT(\ell,P,\delta,z)$. If $W''_{ij}\le r$
    then $r \le (1+2\delta)W''_{ij}\le (1+2\delta)r$
    by Lemma \ref{lem: hamineq2}.
    Otherwise, we have $r\le W''_{ij}\le (1+\delta)r$ by Corollary 2.
   We infer that in
    the binary search, a value $r'$ between $r$ and $(1+2\delta)r$
    is considered. It follows from Lemma~\ref{lem: hamgreedy}, that
    $HAMDIMREDCENOUT(\ell,P,\delta,z)$ produces a conservative solution
    where the maximum distance of non-outlier toots closest center
    is at most $(3+\delta)(1+2\delta)r$. This is at most
    $(3+\epsilon)r$ by $\delta=\frac {\epsilon}{8} .$

    $HAMDIMREDCENOUT(\ell,P,\epsilon/8 ,z)$ can be implemented in
    time $\tilde{O}(n^2(\epsilon^{-2}+ \ell))$ by Lemma \ref{lem: timehamout}.
    \qed
  \end{proof}
\junk{
  Since a solution to an instance of the $\ell$-center clustering
  problem with outliers can be easily transformed to a conservative
  one by at most doubling the distances of non-outlier points to their
  closest centers, we obtain the following corollary.
  
  \begin{corollary}\label{cor: hamout}
 Let $P$ be a set of $n$ points in the Hamming space
    $\{0,1\}^d,$
 and let $z$ be a nonnegative integer smaller than $n.$
 For an arbitrary $\epsilon >0,$ the $\ell$-center clustering
 problem for $P$
 with $z$ outliers in the Hamming space
    $\{0,1\}^d$ admits w.h.p. an $(6+\epsilon)$-approximation
 in time $\tilde{O}(n^2(\epsilon^{-2}+ \ell))$.
    \end{corollary}
 
\section{Proof of Lemma \ref{lem: R}}
\noindent
Lemma \ref{lem: R} is similar to Lemma 2.3 in \cite{AIR18} and as that
based on arguments from \cite{KOR00}.
Unfortunately, no proof
of Lemma 2.3 in \cite{AIR18} seems to be available in the literature.

\begin{proof}
To define the map $f,$ generate a random $k\times d$ 0-1 matrix $F,$
  where each entry is set to $1$
  with probability $\frac 1 {4t}$. For $x\in \{ 0,1\}^d,$ let $f(x)=Fx^{\top}$ over the field $GF(2).$

  For $x,y \in \{0,1\}^d,$  consider the $i$-th coordinates $f^i(x),\ f^i(y)$ of $f(x)$ and $f(y),$
  respectively. Suppose that $\mathrm{ham}(x,y)=D.$

  To estimate the probability that $f^i(x)\neq f^i(y)$
  when $D>0,$ observe that for $z\in \{0,1\}^d,$ $f^i(z)$ can be equivalently obtained as follows.
 Pick a  random subset $S$ of the $d$ coordinates such that each coordinate is selected with probability
 $\frac 1 {2t}.$ Let $\bar{z}$ be the vector  in $\{0,1\}^d$ obtained by setting all coordinates
 of $z$ not belonging to $S$ to zero. Now the inner mod 2 product of $\bar{z}$ with
 a vector  $r\in \{0,1\}^d$ picked  uniformly at random yields $f^i(z).$

 It follows that the probability that $f^i(x)\neq f^i(y)$ is $\frac
 12(1-(1-\frac 1 {2t})^D).$ Simply,
 strictly following \cite{KOR00}, if none of
 the $D$ coordinates $j,$ where $x_j\neq y_j,$ is chosen into
 $S$ then $f^i(x)=f^i(y),$ otherwise at least one of such coordinates,
 say $m$, is in $S$, and for each setting of other choices,there is
 exactly one of the two choices for $r_m$ which yields $f^i(x)\neq f^i(y).$
 Observe that the probability is increasing in $D.$

  Consequently, if $\mathrm{ham}(x,y)\le t$ then the probability that
 $f^i(x)\neq f^i(y)$ does not exceed $\frac 12(1-(1-\frac 1 {2t})^t)\approx \frac 12 (1-e^{-1/2})$
 while when $\mathrm{ham}(x,y)\ge (1+\epsilon)t$ it is at least
 $\frac 12(1-(1-\frac 1{2t})^{(1+\epsilon)t})\approx 
\frac 12 (1-e^{-(1+\epsilon)/2})$.
 We set $C_1= \frac 12(1-(1-\frac 1 {2t})^t)$
 and $C_2=\frac 12 (1-(1-\frac 1{2t})^{(1+\epsilon)t}).$
  Hence, in the first case the expected value of $\mathrm{ham}(f(x),f(y))$ is
  at most $C_1k$ while in the second case it is at least $C_2k$.
  
Let us estimate
$C_2-C_1.$ By the monotonicity of the function
$(1-\frac{1}{2t})^{t}$ in $t \geq 1$, $C_2-C_1$ is at least
$$\frac{1}{2}\left[ \left( 1-\frac{1}{2t}\right)^{t}-
\left(1-\frac{1}{2t}\right)^{(1+\epsilon)t}\right] =
\frac{1}{2}\left(1-\frac{1}{2t}\right)^{t}\left(1-\left(1-\frac{1}
  {2t}\right)^{\epsilon t}\right) \geq
\frac{1}{4}(1-e^{-\frac{\epsilon}{2}}).$$
We expand $e^{-x}$ in the Taylor
series $\sum_{i=0}^{\infty}\frac{(-x)^i}{i!}$.
Since the sums of consecutive
pairs of odd and even components
are negative, we obtain:
$$C_2-C_1 \geq \frac{1}{4}\left(1-e^{-\frac{\epsilon}{2}}\right) \geq
\frac{1}{4}\left(1-\left(1-\frac{\epsilon}{2}+\frac{(\frac{\epsilon}{2})^2}{2}\right)\right) =
\frac{1}{4}\left(\frac{\epsilon}{2}-\frac{\epsilon^2}{8}\right).$$
Hence, since $\epsilon < \frac{1}{2}$, we infer:
$$C_2-C_1 \geq \frac{1}{4}\left(\frac{\epsilon}{2}-\frac{\epsilon^2}{8}\right) \geq
\frac{1}{4}\left(\frac{\epsilon}{2}-\frac{\epsilon}{16}\right)
= \frac{7 \epsilon}{64}\ge \frac  {\epsilon}{10}.$$

 We shall use the following 
Chernoff multiplicative bounds (see Appendix A in \cite{AS}).  For a
sequence of independently and identically distributed (i.i.d.) 0-1
random variables $X_1,X_2,\dots,X_m$, $Pr[\sum X_i >(p+\alpha)m] <
e^{-pm\alpha^2}$, and $Pr[\sum X_i <(p-\gamma)m] < e^{-2m\gamma^2},$
where $p=Pr[X_i=1]$.

By the Chernoff bounds,
if $\mathrm{ham}(x,y)\le t$ then the probability
 that $\mathrm{ham}(f(x),f(y))>  (C_1+\frac 1{30} \epsilon)k$
 is at most $e^{ -\frac 1{900} \epsilon^2 k}$.
 Similarly, if $\mathrm{ham}(x,y)\ge (1+\epsilon)t$
 then the probability that $\mathrm{ham}(f(x),f(y)) < (C_2-\frac 1{30} \epsilon)k$
 is at most $e^{ -\frac 2{900} \epsilon^2 k}$.
 
 Note that $C_2-\epsilon/30 > C_1+\epsilon/30$ by $C_2-C_1\ge  \frac  {\epsilon}{10}.$
 \qed
\end{proof}
\newpage
\section{Alternative dimension reduction}

Johnson and Lindenstrauss were first to provide a
randomized dimension reduction
from $\mathbb{R}^d$ to $\mathbb{R}^k$ for the $\ell_2$ norm \cite{JL84}.

\begin{fact}\label{fact: JL} \cite{JL84} 
  Fix dimension $d>1$ and a target dimension $k<d.$ Let $A$ be the
  projection of $\mathbb{R}^d$ on its $k$-dimensional subspace selected
  uniformly at random (with respecto the Haar measure), and define
  $f: \mathbb{R}^d \rightarrow \mathbb{R}^k$ as $f(x)=\frac {\sqrt k} {\sqrt{d}} Ax.$
  Then, there is a universal constant $C>0,$ such that for
  any $\epsilon$
  in $(0,\frac 12)$, and any $x,\ y \in \mathbb{R}^d,$ we have that
  $$ Prob_A\big[ \frac {||f(x)-f(y)||_2}{||x-y||_2} \in (1 - \epsilon, 1 + \epsilon)\big] \ge 1 -e^{-C\epsilon^2k}.$$
\end{fact}

The following lemma easily follows from Fact \ref{fact: JL}.

\begin{lemma}\label{lem: dimred}
  Fix dimension $d>1$ and a target dimension $k<d.$ Let $A$ be the
  projection of $\mathbb{R}^d$ on its $k$-dimensional subspace selected
  uniformly at random (with respecto the Haar measure), and define
  $f: \mathbb{R}^d \rightarrow \mathbb{R}^k$ as $f(x)=\frac {\sqrt k} {\sqrt{d}} Ax.$
  Then, there is a universal constant $C_1>0,$ such that for
  any $\epsilon$
  in $(0,\frac 12)$, and any $x,\ y \in \{0,1\}^d,$ we have that
  $$ Prob_A\big[ \frac {(||f(x)-f(y)||_2)^2}{\mathrm{ham}(x,y)} \in (1 - \epsilon, 1 + \epsilon)\big] \ge 1 -e^{\ln 2 -C_1\epsilon^2k}.$$
\end{lemma}  
\begin{proof}
  The key observation is that $||x-y||_2=\sqrt {\mathrm{ham}(x,y)}$
  for $x,\ y \in \{0,1\}^d.$
  Since $\{0,1\}^d\subset \mathbb{R}^d,$ for $\epsilon_1 \in (0,\frac 12),$
  we obtain $$ Prob_A\big[ \frac {||f(x)-f(y)||_2)}{\sqrt{\mathrm{ham}}(x,y)} \in (1 - \epsilon_1, 1 + \epsilon_1)\big] \ge 1 -e^{-C{\epsilon_1}^2k}$$
  by Fact \ref{fact: JL}.
  Note that for
  $\epsilon_1=\epsilon /3$, we have
  $1-\epsilon <  (1-\epsilon_1)^2$ and $(1+\epsilon_1)^2 < 1+\epsilon.$
  Hence by setting  $\epsilon_1=\epsilon /3$, squaring
 $ \frac {||f(x)-f(y)||_2)}{\sqrt{\mathrm{ham}}(x,y)} \in (1 - \epsilon_1, 1 + \epsilon_1)$, and setting
$C_1=9C,$ we obtain $$ Prob_A\big[ \frac {(||f(x)-f(y)||_2)^2}{\mathrm{ham}(x,y)} \in (1 - \epsilon, 1 + \epsilon)\big]
\ge (1 -e^{-9C\epsilon^2k} )^2\ge 1 -2e^{-9C\epsilon^2k}\ge e^{\ln 2 -C_1\epsilon^2k}.$$
\qed
\end{proof}}
\end{document}